\DeclareFontFamily{U}{mathx}{\hyphenchar\font45}
\DeclareFontShape{U}{mathx}{m}{n}{
      <5> <6> <7> <8> <9> <10>
      <10.95> <12> <14.4> <17.28> <20.74> <24.88>
      mathx10
      }{}
\DeclareSymbolFont{mathx}{U}{mathx}{m}{n}
\DeclareMathSymbol{\bigplus}{1}{mathx}{"90}
\DeclareMathSymbol{\bigtimes}{1}{mathx}{"91}
\DeclareRobustCommand{\stirling}{\genfrac[]{0pt}{}}
\newcommand*\circled[1]{\tikz[baseline=(char.base)]{
            \node[shape=circle,draw,inner sep=0.5pt] (char) {#1};}}
\DeclareMathOperator{\zigzag}{\circled{{\rm z}}}
\newtheorem{definition}{Definition}
\newtheorem{lemma}{Lemma}
\newtheorem{fact}{Fact}
\newtheorem{theorem}{Theorem}
\newcommand{\C}{\mathbb{C}}
\newcommand{\I}{\mathbb{I}}
\newcommand{\U}{\mathbb{U}}
\DeclareMathOperator*{\E}{{\rm {\bf E}}\,}
\DeclareMathOperator*{\Tr}{{\rm Tr}\;}
\DeclareMathOperator*{\spanning}{{\rm span}\;}
\newcommand{\zero}{\leavevmode\hbox{\small l\kern-3.5pt\normalsize0}}
\newcommand{\one}{\leavevmode\hbox{\small1\kern-3.8pt\normalsize1}}
\newcommand{\norm}[1]{\left\|{ #1 }\right\|}
\newcommand{\elltwo}[1]{\norm{#1}_2}
\newcommand{\ellinfty}[1]{\norm{#1}_\infty}
\newcommand{\inprod}[2]{\langle {#1}, #2 \rangle}
\newcommand{\polylog}{\mathrm{polylog}}
\newcommand{\poly}{\mathrm{poly}}
\newcommand{\cG}{\mathcal{G}}
\newcommand{\cH}{\mathcal{H}}
\newcommand{\cI}{\mathcal{I}}
\newcommand{\vcH}{\vec{\cH}}
\newcommand{\dU}{\dot{U}}
\newcommand{\dcG}{\dot{\cG}}
\newcommand{\ddcG}{\ddot{\cG}}
\title{{\bf Efficient quantum tensor product expanders and
unitary $t$-designs via the zigzag product
}}
\author{
Pranab Sen\thanks{
School of Technology and Computer Science, Tata Institute of Fundamental
Research, Mumbai 400005, India.
Email: {\sf pgdsen@tcs.tifr.res.in}
}
}
\date{}
\begin{document}
\maketitle

\begin{abstract}
A classical $t$-tensor product expander is a natural way of formalising
correlated walks of $t$ particles on a regular expander graph.
A quantum $t$-tensor product expander
is a completely positive trace preserving map that is a straightforward
analogue of a classical $t$-tensor product expander. Interest
in these maps arises from the fact that iterating a quantum $t$-tensor 
product expander gives us a unitary $t$-design, which has many applications
to quantum computation and information. We show that the zigzag
product of a high dimensional quantum expander (i.e. $t = 1$) of 
moderate degree
with a moderate dimensional quantum $t$-tensor product expander of
low degree gives us a high dimensional quantum $t$-tensor product expander 
of low degree. 
Previously such a result was known 
only for quantum expanders i.e. $t = 1$.

Using the zigzag product we give efficient constructions of 
quantum $t$-tensor product expanders in dimension $D$ where 
$t = \polylog(D)$. 
We then show
how replacing the zigzag product by the generalised zigzag product leads
to almost-Ramanujan quantum tensor product expanders i.e. having
near-optimal tradeoff between the degree $s$ and the second largest
singular value $s^{-\frac{1}{2} + O(\frac{1}{\sqrt{\log s}})}$. Both the 
products give better tradeoffs
between the degree and second largest singular value than what was
previously known for efficient constructions.
\end{abstract}

\section{Introduction}
Expander graphs are graphs of small degree and high connectivity, and
have had many applications to combinatorics and computer science
(see e.g. the survey paper \cite{HLW} and the references therein). One
way of formalising the {\em expansion property} of an infinite family
of directed graphs with out-degree and in-degree $d$ is via the 
requirement that the second 
largest singular value in absolute value of the normalised adjacency matrix
be at most $1 - \Omega(1)$. In this paper, we will use this notion
of an {\em algebraic expander}. An equivalent way to state the algebraic
definition is as follows: A $d$-regular expander on $D$ vertices with
second singular value $\lambda$ is
a linear transformation $\cG: \C^D \rightarrow \C^D$ that can be
expressed as
$
\cG(v) = \frac{1}{d} \sum_{i=1}^d P_i(v),
$
for any $v \in \C^D$, where
$\{P_i\}_{i=1}^d$ are $D \times D$ permutation matrices, such that
\[
\elltwo{\cG(v) - \frac{\vec{1}^T v}{D} \vec{1}} \leq \lambda \elltwo{v},
\]
$\vec{1}$ being the all ones $D$-tuple.
The above condition is equivalent to saying that
$
\ellinfty{\cG - \cI} \leq \lambda,
$
where $\ellinfty{M}$ is the largest singular value of $M$
aka Schatten $\ell_\infty$-norm of $M$ aka spectral norm of $M$, 
and $\cI$ is the `ideal' linear map defined by
$
\cI := \frac{1}{D!} \sum_{P \in S_D} P,
$
the average being over all $D \times D$ permutation matrices $P$.

The algebraic definition is directly used for many applications of
expander graphs. Ben-Aroya, Schwartz and Ta-Shma~\cite{BST} generalised 
the algebraic definition in a natural fashion to the quantum setting.
A $D$-dimensional quantum expander $\cG$ is a so-called 
completely positive trace
preserving superoperator, called CPTP map for short, which maps
$D \times D$ matrices to $D \times D$ matrices. If the input $D \times
D$ matrix is a density matrix, i.e. a Hermitian positive semidefinite
matrix with trace one, then the output is a density matrix also. The 
quantum expander $\cG$ is said to have degree $d$ if it can be
expressed as
$
\cG(M) = \frac{1}{d} \sum_{i=1}^d U_i M U_i^\dag,
$
where $M$ is the $D \times D$ input matrix and
$\{U_i\}_{i=1}^d$ are $D \times D$ unitary matrices.
Since matrices are the quantum analogue of vectors, and density matrices 
are the quantum analogue of probability 
distributions, a $D$-dimensional quantum expander of degree $d$ can 
be thought of a quantum analogue of a $d$-regular classical
expander on $D$ vertices. The quantum expander $\cG$ is said to have
second singular value at most $\lambda$ if 
\[
\elltwo{\cG(M) - \frac{\Tr [M]}{D} \one} \leq \lambda \elltwo{M},
\]
for all matrices $M \in \C^{D \times D}$, where $\one$ denotes the 
$D \times D$ identity matrix and $\elltwo{M}$ denotes
the Frobenius norm or the Schatten $\ell_2$-norm of $M$ which is nothing
but the $\ell_2$-norm of the $D^2$-tuple obtained by rearranging the
entries of matrix $M$. 
Equivalently, we can say that
$
\ellinfty{\cG - \cI} \leq \lambda,
$
where $\cI$ is the superoperator whose action on a $D \times D$ matrix
$M$ is defined by
$
\cI(M) := \int_{U \in \U(D)} U M U^\dag,
$
the integration being over the Haar probability measure on the 
unitary group $\U(D)$.

The concept of random walk of a single particle on a $d$-regular 
expander graph was naturally extended by Hastings and Harrow~\cite{HH}
to that of a {\em correlated} random walk of
$t$ particles. This walk can be informally 
described as follows: Toss a fair $d$-faced coin, if it comes up with
$i$ for some $1 \leq i \leq d$, then each of the $t$ particles takes
the $i$th outgoing edge from their respective vertices. Algebraically,
a classical $t$-tensor product expander ($t$-cTPE) on $D$ vertices of
degree $d$ can be defined as
a linear transformation 
$\cG: (\C^D)^{\otimes t} \rightarrow (\C^D)^{\otimes t}$ that can be
expressed as
$
\cG(v) = \frac{1}{d} \sum_{i=1}^d (P_i)^{\otimes t}(v),
$
for any $v \in (\C^D)^{\otimes t}$, where
$\{P_i\}_{i=1}^d$ are $D \times D$ permutation matrices. The 
$t$-tensor product expander is said to have second singular value
$\lambda$ if
$
\ellinfty{\cG - \cI} \leq \lambda,
$
where $\cI$ is the `ideal' linear map defined by
$
\cI := \frac{1}{D!} \sum_{P \in S_D} P^{\otimes t},
$
the average being over all $D \times D$ permutation matrices $P$.
 
In a similar fashion, Hastings and Harrow~\cite{HH} extended the notion
of a quantum expander to that of a quantum $t$-tensor product expander
($t$-qTPE).
A quantum $t$-tensor product expander ($t$-qTPE) in dimension $D$ of
degree $d$ can be defined as a CPTP superoperator
$
\cG: (\C^{D \times D})^{\otimes t} \rightarrow 
(\C^{D \times D})^{\otimes t}
$ 
that can be expressed as
$
\cG(M) = 
\frac{1}{d} \sum_{i=1}^d (U_i)^{\otimes t} M (U_i^\dag)^{\otimes t},
$
for any matrix $M \in (\C^{D \times D})^{\otimes t}$, where
$\{U_i\}_{i=1}^d$ are $D \times D$ unitary matrices. The 
qTPE is said to have second singular value $\lambda$ if
$
\ellinfty{\cG - \cI} \leq \lambda,
$
where $\cI$ is the `ideal' CPTP superoperator defined by its action on
a matrix $M$ by
$
\cI(M) := \int_{U \in \U(D)} U^{\otimes t} M (U^\dag)^{\otimes t},
$
the integration being over the Haar probability measure on the unitary
group $\U(D)$. A quantum expander defined above is thus a $1$-qTPE.

Quantum expanders have already found several applications in quantum
algorithms and complexity e.g.
\cite{AS, H, BST}. Quantum tensor product expanders are also of great
interest primarily because sequentially iterating a $t$-qTPE gives us
an approximate unitary $t$-design. Unitary $t$-designs have many
applications to quantum computation and information e.g. 
\cite{DCEL, AE, L}.
Many protocols in quantum information theory use Haar random unitaries
e.g. \cite{ADHW}.
Sometimes, these Haar random unitaries can be replaced by approximate
unitary $t$-designs which require less random bits to describe
e.g. \cite{SDTR}. Thus, unitary $t$-designs are a useful notion
of pseudo-random unitaries. More precisely, they serve as the quantum
analogue \cite{L} of $t$-wise independent random variables used in 
classical derandomisation applications (see e.g. \cite{BR}).
Moreover, Hastings and Harrow \cite{HH} have shown that $D^3$-qTPEs in 
dimension $D$ allow one to obtain better approximations to an arbitrary
$D \times D$ unitary matrix than what the Solovay-Kitaev theorem 
provides.

Obtaining efficient constructions of $t$-qTPEs is thus an important
problem. By an efficient construction, we mean that the qTPE 
superoperator in dimension $D$
can be realised by a quantum algorithm running in $\polylog(D)$ time.
An efficient construction of a $t$-qTPE will automatically give us
an efficient construction of an approximate unitary $t$-design simply
by sequential iteration. However, the converse is not known to be true.

Ben-Aroya, Schwartz and Ta-Shma~\cite{BST} showed 
that the zigzag product of expander graphs first defined by Reingold,
Vadhan and Wigderson~\cite{RVW} can be appropriately extended to the
quantum setting to give an efficient construction of a $1$-qTPE in
arbitrarily large dimension with constant degree and constant singular
value gap. This leaves open the case when $t > 1$. Efficient constructions
of approximate unitary $t$-designs for $t = \polylog(D)$ were known before
this work \cite{HL, BHH}, but viewed as expanders, they have 
$\polylog(D)$ degree. Moreover, the tradeoff between their degree
and singular value gap is far from optimal. This is unsatisfactory for 
some applications where we want constant degree and constant 
singular value gap. One such application is a quantum protocol for private
information retrieval via the quantum Johnson Lindenstrauss transform
\cite{S}. 

Hastings and Harrow \cite{HH} posed an open question asking whether the 
quantum expander constructions of
Ben-Aroya, Schwartz and Ta-Shma \cite{BST} can lead to quantum
tensor product expanders also. In this work, we answer their question
in the affirmative by showing that the zigzag product of a high 
dimensional quantum
expander i.e. $1$-qTPE with a low dimensional quantum tensor product 
expander 
i.e. $t$-qTPE gives rise to a high dimensional $t$-qTPE. Combined
with Hastings and Harrow's \cite{HH} existential result that random 
unitaries form a $t$-qTPE, we obtain the first efficient construction of
constant degree, constant singular value gap
$t$-qTPEs in arbitrarily large dimension. Our method achieves
the best known (in fact third power) tradeoff between degree and
singular value gap amongst efficient constructions of approximate
unitary $t$-designs. 

Proving that the zigzag product gives a $t$-qTPE is similar
to Ben-Aroya, Schwartz and Ta-Shma's proof \cite{BST} that the 
zigzag product 
gives a $1$-qTPE. However, we have to take
care of some technical geometric issues involving the eigenspace of
the qTPE superoperator for eigenvalue one. Unlike the $t=1$ setting,
this eigenspace has dimension larger than one which introduces
several complications. To address these complications, we define a
subspace that is `close' to the eigenspace in a certain precise sense.
The proof of closeness uses some combinatorial properties of permutations.
We then prove our main result by `switching back and forth' between these 
two spaces in a slightly tricky fashion. The formal proof can be
found in Section~\ref{sec:qzigzag}.

We then go further and achieve a near-optimal
tradefoff between degree $s$ and second largest singular 
value $s^{-\frac{1}{2} + O(\sqrt{\frac{1}{\log s}})}$ of a qTPE. 
The optimal tradeoff of
degree $s$ versus second largest singular value of $2s^{-1/2}$ is
known as the Ramanujan 
bound. To approach the Ramanujan bound, we need
to use the generalised zigzag product of graphs defined by
Ben-Aroya and Ta-Shma~\cite{BT}.
We can extend the generalised zigzag product to qTPEs
in a natural fashion. To show that the generalised zigzag product
gives an almost-Ramanujan qTPE we follow the outline of Ben-Aroya and
Ta-Shma's proof, combined with repeated applications of the 
`back and forth' technique explained
above. We also need to exploit a version of the Johnson-Lindenstrauss
property for an independent sequence of Haar random unitaries
(see e.g. \cite{S}).
This technical property can be viewed as the quantum generalisation
of the so-called {\em $\epsilon$-good} 
property of a sequence of independent
uniformly random permutations analysed in Ben-Aroya and Ta-Shma's paper
\cite[Lemma~19]{BT}. Though the quantum $\epsilon$-good property is
fundamentally different from the classical version in a certain sense,
nevertheless it allows us to prove that the generalised zigzag product
gives an almost-Ramanujan qTPE akin to the classical expander setting.

\section{Preliminaries}
\label{sec:preliminaries}
In this paper all vector spaces are over the field of complex numbers
$\C$, are finite dimensional and equipped
with inner products. Often we will be dealing with vector spaces whose
elements are matrices i.e. the elements are themselves linear operators
from a Hilbert space $A$ to a Hilbert space $B$. Such vector spaces
will be equipped with the {\em Hilbert-Schmidt inner product}
$\inprod{M}{N} := \Tr [M^\dagger N]$. This inner product is nothing
but the usual dot product of `long' vectors obtained by rearranging
the entries of matrices as tuples. We will 
also consider linear maps between
vector spaces that are themselves spaces of matrices. We will call
such linear maps as {\em superoperators}.

For vector space $\C^d$, let $e_i$, $i \in [d]$ denote the $i$th 
standard basis vector which consists of a one in position $i$ and zeroes
everywhere else.
Let $\elltwo{v} := \sqrt{\sum_{i=1}^d |v_i|^2}$ denote the 
$\ell_2$-norm of a vector $v \in \C^{d}$. The dot product of
two vectors is given by $\inprod{v}{w} := \sum_{i=1}^d v_i^* w_i$.
For a matrix $M \in \C^{d} \times \C^{d}$, let 
$\norm{M}_p$ denote its Schatten $p$-norm i.e. the $\ell_p$-norm of
the vector of singular values of $M$. We will only be interested in
Schatten $p$-norms with $p = 1, 2, \infty$. The Schatten $2$-norm
of $M$ turns out to be nothing but the $\ell_2$-norm of the long 
vector obtained by
rearranging the entries of the matrix $M$. In other words, it is the 
norm arising from the Hilbert-Schmidt inner product.

We will denote the vector space of $d \times d$ matrices, or equivalently
the vector space of linear maps $\C^d \rightarrow \C^d$, by 
$\C^{d \times d}$. In several places we will interchangeably use
$\C^{d^t}$ in place of the tensor product $(\C^d)^{\otimes t}$. In line
with this abuse of notation, we will sometimes use
$\C^{(Dd)^t \times (Dd)^t}$ to denote the vector space of linear maps
$
(\C^D \otimes \C^d)^{\otimes t} \rightarrow
(\C^D \otimes \C^d)^{\otimes t}.
$

For a Hilbert space $V$ and subspace $W \leq V$, we define the 
orthogonal complement of $W$ in $V$, denoted by $V \setminus W$, to
be the span of all vectors in $V$ orthogonal to $W$. When the ambient
space $V$ is clear from the context, we shall denote 
$V \setminus W$ by the shorter notation $W^\perp$.

\subsection{On permutations}
Let $t$, $d$ be positive integers. The following
lemma is easy to prove.
\begin{lemma}
\label{lem:perm}
Suppose $d \geq t$.
Define the falling factorial $(d)_t := d (d-1) \cdots (d-t+1)$. Then
$1 - \frac{(d)_t}{d!} \leq \frac{t(t-1)}{2d}$.
\end{lemma}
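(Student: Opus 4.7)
The statement as typeset reads $1 - (d)_t/d! \leq t(t-1)/(2d)$, but since $(d)_t/d! = 1/(d-t)!$ the left side already equals $1/2$ at $d=3, t=1$ while the right side is $0$, so the literal inequality is not provable. The denominator is evidently a typo for $d^t$: this is the classical birthday-type quantity, it is consistent with the hypothesis $d \geq t$, and it is the expression that arises when one compares $t$-tuples of independent Haar unitaries to the ideal qTPE superoperator later in the paper. I will therefore plan a proof of $1 - (d)_t/d^t \leq t(t-1)/(2d)$, which I take to be the intended statement.

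My first step is to unfold the falling factorial directly, writing
\[
\frac{(d)_t}{d^t} \;=\; \prod_{i=0}^{t-1}\Bigl(1 - \frac{i}{d}\Bigr),
\]
which is immediate from the definition $(d)_t = d(d-1)\cdots(d-t+1)$ and the hypothesis $d \geq t$ (so that each factor lies in $(0,1]$).

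My second step is the elementary Weierstrass-style product inequality: for any $x_1,\ldots,x_n \in [0,1]$,
\[
\prod_{i=1}^{n}(1-x_i) \;\geq\; 1 - \sum_{i=1}^{n} x_i.
\]
I would prove this by a one-line induction on $n$: the base case $n=1$ is trivial, and for the inductive step multiply the hypothesis for $n-1$ factors by the nonnegative quantity $(1-x_n)$ and then discard the nonnegative cross term $x_n \sum_{i<n} x_i$. Applying this with $x_i := (i-1)/d$ for $i=1,\ldots,t$ yields
\[
\frac{(d)_t}{d^t} \;\geq\; 1 - \sum_{i=0}^{t-1}\frac{i}{d} \;=\; 1 - \frac{t(t-1)}{2d},
\]
and rearranging gives the claim.

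There is no substantive obstacle in the argument itself; the only judgement call is the reading of the denominator, which I have flagged above. If the author intended some nonstandard meaning of $d!$ in the statement, the proof would need to be reconsidered, but under any natural reading the step above is all that is required.
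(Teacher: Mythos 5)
Your reading of the statement and your proof are correct, and they coincide with the paper's own argument: the paper's proof begins with the identity $1-\frac{(d)_t}{d^t}=1-\bigl(1-\frac{1}{d}\bigr)\bigl(1-\frac{2}{d}\bigr)\cdots\bigl(1-\frac{t-1}{d}\bigr)$, which only holds with the denominator read as $d^t$, confirming that the ``$d!$'' in the statement is a typo (and this is also the form used later, e.g.\ in bounding $\elltwo{(\alpha'_2)_\sigma}^2$). Both you and the paper then conclude via the elementary inequality $1-\prod_i(1-x_i)\le\sum_i x_i$ and the sum $\frac{1}{d}+\cdots+\frac{t-1}{d}=\frac{t(t-1)}{2d}$; the only difference is that you spell out the inductive proof of that inequality, which the paper takes as standard.
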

\begin{proof}
\[
1 - \frac{(d)_t}{d!} 
    =   
1 - 
(1-\frac{1}{d})(1-\frac{2}{d}) \cdots (1-\frac{t-1}{d}) 
  \leq  
\frac{1}{d}+\frac{2}{d} + \cdots + \frac{t-1}{d} 
  =  
\frac{t(t-1)}{2d}.
\]
\end{proof}

The number of permutations of $[t]$ with $k$ cycles, called
{\em (unsigned) Stirling number of the first kind}, is denoted by
$\stirling{t}{k}$. The unsigned Stirling numbers of the first 
kind satisfy the recurrence equation
$\stirling{t+1}{k} = t \stirling{t}{k} + \stirling{t}{k-1}$. 
From this, we can show by induction for $1 \leq k \leq t-1$ 
that $\stirling{t}{t-k} \leq {{t \choose 2} \choose k}$. 
This upper bound on 
$\stirling{t}{k}$ is almost tight by a result of
Arratia and DeSalvo~\cite[Theorem~3.2]{AD}.
Using this upper bound, we prove the following lemma.
\begin{lemma}
\label{lem:stirling}
Let $d$ be a positive integer larger than $t^2$. Let $M$ be
a $t! \times t!$-matrix whose rows and columns are
indexed by the permutations of $[t]$, defined as
follows:
\[
M_{\sigma \sigma'} :=
\begin{array}{ll}
d^{(t_{\sigma^{-1} \sigma'} - t)} & \sigma \neq \sigma' \\
0 & \sigma = \sigma',
\end{array}
\]
where $t_{\sigma^{-1} \sigma'}$ is the number of cycles in the
permutation $\sigma^{-1} \sigma'$. Then, $M$ is a real symmetric matrix
and $\ellinfty{M} \leq \frac{t(t-1)}{d}$. Moreover,
the eigenvalues of the real symmetric matrix $\one + M$ lie between
$1 - \frac{t(t-1)}{d}$ and $1 + \frac{t(t-1)}{d}$.
\end{lemma}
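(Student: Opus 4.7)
The plan is to exploit the convolution-like structure of $M$: the entries depend only on $\sigma^{-1}\sigma'$, so $M$ is essentially a Cayley-type matrix on the symmetric group $S_t$ with generating function $f(\pi) := d^{t_\pi - t}$ for $\pi \neq \mathrm{id}$ and $f(\mathrm{id}) := 0$. This structure immediately delivers symmetry and constant row sums, so the spectral bound reduces to a single combinatorial estimate.

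First I would dispose of the structural claims. Reality is immediate since every nonzero entry is a power of the positive real $d$. For symmetry, note that $(\sigma^{-1}\sigma')^{-1} = \sigma'^{-1}\sigma$ and that a permutation and its inverse share their cycle structure, so $t_{\sigma^{-1}\sigma'} = t_{\sigma'^{-1}\sigma}$ and hence $M_{\sigma\sigma'} = M_{\sigma'\sigma}$.

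Next I would bound $\ellinfty{M}$. Because each row of $M$ lists the same multiset of values $\{f(\pi) : \pi \in S_t\}$, the row sums are all equal to
\[
S := \sum_{\pi \in S_t,\,\pi \neq \mathrm{id}} d^{t_\pi - t} = \sum_{k=1}^{t-1} \stirling{t}{t-k} d^{-k},
\]
so $\vec{1}$ is an eigenvector of $M$ with eigenvalue $S$. Since $M$ is real symmetric and entrywise non-negative, a standard Perron--Frobenius argument forces $\ellinfty{M} = S$: the spectral radius is at least $S$ because $S$ is already an eigenvalue, and at most $S$ because $\rho(M) \leq \max_\sigma \sum_{\sigma'} M_{\sigma\sigma'} = S$.

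The only step with any real content, and the one I would flag as the main (though still routine) obstacle, is showing $S \leq t(t-1)/d$. Invoking the bound $\stirling{t}{t-k} \leq \binom{\binom{t}{2}}{k}$ stated just before the lemma and then $\binom{a}{k} \leq a^k/k!$ with $a := t(t-1)/2$ converts the sum into a truncated exponential series:
\[
S \leq \sum_{k \geq 1} \frac{(a/d)^k}{k!} = e^{a/d} - 1.
\]
The hypothesis $d > t^2$ gives $a/d < 1/2$, at which point an elementary inequality such as $e^x - 1 \leq (3/2)x$ for $x \in [0,1/2]$ closes the estimate: $S \leq (3/2)(a/d) \leq 2a/d = t(t-1)/d$. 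The eigenvalue statement for $\one + M$ is then immediate, since the spectrum of $\one + M$ is $\{1 + \lambda : \lambda \in \mathrm{spec}(M)\}$ and every such $\lambda$ satisfies $|\lambda| \leq \ellinfty{M} \leq t(t-1)/d$.
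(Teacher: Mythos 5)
Your proposal is correct and follows essentially the same route as the paper: both bound $\ellinfty{M}$ by the common row sum (you via Perron--Frobenius on a non-negative symmetric matrix with constant row sums, the paper via Gershgorin plus the permutation symmetry — equivalent here since the diagonal is zero) and then control that row sum with the same Stirling-number inequality $\stirling{t}{t-k} \leq \binom{\binom{t}{2}}{k}$. The only cosmetic difference is the final summation: you pass through $\binom{a}{k}\le a^k/k!$ and the bound $e^x-1\le \tfrac32 x$ on $[0,\tfrac12]$, whereas the paper dominates the sum by a geometric series with ratio $\tfrac12$; both yield $t(t-1)/d$.
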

\begin{proof}
Observe that since $M$ is a real symmetric matrix, $\ellinfty{M}$ is
nothing but the largest eigenvalue of $M$ in absolute value. By
Gershgorin's theorem and the permutation symmetry of $M$, we conclude that
\[
\ellinfty{M} 
  \leq   
\sum_{\sigma \neq ()} d^{(t_\sigma - t)} 
  =  
\sum_{k=1}^{t-1} \stirling{t}{t-k} d^{-k} 
  \leq  
\sum_{k=1}^{t-1} {{t \choose 2} \choose k} d^{-k} 
  \leq  
\frac{{t \choose 2}}{d} \sum_{k=0}^{\infty} 2^{-k} 
  =
\frac{t(t-1)}{d},
\]
where $()$ denotes the identity permutation.
The claim about eigenvalues of $\one + M$ now follows easily.
This completes the proof.
\end{proof}

We will also need the following lemma.
\begin{lemma}
\label{lem:fixedpoints}
Fix $0 < \epsilon < \frac{1}{2t}$.
Let $N$ be
a $t! \times t!$-matrix whose rows and columns are
indexed by the permutations of $[t]$, defined as
follows:
\[
N_{\sigma \sigma'} :=
\begin{array}{ll}
\epsilon^{t - f_{\sigma^{-1} \sigma'}} & \sigma \neq \sigma' \\
0 & \sigma = \sigma',
\end{array}
\]
where $f_{\sigma^{-1} \sigma'}$ is the number of fixed points in the
permutation $\sigma^{-1} \sigma'$. Then, $N$ is a real symmetric matrix
and $\ellinfty{N} \leq 2 \epsilon^2 t^2 $.
\end{lemma}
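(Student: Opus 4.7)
The plan is to mirror the proof of Lemma \ref{lem:stirling}: first observe symmetry, then apply Gershgorin, then bound the row sum by counting permutations according to the relevant statistic (here the number of fixed points), and finally sum a geometric series that converges because $t\epsilon < 1/2$.

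First I would note that $N$ is real and symmetric because $f_{\sigma^{-1}\sigma'} = f_{(\sigma^{-1}\sigma')^{-1}} = f_{(\sigma')^{-1}\sigma}$: the number of fixed points of a permutation equals that of its inverse. Hence $\ellinfty{N}$ equals the largest absolute eigenvalue. Next, observe that $N$ enjoys a permutation symmetry: left-multiplication by any fixed $\tau \in S_t$ is a bijection on rows (and columns) that preserves the value $f_{\sigma^{-1}\sigma'}$, so all row sums are equal. By Gershgorin,
\[
\ellinfty{N} \;\leq\; \sum_{\sigma' \neq ()} \epsilon^{t - f_{\sigma'}},
\]
where I have chosen $\sigma = ()$ so that $\sigma^{-1}\sigma' = \sigma'$ itself.

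Grouping the non-identity permutations by number of fixed points $k$, and using the standard count $\binom{t}{k} D_{t-k}$ where $D_m$ is the number of derangements of $[m]$, the bound becomes
\[
\sum_{k=0}^{t-2} \binom{t}{k} D_{t-k}\, \epsilon^{\,t-k} \;=\; \sum_{j=2}^{t} \binom{t}{j} D_j\, \epsilon^j,
\]
after the substitution $j = t-k$; recall that $k = t-1$ is impossible since fixing $t-1$ elements forces the last to be fixed too. I would then use the crude bound $D_j \leq j!$, which gives $\binom{t}{j} D_j \leq t!/(t-j)! \leq t^j$. Therefore
\[
\ellinfty{N} \;\leq\; \sum_{j=2}^{t} (t\epsilon)^j \;\leq\; (t\epsilon)^2 \sum_{j=0}^{\infty} (t\epsilon)^j \;=\; \frac{(t\epsilon)^2}{1-t\epsilon} \;\leq\; 2 t^2 \epsilon^2,
\]
since the assumption $\epsilon < 1/(2t)$ yields $(1-t\epsilon)^{-1} \leq 2$.

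There is no serious obstacle; the only mildly delicate point is remembering that no permutation has exactly $t-1$ fixed points (which truncates the sum and explains the factor $\epsilon^2$ rather than $\epsilon$), and being careful to use the bound $D_j \leq j!$ together with $\binom{t}{j}(t-j)! = t!/(t-j)! \cdot (\text{trivial factor})$, which collapses to $t^j$ and makes the geometric series telescope cleanly.
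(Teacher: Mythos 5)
Your proof is correct and follows essentially the same route as the paper's: symmetry plus Gershgorin, grouping non-identity permutations by their number of fixed points, bounding $\binom{t}{j}D_j \le t^j$, and summing the geometric series starting at $j=2$ (since no permutation has exactly $t-1$ fixed points). If anything, your version is slightly more careful than the paper's, which starts its sum at $k=1$ and thereby appears to omit the derangement ($k=0$) term, though this does not affect the final bound.
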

\begin{proof}
Observe that since $N$ is a real symmetric matrix, $\ellinfty{N}$ is
nothing but the largest eigenvalue of $N$ in absolute value. By
Gershgorin's theorem and the permutation symmetry of $N$, we conclude that
\[
\ellinfty{N} 
  \leq   
\sum_{\sigma \neq ()} \epsilon^{t - f_\sigma} 
    =   
\sum_{k=1}^{t-2} \epsilon^{t-k} {t \choose k} (t-k)! 
(1 - \frac{1}{1!} + \frac{1}{2!} - \cdots + \frac{(-1)^t}{t!})
  \leq  
\sum_{k=1}^{t-2} \epsilon^{t-k} t^{t-k}
  \leq
2 (\epsilon t)^2,
\]
where $()$ denotes the identity permutation.
This completes the proof.
\end{proof}

\subsection{On Haar random unitaries}
\label{subsec:epsgood}
In this subsection, we single out a Johnson-Lindenstrauss type
of property of Haar random unitaries
that will be used in the proof that the generalised zigzag product gives
qTPEs.
Let $0 < \epsilon < 1$.  Let $V$, $V'$ be vector spaces of
dimensions $d$, $d'$.
Let $x$ be a unit length vector in
$V \otimes V'$.  Let $U$ be a unitary matrix on $V \otimes V'$.
Let $v$ be a computational basis vector for $V$.
The unitary $U$ is said to be {\em $\epsilon$-good} for $x$ given
$v_1$ if the probability of observing the outcome $v_1$ when the
system $V$ of the state $U x$ is measured in its computational basis
is $\frac{1 \pm 3 \epsilon}{d}$.
The unitary $U$ is said to be $\epsilon$-good for $x$ if for all
computational basis vectors $v_1 \in V$, $U$ is
$\epsilon$-good for $x$ given $v_1$.
The notation $U x | v_1$ denotes the normalised vector in $V \otimes V'$
obtained by computing $U x$, measuring only $V$ in its computational 
basis, and observing the result $v_1$.
Suppose $X$ is a set of unit length orthogonal vectors in 
$V \otimes V'$. 
The unitary $U$ is said to be $\epsilon$-good for $X$ if it is
$\epsilon$-good for each $x \in X$, and for every computational basis
vector $v_1 \in V$ and $x, x' \in X$, $\inprod{x}{x'} = 0$,
$
|\inprod{U x | v_1}{U x' | v_1}| \leq 8 \epsilon.
$

Let $(U_k, \ldots, U_1)$ be a $k$-tuple of unitaries on $V \otimes V'$.
Let $x_0 := e_{i_0} \otimes e'_{j_0}$ be a computational basis vector of 
$V \otimes V'$. Let $(e_{i_k}, \ldots, e_{i_1})$ be a $k$-tuple of
computational basis vectors of $V$. 
By induction on $j$, we define $x_j := U x_{j-1} | e_{i_j}$. 
We say that $U_j$ is
$\epsilon$-good given $(U_{j-1}, \ldots, U_1)$, 
$(e_{i_{j-1}}, \ldots, e_{i_1})$, $x_0$ if $U_j$ is $\epsilon$-good for
$x_{j-1}$.
The unitary $U_1$ is said to be $\epsilon$-good if it is $\epsilon$-good
for the set of all computational basis vectors $x_0$ of $V \otimes V'$.
We declare $U_j$ to be
$\epsilon$-good given $(U_{j-1}, \ldots, U_1)$ if $U_j$ is
$\epsilon$-good given $(U_{j-1}, \ldots, U_1)$, 
$(e_{i_{j-1}}, \ldots, e_{i_1})$, $x_0$ for the set of
all possible $(j-1)$-tuples of
computational basis vectors $(e_{i_{j-1}}, \ldots, e_{i_1})$ of $V$ and
all possible computational basis vectors $x_0$ of $V \otimes V'$. 
By induction on $j$,
we say that the $j$-tuple $(U_j, \ldots, U_1)$ is 
$\epsilon$-good if $(U_{j-1}, \ldots, U_1)$ is $\epsilon$-good and
$U_j$ is $\epsilon$-good given $(U_{j-1}, \ldots, U_1)$.

Let $\cH_j$, $1 \leq j \leq k$ be sets of unitaries on $V \otimes V'$,
each set $\cH_j$ being of size $s$. We use the shorthand
$\vcH$ to denote the $k$-tuple of sets $(\cH_k, \ldots, \cH_1)$. 
Then, $\vcH$ is said to be $\epsilon$-good if all $k$-tuples of
unitaries $(U_k, \ldots, U_1)$, $U_j \in \cH_j$ are $\epsilon$-good.

The following standard result is a version of the Johnson-Lindenstrauss
lemma for Haar random unitaries. It can be proved using 
Fact~2 and the method of Theorem~1 in \cite{S}.
\begin{fact}
\label{fact:epsgood}
Let $k$, $s$ be positive integers. Independently choose $sk$ Haar
random unitaries on $V \otimes V'$, and group them into $k$ sets
$\cH_j$, $1 \leq j \leq k$, each set $\cH_j$ being of size $s$. 
Then, the probability of $\vcH$ not being $\epsilon$-good is at most
$4 (s^{k+1} d^{k+2} d')^2 \exp(-2^{-4}\epsilon^2 d')$.
\end{fact}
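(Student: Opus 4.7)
The plan is to reduce the claim to a Haar concentration statement for a single unitary acting on a single fixed vector, and then take a careful union bound over all the derived vectors $x_{j-1}$ that arise in the $k$-tuple definition of $\epsilon$-goodness. Concretely, I would first record the following single-event inequalities, both instances of L\'evy's lemma on $\U(V \otimes V')$ and essentially the content of Fact~2 in \cite{S}: for Haar random $U$ on $V \otimes V'$, any fixed unit vector $x \in V \otimes V'$, and any computational basis vector $v_1 \in V$, the random variable $\elltwo{(\ketbra{v_1} \otimes \one) U x}^2$ has expectation $1/d$ and deviates from $1/d$ by more than $3\epsilon/d$ with probability at most $2 \exp(-2^{-3} \epsilon^2 d')$; and for any orthogonal unit vectors $x \perp x'$, the magnitude of the normalised inner product $\inprod{U x | v_1}{U x' | v_1}$ exceeds $8\epsilon$ with probability at most $2 \exp(-2^{-3} \epsilon^2 d')$. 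Both quantities are Lipschitz functions on $\U(V \otimes V')$, with the concentration rate governed by the dimension $d'$ of the $\ketbra{v_1} \otimes V'$ subspace rather than the full dimension $d d'$.

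From these single-event bounds, for any fixed set $X$ of unit vectors in $V \otimes V'$, a union bound over at most $|X| \cdot d$ measurement events and at most $|X|^2 \cdot d$ orthogonal-pair events shows that the probability that $U$ is not $\epsilon$-good for $X$ is at most $4 |X|^2 d \exp(-2^{-3} \epsilon^2 d')$.

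Next I would set up the union bound over the $k$-tuple structure by induction on the stage $j \in [k]$. Fix a prefix $(U_{j-1}, \ldots, U_1) \in \cH_{j-1} \times \cdots \times \cH_1$, of which there are $s^{j-1}$ choices. As the tuple of measurement outcomes $(e_{i_{j-1}}, \ldots, e_{i_1}) \in V^{j-1}$ and the starting basis vector $x_0 \in V \otimes V'$ range over all possibilities, the associated vector $x_{j-1}$ (deterministically computed from the prefix and this data) takes at most $d^{j-1} \cdot dd' = d^j d'$ distinct values, so the set $X_{j-1}$ for which $U_j$ must be $\epsilon$-good has size at most $d^j d'$. Hence for each of the $s$ choices of $U_j \in \cH_j$, the probability that $U_j$ fails to be $\epsilon$-good for $X_{j-1}$ is at most $4 d^{2j+1} {d'}^2 \exp(-2^{-3} \epsilon^2 d')$. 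Union bounding over $j \in [k]$, over the $s^{j-1}$ prefixes, and over the $s$ choices of $U_j$, the total failure probability for $\vcH$ is at most $\sum_{j=1}^{k} 4 s^{j} d^{2j+1} {d'}^2 \exp(-2^{-3} \epsilon^2 d')$, which after slackening the constant in the exponent to absorb the summation is at most $4 (s^{k+1} d^{k+2} d')^2 \exp(-2^{-4} \epsilon^2 d')$, matching the claim.

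The main obstacle is the first step: verifying that the inner-product statement for orthogonal pairs has concentration rate $d'$ rather than $d d'$, and that dividing by the measurement probabilities (themselves concentrated near $1/d$) only costs an extra constant factor in the tail rather than degrading the exponent. This is precisely where the method of Theorem~1 in \cite{S} enters, via a careful decomposition of $U$ into its action on the $\ketbra{v_1} \otimes V'$ block followed by Lipschitz concentration on that block. Once these ingredients are in hand, everything else is routine bookkeeping over the exponentially many $k$-stage tuples.
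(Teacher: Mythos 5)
Your proposal is correct and follows essentially the route the paper itself indicates: the paper gives no proof of this Fact beyond citing Fact~2 and the method of Theorem~1 of \cite{S} for the single-unitary concentration statements, and your reduction to those statements plus the union bound over the at most $d^{j}d'$ derived vectors $x_{j-1}$, the $s^{j-1}$ prefixes, and the $s$ choices of $U_j$ at each stage is the intended argument. Your count $\sum_{j=1}^{k} 4 s^{j} d^{2j+1} (d')^{2} \exp(-2^{-3}\epsilon^{2} d')$ is indeed dominated by the stated bound $4(s^{k+1}d^{k+2}d')^{2}\exp(-2^{-4}\epsilon^{2}d')$, so the bookkeeping checks out.
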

\paragraph{Remark:}
The above fact can be thought of as a quantum analogue of
Lemma~19 in \cite{BT}, which analysed a similar property about an
independent sequence of uniformly random permutations. However, there 
is an important difference between the $\epsilon$-good properties
analysed in the two statemetns. The closeness to the uniform distribution
in the definition of the classical $\epsilon$-good property arises
from the choice of a uniformly random computational basis vector
of $V \otimes V'$.  In contrast, Fact~\ref{fact:epsgood} 
does not require us to choose a uniformly random computational
basis vector of $V \otimes V'$. In fact, Fact~\ref{fact:epsgood} works 
for any fixed
computational basis vector of $V \otimes V'$, whereas Lemma~19 of
\cite{BT} would give a deterministic result (which is the farthest
possible from the uniform distribution) if one were to take a
fixed computational basis vector of $V \otimes V'$.  This difference arises
from the inherently quantum effect of measurement being probabilistic.
Thus, the two statements are fundamentally different.

\subsection{Quantum tensor product expanders}
We recall the definition of  quantum tensor product 
expanders first defined by Hastings and Harrow~\cite{HH}.
\begin{definition}[{\bf Quantum tensor product expander}]
\label{def:qTPE}
A $(d, s, \lambda, t)$-quantum tensor product expander (qTPE) is a set of 
$d \times d$ unitaries $\{U_i\}_{i=1}^s$ such that
\[
\elltwo{
\E_U^{\mbox{Design}} [U^{\otimes t} M (U^\dagger)^{\otimes t}] -
\E_{U}^{\mbox{Haar}} [U^{\otimes t} M (U^\dagger)^{\otimes t}]
} \leq
\lambda \elltwo{M},
\]
for all linear operators 
$M: (\C^d)^{\otimes t} \rightarrow (\C^d)^{\otimes t}$.
The notation 
\[
\E_U^{\mbox{Design}} [U^{\otimes t} M (U^\dagger)^{\otimes t}] 
:=
s^{-1} \sum_{i=1}^s 
U_i^{\otimes t} M (U_i^\dagger)^{\otimes t} 
\]
denotes the expectation under the choice of a uniformly random unitary
from the design.
The notation 
\[
\E_U^{\mbox{Haar}} [U^{\otimes t} M (U^\dagger)^{\otimes t}] 
:=
\int_{\U(d)} U^{\otimes t} M (U^\dagger)^{\otimes t} \, d\mu
\]
denotes the expectation under 
the choice of a unitary $U$ picked from the Haar measure $\mu$ 
(formalisation of uniform measure) on the group of $d \times d$ unitary
matrices $\U(d)$.
\end{definition}
The quantity $s$ is referred to as the {\em degree} and
$1 - \lambda$ as the {\em singular value gap} of the qTPE. 

In this paper, we shall often consider Hermitian qTPEs. These are qTPEs
where the corresponding linear map on matrices, aka superoperator
$
M \mapsto
\E_U^{\mbox{Design}} [U^{\otimes t} M (U^\dagger)^{\otimes t}] 
$
is Hermitian under the Hilbert-Schmidt inner product of matrices. In fact, 
the Hermitian qTPEs constructed in this paper will be 
{\em explicitly Hermitian} i.e. there will be
a {\em bijective involution `-'} on the index
set $[s]$ satisfying $U_{-i} = U_i^{-1}$ for all $i \in [s]$. 
For a Hermitian qTPE, the singular value gap is also called the 
eigenvalue gap as the singular values of the associated Hermitian 
linear map are nothing but the absolute values of its eigenvalues.

Since the second superoperator in the definition of qTPE
is the Haar average
over a representation of the compact group $\U(d)$, it is 
equal to the orthogonal
projection onto the fixed space $W$ of the representation. 
Let $\sigma$ be a permutation of $[t]$.
Define the matrix
\[
\alpha_\sigma :=
d^{-t/2}
\sum_{(i_1, \ldots, i_t) \in [d]^t}
(e_{i_1} \otimes \cdots \otimes e_{i_t})
(
e_{i_{\sigma(1)}}^\dagger \otimes \cdots \otimes 
e_{i_{\sigma(t)}}^\dagger
).
\]
Thus $\alpha_\sigma$ is the operator obtained by first applying the 
Schatten $\ell_2$-normalised identity matrix in $\C^{d^t}$ followed
by shuffling the registers according to the permutation $\sigma$
i.e. register number $a$ goes to register number $\sigma(a)$. Let this
`shuffling' operator be denoted by $\Sigma^{(\C^d)^{\otimes t}}$.
Note that $\Sigma^{(\C^d)^{\otimes t}}$ is a unitary matrix.
Thus we have,
$
\alpha_\sigma = 
\Sigma^{(\C^d)^{\otimes t}} \frac{\one^{(\C^d)^{\otimes t}}}{d^{t/2}}.
$ 
Observe that for any $\sigma \in S_t$, $U \in \U(d)$, 
$U^{\otimes t} \alpha_\sigma = \alpha_\sigma U^{\otimes t}$. Thus it is clear
that $\alpha_\sigma$ lies in the fixed space $W$. It turns out that 
$W$ is spanned by the matrices $\alpha_\sigma$ as $\sigma$ ranges over all
permutations of $[t]$ i.e. 
$
W = \spanning_\sigma \alpha_\sigma.
$
This non-trivial statement follows from
Schur-Weyl duality \cite[Theorem 3.3.8]{GW}.
For $t \leq d$, it is easy to see that the matrices 
$\alpha_\sigma$, $\sigma \in S_t$ are linearly
independent and so $\dim W = t!$.

We thus see that for the superoperator 
$
\E_U^{\mbox{Design}} [U^{\otimes t} \cdot (U^\dagger)^{\otimes t}] -
\E_U^{\mbox{Haar}} [U^{\otimes t} \cdot (U^\dagger)^{\otimes t}],
$
the matrices $\{\alpha_\sigma\}_\sigma$ are left and right singular 
vectors
with singular value zero. Thus, the $\{\alpha_\sigma\}_\sigma$ are
eigenvectors with eigenvalue zero. This is true even if 
the design superoperator
is not Hermitian. The bound on the Schatten $\ell_\infty$-norm of the above
superoperator required by Definition~\ref{def:qTPE} translates to the
requirement that the other singular vectors have singular values 
at most $\lambda$. If the design superoperator is Hermitian, this is
equivalent to the requirement that the other eigenvectors have eigenvalues
between $-\lambda$ and $\lambda$.

Hastings and Harrow~\cite[Theorem 6]{HH} showed that independent Haar 
random choices of unitary matrices 
give rise to Hermitian TPEs with good tradeoff 
between degree and eigenvalue gap.
\begin{fact}[Random qTPE]
\label{fact:randomqTPE}
Let $s \geq 4$ be an even integer. Let $d \geq 100$ be an integer. 
Let $t$ be a positive integer satisfying 
$t \leq \frac{d^{1/6}}{10 \log d}$.
Choose $d \times d$ unitary matrices $\{U_i\}_{i=1}^{s/2}$ independently
from the Haar measure on the unitary group $\U(d)$.
For all $1 \leq i \leq s/2$, set $U_{i+\frac{s}{2}} := U_i^\dagger$. 
Then $\{U_i\}_{i=1}^s$ form an explicitly Hermitian 
$(d, s, \lambda, t)$-qTPE, where
$
\lambda < \frac{8}{s^{1/2}}
$
with probability at least $1 - d^{-2}$,
and the involution `-' is defined is $-i := i + \frac{s}{2}$ if 
$1 \leq i \leq s/2$, and $-i := i - \frac{s}{2}$ otherwise.
\end{fact}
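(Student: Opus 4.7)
The plan is to exploit the Hermiticity of $\cG$ built in by the adjoint pairing, to identify the Haar average $\cI$ as the orthogonal projector onto the fixed space $W=\spanning_\sigma\alpha_\sigma$, and then to apply an operator-valued concentration inequality to the restriction of $\cG-\cI$ to $W^\perp$.

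First I would vectorise the superoperator: stacking columns, the map $M\mapsto U^{\otimes t}M(U^\dagger)^{\otimes t}$ is represented by the $d^{2t}\times d^{2t}$ unitary $T_U := U^{\otimes t}\otimes\overline{U}^{\otimes t}$. Because the multiset $\{U_i\}_{i=1}^s$ is closed under adjoints with equal multiplicity, $\cG=s^{-1}\sum_i T_{U_i}$ is Hermitian as a matrix, and hence so is $\cG-\cI$. Each summand fixes every $\alpha_\sigma$ (as recorded in the excerpt, since $U^{\otimes t}\alpha_\sigma=\alpha_\sigma U^{\otimes t}$ gives $U^{\otimes t}\alpha_\sigma(U^\dagger)^{\otimes t}=\alpha_\sigma$), so it preserves $W$ and therefore also $W^\perp$. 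Moreover $\cI$ is the identity on $W$ and zero on $W^\perp$, so $\|\cG-\cI\|_\infty = \|\cG|_{W^\perp}\|_\infty$.

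Next I would package the $s$ random summands into $s/2$ independent adjoint pairs. For each $1\leq i\leq s/2$ let $X_i := \tfrac12(T_{U_i}+T_{U_i^\dagger})$, restricted to $W^\perp$. Each $X_i$ is Hermitian, a contraction with $\ellinfty{X_i}\leq 1$, independent across $i$, and satisfies $\E[X_i]=0$ on $W^\perp$ by unitary invariance of the Haar measure together with the vanishing of $\cI$ there. Thus $\cG|_{W^\perp}=\tfrac{2}{s}\sum_{i=1}^{s/2}X_i$ is an average of i.i.d.\ bounded zero-mean Hermitian matrices on a space of dimension at most $d^{2t}$.

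Finally I would apply a matrix Chernoff / Ahlswede--Winter tail bound of the shape $\Pr[\,\ellinfty{\cG|_{W^\perp}}\geq\lambda\,]\leq 2 d^{2t}\exp(-c\lambda^2 s)$ for an absolute constant $c>0$ controlled by the variance proxy $\ellinfty{\E[X_i^2]}$. The main obstacle is twofold: first, pinning down the specific constant $8$ in $\lambda<8/\sqrt{s}$ requires the sharper Bernstein form of the inequality, together with a careful estimate of $\ellinfty{\E[X_i^2]}$ (which one can extract from Weingarten-style second-moment identities for Haar unitaries); second, for the $d^{2t}$ pre-factor to be swallowed by the Chernoff exponential and yield failure probability below $d^{-2}$, one needs precisely the hypotheses $t\leq d^{1/6}/(10\log d)$ and $d\geq 100$. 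Hermiticity of the design and the explicit involution $-i = i\pm s/2$ are immediate from the construction.
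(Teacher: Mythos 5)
The paper does not actually prove this statement: it imports it verbatim as Theorem~6 of Hastings and Harrow \cite{HH}, whose proof proceeds by the trace (moment) method together with Weingarten calculus. Your setup is correct as far as it goes --- vectorising to $T_U = U^{\otimes t}\otimes \overline{U}^{\otimes t}$, deducing Hermiticity from the adjoint pairing, identifying $\cI$ with the orthogonal projector onto $W$, and reducing the claim to bounding $\ellinfty{\cG|_{W^\perp}}$ for an average of independent mean-zero Hermitian contractions. The final concentration step, however, cannot work as proposed.

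The gap is quantitative and unfixable within the Ahlswede--Winter/matrix-Bernstein framework: those inequalities carry a prefactor equal to the ambient (or intrinsic) dimension, here $\dim W^\perp \approx d^{2t}$, and with $\ellinfty{X_i}\leq 1$ and a variance proxy of order one the bound reads $\Pr[\ellinfty{\cG|_{W^\perp}}\geq \lambda]\leq 2\,d^{2t}\exp(-c\lambda^2 s)$ for an absolute constant $c$. Substituting $\lambda = 8s^{-1/2}$ makes the exponential a fixed constant $e^{-64c}$, which cannot absorb $d^{2t}$ under any relation between $t$ and $d$; to get anything nontrivial you would be forced to take $\lambda = \Omega(\sqrt{t\log d/s})$, which exceeds $1$ in exactly the regime where the Fact is applied ($s$ a constant such as $4$, $t=\polylog(d)$). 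The hypothesis $t\leq d^{1/6}/(10\log d)$ does not rescue this --- its actual role in \cite{HH} is to control the error terms in the asymptotics of the Weingarten function when one computes $\E[\Tr((\cG-\cI)^{2k})]$ directly for $k$ of order $t\log d$. That trace-method computation is the missing essential ingredient: it shows the $2k$-th moment is roughly $\mathrm{poly}(d^t,t!)\cdot(2\sqrt{s-1}/s)^{2k}$, so Markov's inequality places the threshold near $2s^{-1/2}$ (with the stated constant $8$ and failure probability $d^{-2}$ after optimising $k$), rather than at the dimension-dependent scale $\sqrt{\log(\dim)/s}$ that any union-bound or matrix-Chernoff argument inevitably produces.
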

Combining the above fact with Fact~\ref{fact:epsgood}, we get
\begin{lemma}
\label{lem:epsgood}
Let $s \geq 4$ be an even integer. Let $d \geq 100$ be an integer. 
Let $k \leq \log s$ be an integer. 
Let $0 < \epsilon < 10^{-2}$. 
Let $d' \geq 30 \log s (\log s + \log d) d^{2k+1} \epsilon^{-2}$.
Let $t$ be a positive integer satisfying 
$t \leq \frac{(dd')^{1/6}}{10 \log (dd')}$.
Choose $(dd') \times (dd')$ unitary matrices 
$\{U_i(j)\}$, $1 \leq i \leq s/2$, $1 \leq j \leq k$ independently
from the Haar measure on the unitary group $\U(dd')$.
For all $1 \leq i \leq s/2$, $1 \leq j \leq k$ 
set $U_{i+\frac{s}{2}}(j) := U_i^\dagger(j)$. 
Then, with probability at least $3/4$,  for all $j \in [k]$, 
$\cH_j := \{U_i(j)\}_{i=1}^s$ is an explicitly Hermitian 
$(d, s, \lambda, t)$-qTPE, where
$
\lambda < \frac{8}{s^{1/2}},
$
the involution `-' being defined as $-i := i + \frac{s}{2}$ if 
$1 \leq i \leq s/2$, and $-i := i - \frac{s}{2}$ otherwise, and
the $k$-tuple of expanders $\vcH$ is $\frac{\epsilon}{8 t d^{k}}$-good.
\end{lemma}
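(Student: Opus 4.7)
The plan is to combine Fact~\ref{fact:randomqTPE} and Fact~\ref{fact:epsgood} by a single union bound over the draw of the $sk/2$ independent Haar random unitaries. The first fact will give the qTPE property of each $\cH_j$ individually; the second, applied with parameter $\epsilon' := \epsilon/(8td^k)$, will give the $\epsilon'$-goodness of the whole $k$-tuple $\vcH$.

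For the qTPE part, I would apply Fact~\ref{fact:randomqTPE} separately to each $\cH_j = \{U_i(j)\}_{i=1}^s$, viewed as a set of $s$ unitaries in dimension $dd'$. The Hermitian structure $U_{-i}(j) = U_i^\dagger(j)$ matches the explicit involution required by Fact~\ref{fact:randomqTPE}, and the hypothesis $t \leq (dd')^{1/6}/(10\log(dd'))$ in the lemma statement is exactly the hypothesis of Fact~\ref{fact:randomqTPE} applied in dimension $dd'$. Thus each $\cH_j$ fails to be an explicitly Hermitian qTPE with eigenvalue bound $8 s^{-1/2}$ with probability at most $(dd')^{-2}$, and union bounding over the $k \leq \log s$ values of $j$ contributes a negligible term.

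For the $\epsilon'$-good part, I would apply Fact~\ref{fact:epsgood} with $V := \C^d$ playing the role of the measured register, $V' := \C^{d'}$, and $\epsilon$ replaced by $\epsilon'$. The only mild subtlety is that within each $\cH_j$ only $s/2$ of the $s$ unitaries are independent Haar random (the others being their inverses), but this is harmless: the definition of $\epsilon'$-good considers only $k$-tuples $(U_k,\ldots,U_1)$ with exactly one entry from each $\cH_j$, and distinct $\cH_j$'s are mutually independent. Each such tuple is therefore an ordered product of $k$ genuinely independent Haar unitaries, and the $s^{k+1}$ counting factor appearing in Fact~\ref{fact:epsgood}'s failure bound already absorbs the $s$ choices per slot, inverses included.

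What remains — and in my view the only genuinely fiddly piece — is verifying that the given lower bound on $d'$ makes the tail $4(s^{k+1}d^{k+2}d')^2 \exp(-2^{-4}(\epsilon')^2 d')$ smaller than a suitably small constant. Substituting $\epsilon' = \epsilon/(8td^k)$ and the hypothesis $d' \geq 30 \log s (\log s + \log d) d^{2k+1} \epsilon^{-2}$ yields $(\epsilon')^2 d' = \Omega(\log s(\log s + \log d) d/t^2)$, whereas the logarithm of the polynomial prefactor is $O(\log^2 s + \log s \log d + \log d')$; using $k \leq \log s$ and the sheer size of $d$, the exponential wins comfortably. This is a routine constant chase rather than a real obstacle. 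Combining the two failure contributions by a union bound produces success probability at least $3/4$, which is the claim.
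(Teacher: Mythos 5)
Your proposal is exactly the paper's (essentially unwritten) argument: the paper states Lemma~\ref{lem:epsgood} as an immediate combination of Fact~\ref{fact:randomqTPE} (applied in dimension $dd'$ to each $\cH_j$, with a union bound over $j\le k$) and Fact~\ref{fact:epsgood} (applied with $V=\C^d$, $V'=\C^{d'}$ and $\epsilon$ replaced by $\epsilon/(8td^k)$), and you correctly handle the two points the paper leaves implicit, namely that each $k$-tuple drawn from distinct $\cH_j$'s consists of $k$ mutually independent Haar unitaries even though each $\cH_j$ internally contains inverse pairs. One caveat on your ``routine constant chase'': since $\epsilon'=\epsilon/(8td^k)$, the exponent $2^{-4}(\epsilon')^2d'$ is $\Theta\bigl(\log s(\log s+\log d)\,d/t^2\bigr)$ rather than $\Theta\bigl(\log s(\log s+\log d)\,d\bigr)$, so the exponential wins only when $t^2$ is at most a constant fraction of $d$ (as in the paper's later application where $10t^2\le d$); for $t$ near the upper limit $(dd')^{1/6}/(10\log(dd'))$ allowed by the lemma's hypotheses the stated lower bound on $d'$ does not suffice, but this is a defect of the lemma statement itself rather than of your argument.
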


We now recall the definition of an approximate unitary $t$-design 
according to Low~\cite{L}.
\begin{definition}[{\bf Unitary $t$-design}]
Consider $d^2$ formal variables $\{u_{ij}\}_{i,j=1}^d$. A monomial $M$
in these formal variables is said to be {\em balanced of degree $t$} if
it is a product of exactly $t$ of the formal variables and exactly 
$t$ of complex conjugates of the formal variables (the sets
of unconjugated and conjugates variables bear no relation amongst them).
For a $d \times d$ unitary matrix $U$, let $M(U)$ denote the value
of the monomial $M$ obtained by evaluating it at the entries $U_{ij}$ of
$U$. A balanced polynomial of degree $t$ is a linear combination of
balanced monomials of degree $t$.

A unitary $(d, s, \alpha, t)$-design is a set of 
$d \times d$ unitaries $\{U_i\}_{i=1}^s$ such that
\[
\left|
\E_{U}^{\mbox{Design}} [M(U)] -
\E_{U}^{\mbox{Haar}} [M(U)]
\right| \leq
\frac{\alpha}{d^t},
\]
for all balanced monomials $M$ of degree $t$.
\end{definition}

The following facts are easy to prove from the definition of 
a qTPE and hold even if the qTPE is not Hermitian.
\begin{fact}
\label{fact:tracingout}
A $(d, s, \lambda, t)$-qTPE is also a $(d, s, \lambda, t')$-qTPE for
$t' \leq t$.
\end{fact}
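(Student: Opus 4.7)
The plan is to embed any test matrix on $t'$ copies into a test matrix on $t$ copies by padding with a properly normalised identity, and then apply the $t$-qTPE hypothesis directly.

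Let $\cG$ denote the design superoperator and $\cI$ denote the Haar superoperator at tensor power $t$, and let $\cG'$, $\cI'$ denote the corresponding objects at tensor power $t'$. Given an arbitrary matrix $M' \in \C^{d^{t'} \times d^{t'}}$, the first step is to define the padded matrix
\[
M \;:=\; M' \otimes \frac{\one^{(\C^d)^{\otimes (t-t')}}}{d^{(t-t')/2}}
\;\in\; \C^{d^t \times d^t}.
\]
Because the Schatten $\ell_2$-norm is multiplicative under tensor products, and because the Schatten $\ell_2$-norm of $d^{-(t-t')/2} \one$ equals $1$, we immediately obtain $\elltwo{M} = \elltwo{M'}$.

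The second step is to observe that for every unitary $U \in \U(d)$,
\[
U^{\otimes t} M (U^\dag)^{\otimes t}
\;=\;
\bigl(U^{\otimes t'} M' (U^\dag)^{\otimes t'}\bigr)
\otimes
\frac{U^{\otimes (t-t')} \one (U^\dag)^{\otimes (t-t')}}{d^{(t-t')/2}}
\;=\;
\bigl(U^{\otimes t'} M' (U^\dag)^{\otimes t'}\bigr)
\otimes
\frac{\one^{(\C^d)^{\otimes (t-t')}}}{d^{(t-t')/2}},
\]
the second equality using the fact that the identity is invariant under conjugation. Since this factorisation holds uniformly in $U$, it is preserved by averaging, so
\[
\cG(M) = \cG'(M') \otimes \frac{\one}{d^{(t-t')/2}},
\qquad
\cI(M) = \cI'(M') \otimes \frac{\one}{d^{(t-t')/2}}.
\]

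The third step is to combine these two observations. Subtracting and using multiplicativity of the Schatten $\ell_2$-norm one more time yields
\[
\elltwo{\cG'(M') - \cI'(M')}
\;=\;
\elltwo{\cG(M) - \cI(M)}
\;\leq\;
\lambda \elltwo{M}
\;=\;
\lambda \elltwo{M'},
\]
where the inequality in the middle uses the hypothesis that $\{U_i\}_{i=1}^s$ is a $(d,s,\lambda,t)$-qTPE applied to the padded matrix $M$. Since $M'$ was arbitrary, this is exactly the $(d,s,\lambda,t')$-qTPE condition. There is no real obstacle here; the only point to check carefully is that conjugation by $U^{\otimes (t-t')}$ acts trivially on the padded identity factor, which is what makes the tensor-product structure survive both the design average and the Haar average.
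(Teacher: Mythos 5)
Your proposal is correct and complete. The paper itself offers no proof of this fact (it merely asserts that it "is easy to prove from the definition"), so there is nothing to diverge from; your padding argument --- tensoring $M'$ with the unit-Schatten-$2$-norm matrix $d^{-(t-t')/2}\one$, exploiting the invariance of the identity under conjugation so that the extra factor passes through both the design average and the Haar average, and then using multiplicativity of $\elltwo{\cdot}$ --- is exactly the natural way to fill that gap, and each step (in particular the factorisation of $\cI(M)$, which follows by integrating the pointwise identity over the Haar measure) checks out.
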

\begin{fact}
\label{fact:doubling}
Suppose $\cG := \{U_i\}_{i=1}^s$ is a $(d, s, \lambda, t)$-qTPE. Then
$\cG' := \{U_i\}_{i=1}^s \cup \{U_i^\dagger\}_{i=1}^s$ is an explicitly
Hermitian  $(d, 2s, \lambda, t)$-qTPE with involution `-' defined by
$-i := i + s$ if 
$1 \leq i \leq s$, and $-i := i - s$ otherwise.
\end{fact}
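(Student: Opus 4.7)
The plan is to view the design superoperator of $\cG'$ as the Hermitisation of the design superoperator of $\cG$ under the Hilbert-Schmidt inner product, and then argue that this operation cannot increase the spectral norm of the deviation from the ideal superoperator. Write $\Phi(M) := s^{-1} \sum_{i=1}^s U_i^{\otimes t} M (U_i^\dagger)^{\otimes t}$ for the design of $\cG$ and $\Phi'(M)$ for the design of $\cG'$. A short cyclic-trace computation shows that the Hilbert-Schmidt adjoint of the single-conjugation superoperator $M \mapsto U^{\otimes t} M (U^\dagger)^{\otimes t}$ is $M \mapsto (U^\dagger)^{\otimes t} M U^{\otimes t}$, so we obtain $\Phi' = \tfrac{1}{2}(\Phi + \Phi^\dagger)$ immediately from the definition of $\cG'$.

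Next I would observe that the ideal superoperator $\cI$ is itself self-adjoint under the Hilbert-Schmidt inner product: the Haar measure on the compact group $\U(d)$ is invariant under inversion $U \mapsto U^\dagger$, so making this change of variable in the defining integral gives $\cI^\dagger = \cI$. Consequently,
$$
\Phi' - \cI \;=\; \tfrac{1}{2}\bigl((\Phi - \cI) + (\Phi - \cI)^\dagger\bigr)
$$
is literally the Hermitian part of $\Phi - \cI$, and the triangle inequality together with $\ellinfty{A^\dagger} = \ellinfty{A}$ gives $\ellinfty{\Phi' - \cI} \leq \ellinfty{\Phi - \cI} \leq \lambda$, which is exactly the bound required by Definition~\ref{def:qTPE} for $\cG'$ with degree $2s$.

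Explicit Hermiticity is a matter of bookkeeping: the stated involution sends an index $i \in \{1,\ldots,s\}$ to $i+s$, and the unitary sitting at index $i+s$ of $\cG'$ is $U_i^\dagger = U_i^{-1}$, so $U_{-i} = U_i^{-1}$ as required; the range $s+1 \leq i \leq 2s$ is symmetric. There is no real obstacle in this proof; the only subtle point is correctly identifying the superoperator adjoint of conjugation-by-$U$ as conjugation-by-$U^\dagger$ (rather than conjugation-by-$U$ itself), which is settled by the cyclic-trace computation in the first paragraph.
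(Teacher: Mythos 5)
Your argument is correct: the identification $\Phi' = \tfrac{1}{2}(\Phi + \Phi^\dagger)$ via the Hilbert--Schmidt adjoint, the self-adjointness of $\cI$ (which also follows from its being the orthogonal projection onto the fixed space, as noted in the paper), and the triangle inequality together with $\ellinfty{A^\dagger} = \ellinfty{A}$ give exactly the required bound, and the involution bookkeeping is right. The paper states this fact without proof as an easy consequence of the definition, and your write-up is a complete and correct instantiation of that intended routine argument.
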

\begin{fact}
\label{fact:squaring}
Let $\cH = \{U_j\}_{j=1}^s$ be a $(d, s, \lambda, t)$-qTPE.
Sequentially iterating $\cH$ twice means applying the superoperator 
corresponding to $\cH$ twice in succession. Then, $\cH \circ \cH$ is a
$(d, s^2, \lambda^2, t)$-qTPE where the $s^2$ unitaries are of the form
$U_i U_j$, $1 \leq i, j \leq s$. 
\end{fact}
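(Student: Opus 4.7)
The plan is to bound $\ellinfty{\cG \circ \cG - \cI}$ by establishing the operator identity $\cG \circ \cG - \cI = (\cG - \cI)^2$ and then invoking submultiplicativity of the spectral norm. First I would check the combinatorial statement: expanding
\[
(\cG \circ \cG)(M) = s^{-2} \sum_{i,j=1}^s U_i^{\otimes t} U_j^{\otimes t} M (U_j^\dagger)^{\otimes t} (U_i^\dagger)^{\otimes t}
\]
and using $(U_i U_j)^{\otimes t} = U_i^{\otimes t} U_j^{\otimes t}$ shows that $\cG \circ \cG$ is exactly the qTPE superoperator associated to the multiset $\{U_i U_j : 1 \leq i, j \leq s\}$ with uniform weighting; so the degree is $s^2$, as claimed.

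The main step is to prove $\cG \circ \cG - \cI = (\cG - \cI)^2$, which, after expanding the square, reduces to the three identities $\cI \circ \cI = \cI$, $\cG \circ \cI = \cI$, and $\cI \circ \cG = \cI$. By the discussion following Definition~\ref{def:qTPE}, $\cI$ is the orthogonal projection (under the Hilbert-Schmidt inner product) onto the fixed subspace $W = \spanning_\sigma \alpha_\sigma$, so $\cI \circ \cI = \cI$. For any $U \in \U(d)$ we have $U^{\otimes t} \alpha_\sigma (U^\dagger)^{\otimes t} = \alpha_\sigma$, hence $\cG$ acts as the identity on $W$; since the image of $\cI$ lies in $W$, this yields $\cG \circ \cI = \cI$. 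For the remaining identity $\cI \circ \cG = \cI$, I would note that the Hilbert-Schmidt adjoint of the superoperator $M \mapsto U^{\otimes t} M (U^\dagger)^{\otimes t}$ is $M \mapsto (U^\dagger)^{\otimes t} M U^{\otimes t}$, so $\cG^\dagger$ has the same form as $\cG$ with each $U_i$ replaced by $U_i^\dagger$; in particular $\cG^\dagger$ also fixes $W$ pointwise, which is equivalent to saying $\cG$ preserves $W^\perp$. Combined with $\cG|_W = \mathrm{id}_W$ this gives $\cI \circ \cG = \cI$.

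Putting these three identities together:
\[
(\cG - \cI)^2 = \cG \circ \cG - \cG \circ \cI - \cI \circ \cG + \cI \circ \cI = \cG \circ \cG - \cI.
\]
The conclusion is then a one-line application of submultiplicativity of the Schatten $\ell_\infty$-norm:
\[
\ellinfty{\cG \circ \cG - \cI} = \ellinfty{(\cG - \cI)^2} \leq \ellinfty{\cG - \cI}^2 \leq \lambda^2.
\]
No step requires $\cG$ to be Hermitian. The only subtlety to watch out for is the invariance $\cI \circ \cG = \cI$ in the non-Hermitian case, and this is precisely why I route that identity through the adjoint superoperator rather than through a direct diagonalisation argument.
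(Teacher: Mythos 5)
Your proof is correct and complete. The paper itself offers no proof of this fact (it is asserted to be ``easy to prove from the definition''), and your argument --- establishing $\cH\circ\cH-\cI=(\cH-\cI)^2$ via $\cI\circ\cI=\cG\circ\cI=\cI\circ\cG=\cI$ and then applying submultiplicativity of the spectral norm --- is the standard one; in particular you correctly handle the non-Hermitian case by routing $\cI\circ\cG=\cI$ through the adjoint superoperator (one could equally derive it from right-invariance of the Haar measure).
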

For a balanced monomial 
$M = (u_{i_1 j_1} \cdots u_{i_t j_t}) (u_{i_1 j_1} \cdots u_{i_t j_t})^*$ 
of degree $t$, let $M'$ be the matrix with a one in the position
$((j_1, \ldots, j_t), (j_1, \ldots, j_t))$ and zeroes elsewhere. 
Plugging $M'$ into the definition of a  $(d, s, \lambda, t)$-qTPE, we
see that sequentially iterating the qTPE 
$O(\frac{t \log d + \log \alpha^{-1}}{\log \lambda^{-1}})$ times gives
us an $\alpha$-approximate unitary $t$-design. 
\begin{fact}
\label{fact:tensor}
Let $\cH = \{U_j\}_{j=1}^s$ be a $(d, s, \lambda, 1)$-qTPE.
The tensor product of $\cH$ with $\cH$ is defined as the
tensor product of the corresponding superoperators. Then,
$\cH \otimes \cH$ is a 
$(d^2, s^2, \lambda, 1)$-qTPE where the $s^2$ unitaries are of the form
$U_i \otimes U_j$, $1 \leq i, j \leq s$. 
\end{fact}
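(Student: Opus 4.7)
The plan is to decompose the single-system superoperator into its Haar-projection part plus an orthogonal error term, and exploit the fact that these pieces interact cleanly under tensor products. Let $\cG$ denote the superoperator $M \mapsto s^{-1}\sum_i U_i M U_i^\dagger$ associated to $\cH$, and let $\cI_d$ denote the $t=1$ Haar projection on $\C^{d\times d}$, i.e.\ $\cI_d(M) = (\Tr M/d)\one_d$, which is the orthogonal projection onto the one-dimensional fixed space $W_d = \spanning\{\alpha_{()}\}$. Write $\mathcal{E} := \cG - \cI_d$, so that by hypothesis $\ellinfty{\mathcal{E}}\le\lambda$.

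First I would unpack the definitions. The design superoperator for the $s^2$ unitaries $\{U_i\otimes U_j\}$ acts on a product matrix $M_1\otimes M_2$ as $\cG(M_1)\otimes\cG(M_2)$, and by linearity equals $\cG\otimes\cG$ on all of $\C^{d^2\times d^2}$. The target ideal superoperator $\cI_{d^2}$ for a $(d^2,\cdot,\cdot,1)$-qTPE coincides with $\cI_d\otimes\cI_d$, since both send $M_1\otimes M_2$ to $(\Tr M_1\,\Tr M_2/d^2)\,\one_d\otimes\one_d$ and both are linear; equivalently, the one-dimensional fixed space $W_{d^2}$ of $\U(d^2)$ equals $W_d\otimes W_d$.

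The key structural fact is $\cI_d \mathcal{E} = \mathcal{E}\cI_d = 0$. The first identity follows from trace preservation of $\cG$: one has $\cI_d\cG = \cI_d$, hence $\cI_d(\cG-\cI_d)=0$. The second follows from $\cG(\one_d)=\one_d$ (since each $U_i U_i^\dagger = \one_d$), giving $\cG\cI_d = \cI_d$ and therefore $\mathcal{E}\cI_d = 0$. Consequently $\mathcal{E}$ annihilates $W_d$ and has range inside $W_d^\perp$, whereas $\cI_d$ is the orthogonal projection onto $W_d$.

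Expanding now yields
\[
\cG\otimes\cG - \cI_{d^2}
\;=\;(\cI_d+\mathcal{E})\otimes(\cI_d+\mathcal{E}) - \cI_d\otimes\cI_d
\;=\;\cI_d\otimes\mathcal{E} + \mathcal{E}\otimes\cI_d + \mathcal{E}\otimes\mathcal{E}.
\]
Under the orthogonal decomposition $\C^{d\times d}\otimes\C^{d\times d} = (W_d\otimes W_d)\oplus(W_d\otimes W_d^\perp)\oplus(W_d^\perp\otimes W_d)\oplus(W_d^\perp\otimes W_d^\perp)$, the three summands above are supported on and map into the last three blocks respectively, so the whole difference is block diagonal with block norms $0,\,\le\lambda,\,\le\lambda,\,\le\lambda^2$. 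Hence $\ellinfty{\cG\otimes\cG - \cI_{d^2}}\le\lambda$, which is exactly the $(d^2,s^2,\lambda,1)$-qTPE condition. There is no real obstacle in this argument; the only thing to resist is the naive triangle inequality, which would give the loose bound $2\lambda+\lambda^2$ rather than $\lambda$.
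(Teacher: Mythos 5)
Your argument is correct. Note that the paper offers no proof of Fact~\ref{fact:tensor} at all (it is listed among facts that are ``easy to prove from the definition''), so there is nothing to compare against; your write-up simply supplies the standard argument. The two ingredients you isolate are exactly the right ones: (i) $\cI_{d^2}=\cI_d\otimes\cI_d$ because the $t=1$ fixed space of $\U(d^2)$ is the one-dimensional span of $\one_{d^2}=\one_d\otimes\one_d$, and (ii) $\cI_d\mathcal{E}=\mathcal{E}\cI_d=0$ (from trace preservation and unitality of $\cG$), which makes $\cI_d\otimes\mathcal{E}$, $\mathcal{E}\otimes\cI_d$ and $\mathcal{E}\otimes\mathcal{E}$ act on mutually orthogonal blocks, so the norm of their sum is the maximum $\max(\lambda,\lambda^2)$ rather than the triangle-inequality bound $2\lambda+\lambda^2$. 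The only cosmetic point worth adding is that concluding $\max(\lambda,\lambda^2)=\lambda$ uses $\lambda\leq 1$; this is harmless since $\ellinfty{\mathcal{E}}\leq 1$ always holds ($\cG$ is a contraction in Schatten $2$-norm, being an average of unitary conjugations), so one may run the argument with $\ellinfty{\mathcal{E}}$ in place of $\lambda$.
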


\section{Zigzag product gives a qTPE}
\label{sec:qzigzag}
Inspired by the definition of zigzag product for quantum expanders, i.e.
$1$-qTPEs, in \cite{BST}, 
we define the zigzag product of a $1$-qTPE and a $t$-qTPE as follows.
Let $\cG = \{U_i\}_{i=1}^d$ be a
$(D, d, \lambda_1, 1)$-qTPE and 
$\cH = \{V_j\}_{j=1}^s$ be a
$(d, s, \lambda_2, t)$-qTPE. We will also use $\cG$, $\cH$ to denote the
corresponding  superoperators
$\C^{D \times D} \rightarrow \C^{D \times D}$,
$(\C^{d \times d})^{\otimes t} \rightarrow (\C^{d \times d})^{\otimes t}$. 
If $\cG$ is explicitly Hermitian, let `$-$' denote the corresponding
involution on $[d]$; if not, let `$-$' denote the identity function
on $[d]$.
Define the unitary matrix $\dcG$ on the vector space 
$\C^{Dd} \cong \C^{D} \otimes \C^d$
by
\[
e_{a} \otimes e_{b} \stackrel{\dcG}{\mapsto}
(U_{b} e_{a}) \otimes e_{-b},
\]
where $e_{a}$, $e_{b}$ denote computational basis
vectors of $\C^D$, $\C^d$ respectively. If $\cG$ is explicitly
Hermitian, then $\dcG$ is an involution
i.e. $\dcG^2 = \one^{\C^{Dd}}$. In this case, $\dcG$ is both unitary 
and Hermitian.
Let $\one^{\C^D}$ denote the identity operator on $\C^{D}$.
We define the unitary superoperator 
$\ddcG$ on $\C^{(Dd) \times (Dd)}$ as
\[
M \stackrel{\ddcG}{\mapsto} \dcG M \dcG^{-1}.
\]
If $\cG$ is explicitly Hermitian, then $\ddcG$ is Hermitian also.
\begin{definition}[Zigzag product of qTPEs]
\label{def:qzigzag}
The zigzag product of qTPEs $\cG$ and $\cH$, denoted by
$\cG \zigzag \cH$, is defined as the following set
of $s^2$ unitary matrices on $\C^{Dd}$:
\[
\cG \zigzag \cH :=
\{
(\one^{\C^D} \otimes V_i) \dcG
(\one^{\C^D} \otimes V_j):
i, j \in [s]
\}.
\]
\end{definition}
Let $\I^{\C^{D^t \times D^t}}$
denote the identity superoperator on $\C^{D^t \times D^t}$.
Viewed as a superoperator on 
$
(\C^{D \times D} \otimes \C^{d \times d})^{\otimes t} \cong
\C^{D^t \times D^t} \otimes \C^{d^t \times d^t}  \cong
(\C^{(Dd) \times (Dd)})^{\otimes t} \cong
\C^{(Dd)^t \times (Dd)^t},
$ 
the zigzag product $\cG \zigzag \cH$ is nothing but
\[
\cG \zigzag \cH :=
(\I^{\C^{D^t \times D^t}} \otimes \cH) \circ 
\ddcG^{\otimes t} \circ
(\I^{\C^{D^t \times D^t}} \otimes \cH).
\]
If $\cG$ and $\cH$ are explicitly Hermitian, then
$\cG \zigzag \cH$ is also explicitly Hermitian with 
involution $-(i,j) := (-j, -i)$ on $[s^2] \cong [s] \times [s]$.

Suppose $t \leq d \leq D \leq Dd$. Then,
the eigenspace $W$ of the superoperator $\cG \zigzag \cH$ for
eigenvalue $1$ is spanned by the linearly independent matrices
$\{\alpha_\sigma\}_{\sigma \in S_t}$ where 
\[
\C^{(Dd)^t \times (Dd)^t} \ni \alpha_\sigma := 
\Sigma^{(\C^D \otimes \C^d)^{\otimes t}} 
\frac{\one^{(\C^D \otimes \C^d)^{\otimes t}}}{(Dd)^{t/2}} =
(\alpha_1)_\sigma \otimes (\alpha_2)_\sigma,
\]
\[
(\alpha_1)_\sigma := 
\Sigma^{(\C^D)^{\otimes t}} 
\frac{\one^{(\C^D)^{\otimes t}}}{D^{t/2}} \in \C^{D^t \times D^t},
~~~
(\alpha_2)_\sigma := 
\Sigma^{(\C^d)^{\otimes t}}
\frac{\one^{(\C^d)^{\otimes t}})}{d^{t/2}} \in \C^{d^t \times d^t}.
\]
Define the matrices
\begin{eqnarray*}
\C^{D^t \times D^t} \ni \alpha_1 
& := &
(\alpha_1)_{()} =
D^{-t/2} \one^{(\C^D)^{\otimes t}}, \\
\C^{d^t \times d^t} \ni \alpha_2 
& := &
(\alpha_2)_{()} =
d^{-t/2} \one^{(\C^d)^{\otimes t}}, \\
\C^{d^t \times d^t} \ni \alpha'_2 
& := &
d^{-t/2} \sum_{(j_1, \ldots, j_t) \in [d], \mbox{distinct}}
(e_{j_1} \otimes \cdots \otimes e_{j_t})
(e_{j_1}^\dagger \otimes \cdots \otimes e_{j_t}^\dagger), \\
\C^{(Dd)^t \times (Dd)^t} \ni \alpha' 
& := &
\alpha_1 \otimes \alpha'_2.
\end{eqnarray*}
For $\sigma \in S_t$, define
\[
\C^{d^t \times d^t} \ni (\alpha'_2)_\sigma := 
\Sigma^{(\C^d)^{\otimes t}} \alpha'_2,
~~~ 
\C^{(Dd)^t \times (Dd)^t} \ni \alpha'_\sigma := 
(\alpha_1)_\sigma \otimes (\alpha'_2)_\sigma.
\]
Observe that the respective sets of matrices 
$\{(\alpha)_\sigma\}_{\sigma \in S_t}$,
$\{(\alpha'_2)_\sigma\}_{\sigma \in S_t}$ are orthogonal under the 
Hilbert-Schmidt inner product.

Define the vector spaces
\\
\begin{minipage}[t]{0.45\textwidth}
\begin{eqnarray*}
W_1 
& := & 
\spanning \{(\alpha_1)_\sigma\}_{\sigma \in S_t}, \\
W_2 
& := & 
\spanning \{(\alpha_2)_\sigma\}_{\sigma \in S_t}, \\
W'_2 
& := & 
\spanning \{(\alpha'_2)_\sigma\}_{\sigma \in S_t}, \\
W' 
& := & 
\spanning \{\alpha'_\sigma\}_{\sigma \in S_t}, 
\end{eqnarray*}
\end{minipage}
~
\begin{minipage}[t]{0.45\textwidth}
\begin{equation}
\label{eq:subspaces}
\begin{array}{rcl}
W^\perp
& := & 
\C^{(Dd)^t \times (Dd)^t} \setminus W, \\
(W')^\perp
& := & 
\C^{(Dd)^t \times (Dd)^t} \setminus W', \\
(W_1)^\perp
& := & 
\C^{D^t \times D^t} \setminus W_1, \\
(W'_2)^\perp
& := & 
\C^{d^t \times d^t} \setminus W'_2. 
\end{array}
\end{equation}
\end{minipage}\\
Then 
\[
(W')^\perp = 
(\C^{D^t \times D^t} \otimes (W'_2)^\perp) \oplus
((\C^{D^t \times D^t} \otimes W'_2) \cap (W')^\perp).
\]

We now define a geometric property called {\em everywhere close} capturing
that a subspace $W$ is `close' in a certain sense to a subspace $W'$.
\begin{definition}
\label{def:everywhereclose}
Let $W$, $W'$ be subspaces of a vector space $V$. Let $\epsilon > 0$. 
We say that $W$ is
everywhere close to $W'$ within $\epsilon$ if for all $w \in W$,
$\elltwo{w} = 1$, there is a $w' \in W'$ such that 
$\elltwo{w - w'} \leq \epsilon$. If $W$ is everywhere close to $W'$
within $\epsilon$
and $W'$ is everywhere close to $W$ within $\epsilon$, then we say that
subspaces $W$, $W'$ are everywhere close within $\epsilon$.
\end{definition}

We next prove an important property about the subspaces $W$, $W'$,
$\C^{D^t \times D^t} \otimes W_2$, $\C^{D^t \times D^t} \otimes W'_2$
and their orthogonal spaces defined in Equation~\ref{eq:subspaces} above.
\begin{lemma}
\label{lem:subspaces}
For the definitions of the subspaces above, the following claims are
true.
\begin{enumerate}

\item
The subspaces $W$, $W'$ are everywhere close to within
$2 \sqrt{\frac{t(t-1)}{d}}$.

\item
For any
$
\beta' = 
\sum_\sigma b_\sigma (\beta'_1)_\sigma \otimes (\alpha'_2)_\sigma 
\in \C^{D^t \times D^t} \otimes W'_2, 
$
$\elltwo{\beta'} = 1$,
define
$
\beta := 
\sum_\sigma b_\sigma (\beta'_1)_\sigma \otimes (\alpha_2)_\sigma 
\in \C^{D^t \times D^t} \otimes W_2. 
$
Then $\elltwo{\beta - \beta'} \leq 2 \sqrt{\frac{t(t-1)}{d}}$.
In particular,
the subspace $\C^{D^t \times D^t} \otimes W'_2$ is everywhere close to
the subspace $\C^{D^t \times D^t} \otimes W_2$ to within
$2 \sqrt{\frac{t(t-1)}{d}}$. 

\item
The subspaces $W^\perp$, $(W')^\perp$ are everywhere close to within
$2 \sqrt[4]{\frac{t(t-1)}{d}}$.

\item
The subspace $\C^{D^t \times D^t} \otimes (W'_2)^\perp$ is
everywhere close to the subspace 
$\C^{D^t \times D^t} \otimes (W_2)^\perp$ to within
$2 \sqrt[4]{\frac{t(t-1)}{d}}$.

\end{enumerate}
\end{lemma}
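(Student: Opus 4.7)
The plan is to prove Parts~1 and~2 by direct computation on the natural correspondence $\alpha_\sigma \leftrightarrow \alpha'_\sigma$, and then deduce Parts~3 and~4 by a projection/duality argument. Throughout, the central object is $\gamma_\sigma := (\alpha_2)_\sigma - (\alpha'_2)_\sigma$. Since $\alpha_2 - \alpha'_2$ equals $d^{-t/2}$ times the diagonal projector onto those computational basis vectors of $(\C^d)^{\otimes t}$ whose coordinates are not all distinct, Lemma~\ref{lem:perm} gives $\elltwo{\gamma_\sigma}^2 = 1 - (d)_t/d^t \leq t(t-1)/(2d)$, while a direct computation of how the shuffle operator acts on that support yields $\inprod{\gamma_\sigma}{\gamma_{\sigma'}} = d^{t_{\sigma^{-1}\sigma'} - t}$ for $\sigma \neq \sigma'$. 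Hence the Gram matrix $C$ of $\{\gamma_\sigma\}_\sigma$ satisfies $C = c\one + M$ with $c \leq t(t-1)/(2d)$ and $M$ exactly the matrix of Lemma~\ref{lem:stirling}, so $\ellinfty{C} \leq 3 t(t-1)/(2d)$.

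For Part~2, I expand $\beta - \beta' = \sum_\sigma b_\sigma (\beta'_1)_\sigma \otimes \gamma_\sigma$ and, after fixing any orthonormal basis of $\C^{D^t \times D^t}$, read off $\elltwo{\beta - \beta'}^2$ as a sum of quadratic forms with kernel $C$ acting on the vectors $(b_\sigma (\beta'_1)_\sigma)_\sigma$, bounded by $\ellinfty{C} \sum_\sigma |b_\sigma|^2 \elltwo{(\beta'_1)_\sigma}^2$. The normalisation $\elltwo{\beta'} = 1$ together with the orthogonality of $\{(\alpha'_2)_\sigma\}_\sigma$ forces $\sum_\sigma |b_\sigma|^2 \elltwo{(\beta'_1)_\sigma}^2 = d^t/(d)_t \leq 2$ (once $d \geq t^2$ say), and multiplying the two bounds gives the claim.

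Part~1 follows the same template but with the extra twist that both $\{\alpha_\sigma\}_\sigma$ and $\{\alpha'_\sigma\}_\sigma$ are non-orthogonal. Given unit $w = \sum_\sigma a_\sigma \alpha_\sigma$, I set $w' := \sum_\sigma a_\sigma \alpha'_\sigma$; then $\elltwo{w - w'}^2 = a^\dagger (G_1 \odot C) a$, where $G_1$ is the Gram matrix of $\{(\alpha_1)_\sigma\}_\sigma$ and $\odot$ is the Hadamard product. Because $G_1 \succeq 0$ has unit diagonal, the Schur product bound gives $\ellinfty{G_1 \odot C} \leq \ellinfty{C}$. For the coefficient side, the Gram matrix of $\{\alpha_\sigma\}_\sigma$ equals $G_1 \odot G_2 = \one + M_1 \odot M_2$, whose off-diagonal entries $(Dd)^{t_\tau - t}$ fit Lemma~\ref{lem:stirling} with the large parameter $Dd$, making its minimum eigenvalue at least $1 - t(t-1)/(Dd)$, so $\elltwo{a}^2 \leq 2$. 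The opposite direction $W' \to W$ is easier since $\{\alpha'_\sigma\}_\sigma$ is already orthogonal.

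For Parts~3 and~4 I apply a duality argument. Given unit $u \in W^\perp$, set $u' := u - \Pi_{W'} u \in (W')^\perp$; then $\elltwo{u - u'} = \elltwo{\Pi_{W'} u} = \sup\{|\inprod{w'}{u}| : w' \in W',\, \elltwo{w'} = 1\}$. For each such $w'$, Part~1 supplies $w \in W$ with $\elltwo{w - w'} \leq 2\sqrt{t(t-1)/d}$, and since $u \perp W$ we conclude $|\inprod{w'}{u}| = |\inprod{w' - w}{u}| \leq 2\sqrt{t(t-1)/d}$, which is actually stronger than the stated $2\sqrt[4]{t(t-1)/d}$. The reverse direction of Part~3 is symmetric, and Part~4 is analogous, additionally needing the reverse form of Part~2 (given unit $v \in \C^{D^t \times D^t} \otimes W_2$, find a close $v' \in \C^{D^t \times D^t} \otimes W'_2$), which follows from the same quadratic-form calculation using that $G_2$ has minimum eigenvalue at least $1 - t(t-1)/d$. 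The main technical obstacle throughout is controlling the non-orthogonality of $\{\alpha_\sigma\}_\sigma$ and $\{(\alpha_2)_\sigma\}_\sigma$; Lemma~\ref{lem:stirling} is the workhorse that keeps all the relevant Gram matrices within an operator-norm ball of $\one$.
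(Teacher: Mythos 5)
Your proposal is correct, and for Parts~3 and~4 it takes a genuinely different route from the paper. For Parts~1 and~2 the underlying estimates are the same as the paper's, but you package them more cleanly: rather than computing $\elltwo{\beta}^2$, $\elltwo{\beta'}^2$ and $\Re\inprod{\beta'}{\beta}$ separately and combining, you observe that the differences $\gamma_\sigma = (\alpha_2)_\sigma - (\alpha'_2)_\sigma$ have Gram matrix exactly $c\one + M$ with $c = 1 - (d)_t/d^t$ (the cross terms $\inprod{(\alpha'_2)_{\sigma'}}{(\alpha_2)_\sigma}$ vanish for $\sigma\neq\sigma'$, as the paper also notes), and bound $\elltwo{\beta-\beta'}^2$ as a single quadratic form; the only extra ingredient is the Schur--Hadamard bound $\ellinfty{G_1\odot C}\leq\ellinfty{C}$ for PSD $G_1$ with unit diagonal, which is standard and correctly applied. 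The real divergence is in Parts~3 and~4: the paper proves closeness of the orthogonal complements by contradiction (if the projection of $\gamma'\in(W')^\perp$ onto $W$ had norm exceeding $2\sqrt[4]{t(t-1)/d}$, one could produce $\beta'\in W'$ with $\elltwo{\gamma'-\beta'}<1$, contradicting orthogonality), and this argument intrinsically costs a square root, whence the fourth root in the stated bound. Your duality argument --- $\elltwo{\Pi_{W'}u} = \sup\{|\inprod{w'}{u}| : w'\in W',\ \elltwo{w'}=1\}$ combined with $u\perp W$ and Part~1 --- bounds the projection directly and yields $2\sqrt{t(t-1)/d}$, strictly stronger than the claimed $2\sqrt[4]{t(t-1)/d}$ and hence sufficient. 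What your approach buys is both a sharper constant (which would propagate to a slightly better $\lambda$ in the zigzag theorem) and a shorter proof; what the paper's approach buys is that it avoids invoking the variational characterisation of the projection norm and the Schur product theorem, staying entirely within the explicit inner-product computations already set up. One small point to make explicit if you write this up: the reverse form of Part~2 needed for Part~4 requires the lower bound $\lambda_{\min}(G_2)\geq 1 - t(t-1)/d$ blockwise, i.e. $\elltwo{v}^2 = \sum_m v_m^\dagger G_2 v_m \geq \lambda_{\min}(G_2)\sum_m\elltwo{v_m}^2$, which you correctly identify but should state.
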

\begin{proof}
Observe that for permutations $\sigma, \sigma' \in S_t$,
\[
\inprod{(\alpha_2)_{\sigma'}}{(\alpha_2)_{\sigma}} =
d^{-t} \Tr[((\Sigma')^{(\C^d)^{\otimes t}})^\dagger 
\Sigma^{(\C^d)^{\otimes t}}
] =
d^{(t_{(\sigma')^\dagger \sigma} - t)},
\]
where $t_{(\sigma')^\dagger \sigma}$ is the number of cycles in the
permutation $(\sigma')^\dagger \sigma$. Similarly,
$
\inprod{(\alpha_1)_{\sigma'}}{(\alpha_1)_{\sigma}} =
D^{(t_{(\sigma')^\dagger \sigma} - t)}.
$

Let $\beta \in W$, $\elltwo{\beta} = 1$. Express $\beta$ as a linear
combination
$
\beta = \sum_\sigma a_\sigma (\alpha_1)_\sigma \otimes (\alpha_2)_\sigma,
$
where $a_\sigma \in \C$. 
We have,
\begin{eqnarray*}
1 
& = &
\inprod{\beta}{\beta} 
\;=\;
\sum_{\sigma', \sigma} a_{\sigma'}^* a_\sigma
\inprod{(\alpha_1)_{\sigma'} \otimes (\alpha_2)_{\sigma'}}
{(\alpha_1)_\sigma \otimes (\alpha_2)_\sigma} \\
& = &
\sum_{\sigma', \sigma} a_{\sigma'}^* a_\sigma
\inprod{(\alpha_1)_{\sigma'}}{(\alpha_1)_\sigma} \cdot
\inprod{(\alpha_2)_{\sigma'}}{(\alpha_2)_\sigma}
\;  = \;
\sum_{\sigma', \sigma} a_{\sigma'}^* a_\sigma
(Dd)^{(t_{(\sigma')^\dagger \sigma}-t)} \\
&  =   &
a^\dagger (\one + M) a 
\;\in\;
\elltwo{a}^2 \left(1 \pm \frac{t(t-1)}{Dd}\right),
\end{eqnarray*}
where $a$ is a $t!$-tuple whose $\sigma$th entry is $a_\sigma$,
and $M$ is the $t! \times t!$-matrix
defined in Lemma~\ref{lem:stirling} with $Dd$ replacing $d$. 
We thus conclude that
$
\sum_\sigma |a_\sigma|^2 \in \frac{1}{1 \pm \frac{t(t-1)}{Dd}}.
$

For $\beta \in W$, $\elltwo{\beta} = 1$ as defined above, let
$
\beta' := 
\sum_\sigma a_\sigma (\alpha_1)_\sigma \otimes (\alpha'_2)_\sigma 
\in W'.
$
Then,
$
\elltwo{\beta'}^2 = 
\sum_\sigma 
|a_\sigma|^2 
\elltwo{(\alpha'_1)_\sigma}^2 \elltwo{(\alpha'_2)_\sigma}^2 =
\sum_\sigma 
|a_\sigma|^2 
\elltwo{(\alpha'_2)_\sigma}^2.
$
By Lemma~\ref{lem:perm} and the previous paragraph, we get
\[
\frac{1 - \frac{t(t-1)}{2d}}{1 + \frac{t(t-1)}{Dd}} \leq
\elltwo{\beta'}^2 \leq
\frac{1}{1 - \frac{t(t-1)}{Dd}}.
\]
Observe that if $\sigma \neq \sigma'$, 
$
\inprod{(a'_2)_{\sigma'}}{(a_2)_\sigma} =
0.
$
Moreover,
$
\inprod{(a'_2)_{\sigma}}{(a_2)_\sigma} \geq
1 - \frac{t(t-1)}{2d}.
$
Thus,
\begin{eqnarray*}
\inprod{\beta'}{\beta} 
& = &
\sum_{\sigma', \sigma} a_{\sigma'}^* a_\sigma
\inprod{(\alpha_1)_{\sigma'} \otimes (\alpha'_2)_{\sigma'}}
{(\alpha_1)_\sigma \otimes (\alpha_2)_\sigma} \\
& = &
\sum_{\sigma} |a_\sigma|^2
\inprod{(\alpha_1)_{\sigma}}{(\alpha_1)_\sigma}
\inprod{(\alpha'_2)_{\sigma}}{(\alpha_2)_\sigma} \\
& \geq &
\frac{1 - \frac{t(t-1)}{2d}}{1 + \frac{t(t-1)}{Dd}} 
\;\geq\;
1 - \frac{t(t-1)}{d}.
\end{eqnarray*}
Hence,
\[
\elltwo{\beta - \beta'}^2 =
\elltwo{\beta}^2 + \elltwo{\beta'}^2 - 
2 \Re (\inprod{\beta'}{\beta}) \leq
1 + \frac{1}{1 - \frac{t(t-1)}{Dd}} - 
2\left(1 - \frac{t(t-1)}{d}\right) \leq
\frac{4 t(t-1)}{d}.
\]
This shows that for any $\beta \in W$, $\elltwo{\beta} = 1$, there 
exists a $\beta' \in W'$ such that 
$\elltwo{\beta - \beta'} \leq 2\sqrt{\frac{t(t-1)}{d}}$. 

Similarly, we can show that for any 
$\beta' \in \C^{D^t \times D^t} \otimes W'_2$, 
$\elltwo{\beta'} = 1$, there exists a 
$\beta \in \C^{D^t \times D^t} \otimes W_2$ such that
$\elltwo{\beta - \beta'} \leq 2\sqrt{\frac{t(t-1)}{d}}$.
Moreover, if $\beta' \in W'$, then the resulting $\beta \in W$.
We proceed as follows. Suppose
$
\beta' := 
\sum_\sigma b_\sigma (\beta_1)_\sigma \otimes (\alpha'_2)_\sigma, 
$
where $b_\sigma \in \C$, $(\beta_1)_\sigma \in \C^{D^t \times D^t}$,
$\elltwo{(\beta_1)_\sigma} = 1$. Then by Lemma~\ref{lem:perm},
\[
1 =
\elltwo{\beta'}^2 = 
\sum_\sigma |b_\sigma|^2 \elltwo{(\alpha'_2)_\sigma}^2 \geq
\left(1-\frac{t(t-1)}{2d}\right)
\sum_\sigma |b_\sigma|^2,
\]
which implies that
$
\sum_\sigma |b_\sigma|^2 \leq
\frac{1}{1-\frac{t(t-1)}{2d}}.
$
Similarly, we can argue that
$
\sum_\sigma |b_\sigma|^2 \geq 1.
$
Define
$
\beta := 
\sum_\sigma b_\sigma (\beta_1)_\sigma \otimes (\alpha_2)_\sigma. 
$
Then,
\begin{eqnarray*}
\elltwo{\beta}^2
& = &
|\sum_{\sigma', \sigma} b_{\sigma'}^* b_\sigma
\inprod{(\beta_1)_{\sigma'} \otimes (\alpha_2)_{\sigma'}}
{(\beta_1)_\sigma \otimes (\alpha_2)_\sigma}| \\
& = &
|\sum_{\sigma', \sigma} b_{\sigma'}^* b_\sigma
\inprod{(\beta_1)_{\sigma'}}{(\beta_1)_\sigma} \cdot
\inprod{(\alpha_2)_{\sigma'}}{(\alpha_2)_\sigma}| \\
& \leq &
\sum_{\sigma', \sigma} |b_{\sigma'}| |b_\sigma|
|\inprod{(\beta_1)_{\sigma'}}{(\beta_1)_\sigma}| \cdot
|\inprod{(\alpha_2)_{\sigma'}}{(\alpha_2)_\sigma}| \\
& \leq &
\sum_{\sigma', \sigma} |b_{\sigma'}| |b_\sigma|
d^{(t_{(\sigma')^\dagger \sigma} - t)}
\; = \;
b^\dagger (\one + M) b 
\;\leq\;
\elltwo{b}^2 \ellinfty{\one + M}
\;\leq\;
\frac{1 + \frac{t(t-1)}{d}}{1-\frac{t(t-1)}{2d}},
\end{eqnarray*}
where $b$ is a $t!$-tuple whose $\sigma$th entry is $|b_\sigma|$,
and $M$ is the $t! \times t!$-matrix
defined in Lemma~\ref{lem:stirling}. Hence,
\begin{eqnarray*}
\inprod{\beta'}{\beta} 
& = &
\sum_{\sigma', \sigma} b_{\sigma'}^* b_\sigma
\inprod{(\beta_1)_{\sigma'} \otimes (\alpha'_2)_{\sigma'}}
{(\beta_1)_\sigma \otimes (\alpha_2)_\sigma} \\
& = &
\sum_{\sigma} |b_\sigma|^2
\inprod{(\beta_1)_{\sigma}}{(\beta_1)_\sigma}
\inprod{(\alpha'_2)_{\sigma}}{(\alpha_2)_\sigma} \\
& \geq &
1 - \frac{t(t-1)}{2d}.
\end{eqnarray*}
Hence,
\[
\elltwo{\beta - \beta'}^2 =
\elltwo{\beta}^2 + \elltwo{\beta'}^2 - 
2 \Re (\inprod{\beta'}{\beta}) \leq
\frac{1 + \frac{t(t-1)}{d}}{1-\frac{t(t-1)}{2d}} + 1
- 2\left(1 - \frac{t(t-1)}{2d}\right) \leq
\frac{3 t(t-1)}{d}.
\]
This proves the first two claims of the lemma.

Using the first claim of the lemma, we 
now argue that $(W')^\perp$ is everywhere close to $W^\perp$ 
to within $2 \sqrt[4]{\frac{t(t-1)}{d}}$.
Let $\gamma' \in (W')^\perp$, $\elltwo{\gamma'} = 1$. 
Let $\beta \in W$ be the projection of $\gamma'$ onto $W$. 
We claim that $\elltwo{\beta} \leq 2\sqrt[4]{\frac{t(t-1)}{d}}$. Suppose 
not.  Then 
$\elltwo{\gamma' - \beta} \leq \sqrt{1 - 4\sqrt{\frac{(t-1)}{d}}}$. 
Since $W'$ and $W$ are everywhere close, 
there exists a $\beta' \in W'$ such that
\[
\elltwo{\beta - \beta'} \leq \elltwo{\beta} 2\sqrt{\frac{t(t-1)}{d}} \leq
2 \sqrt{\frac{t(t-1)}{d}}.
\]
Hence,
$
\elltwo{\gamma' - \beta'} \leq 
\sqrt{1 - 4\sqrt{\frac{t(t-1)}{d}}} + 2\sqrt{\frac{t(t-1)}{d}} 
< 1,
$
leading to a contradiction because the orthogonality of $\gamma'$ 
and $\beta'$ would imply $\elltwo{\gamma' - \beta'} \geq 1$. Define 
$\gamma := \gamma' - \beta$. Then $\gamma \in W^\perp$ and
$
\elltwo{\gamma' - \gamma} = \elltwo{\beta} \leq
2 \sqrt[4]{\frac{t(t-1)}{d}}.
$
This shows that $(W')^\perp$ is everywhere close to $W^\perp$ to within
$2 \sqrt[4]{\frac{t(t-1)}{d}}$.
The argument can be reversed to also show that
$W^\perp$ is everywhere close to $(W')^\perp$ to within
$2 \sqrt[4]{\frac{t(t-1)}{d}}$.
We have thus finished showing that $W^\perp$ and $(W')^\perp$ are
everywhere close to within 
$2 \sqrt[4]{\frac{t(t-1)}{d}}$.

By a similar argument, starting from the second claim of the lemma,
we can prove the fourth claim.
\end{proof}

We now prove the following theorem.
\begin{theorem}
Let $\cG = \{U_i\}_{i=1}^d$ be a
$(D, d, \lambda_1, 1)$-qTPE and 
$\cH = \{V_j\}_{j=1}^s$ be a
$(d, s, \lambda_2, t)$-qTPE, where $D \geq d \geq 10 t^2$.  
Then $\cG \zigzag \cH$ is a
$(Dd, s^2, \lambda, t)$-qTPE where 
\[
\lambda := 
\lambda_1 + \lambda_2 + \lambda_2^2 + 24\sqrt[4]{\frac{t(t-1)}{d}}.
\]
\end{theorem}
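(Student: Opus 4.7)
The plan is to show that $\|T\|_\infty$ restricted to $W^\perp$ is at most $\lambda$, where $T := \cG \zigzag \cH = BCB$ with $B := \I^{\C^{D^t \times D^t}} \otimes \cH$ and $C := \ddcG^{\otimes t}$. Since each of the $s^2$ unitaries in $\cG \zigzag \cH$ commutes with the $t$-fold conjugation defining the $\alpha_\sigma$, the superoperator $T$ fixes $W$ pointwise; hence $\cI T = T \cI = \cI$, and the quantity $\|T - \cI\|_\infty$ we must bound coincides with $\|T|_{W^\perp}\|_\infty$. The factor $C$ is unitary (and Hermitian if $\cG$ is explicitly Hermitian) and hence preserves Hilbert--Schmidt norms, so the whole analysis concerns how $B$ interacts with $C$. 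The principal complication is that the true fixed space $W = \spanning\{\alpha_\sigma\}$ and the combinatorially cleaner auxiliary space $W' = \spanning\{\alpha'_\sigma\}$ (whose second factors are built from distinct-index computational basis vectors and therefore interact cleanly with $\dcG^{\otimes t}$) differ; I will repeatedly switch between these two worlds using Lemma~\ref{lem:subspaces}.

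The first step is to reduce the bound on $W^\perp$ to a bound on $(W')^\perp$ at the cost of an additive $O(\sqrt[4]{t(t-1)/d})$ error. For any unit $N \in W^\perp$, Lemma~\ref{lem:subspaces}.3 produces $N' \in (W')^\perp$ with $\|N - N'\|_2 \leq 2 \sqrt[4]{t(t-1)/d}$, so $\|TN\|_2 \leq \|TN'\|_2 + O(\sqrt[4]{t(t-1)/d})$ since $\|T\|_\infty \leq 1$. I then decompose $N' = N'_\parallel + N'_\perp$ where $N'_\parallel \in (\C^{D^t \times D^t} \otimes W'_2) \cap (W')^\perp$ and $N'_\perp \in \C^{D^t \times D^t} \otimes (W'_2)^\perp$. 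For $N'_\parallel$, Lemma~\ref{lem:subspaces}.2 supplies a nearby $N_\parallel \in \C^{D^t \times D^t} \otimes W_2$ with $\|N_\parallel - N'_\parallel\|_2 \leq 2\sqrt{t(t-1)/d}\,\|N'_\parallel\|_2$; since $\cH$ fixes $W_2$ pointwise, $B(N_\parallel) = N_\parallel$, so $\|B(N'_\parallel) - N'_\parallel\|_2 = O(\sqrt{t(t-1)/d})\,\|N'_\parallel\|_2$. For $N'_\perp$, Lemma~\ref{lem:subspaces}.4 combined with the $\lambda_2$-contraction of $\cH$ on $(W_2)^\perp$ yields $\|B(N'_\perp)\|_2 \leq \lambda_2 \|N'_\perp\|_2 + O(\sqrt[4]{t(t-1)/d})\,\|N'_\perp\|_2$.

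The heart of the argument is the \emph{zig computation}: the component of $C N'_\parallel$ that lies back in $\C^{D^t \times D^t} \otimes W'_2$ has $\ell_2$-norm at most $\lambda_1 \|N'_\parallel\|_2$, modulo the familiar $O(\sqrt[4]{t(t-1)/d})$ error. This is because $\dcG^{\otimes t}$ acts on the distinct-index computational basis vectors underlying $(\alpha'_2)_\sigma$ by permuting the $d$-register labels and applying $U_b$ on the corresponding $D$-register copies; projecting back onto $W'_2$ averages these into $\cG^{\otimes t}$ acting on the first register, and since $N'_\parallel \perp W'$ forces the first-register coefficients to be (approximately, via Lemma~\ref{lem:subspaces}.1) orthogonal to the $(\alpha_1)_\sigma$ and hence in particular to $\alpha_1^{\otimes t}$, the $1$-qTPE property $\|\cG - \cI\|_\infty \leq \lambda_1$ (which extends to $\|(\cG^{\otimes t} - \mathrm{proj}_{\alpha_1^{\otimes t}})|_{(\alpha_1^{\otimes t})^\perp}\|_\infty \leq \lambda_1$) contracts them by $\lambda_1$. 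Assembling everything via $TN' = BC(BN'_\parallel) + BC(BN'_\perp)$ and splitting each $C$-image once more into its $W'_2$-parallel and perpendicular parts gives $\|BC N'_\parallel\|_2 \leq (\lambda_1 + \lambda_2)\|N'_\parallel\|_2 + O(\sqrt[4]{t(t-1)/d})$ and $\|BCB N'_\perp\|_2 \leq \lambda_2^2 \|N'_\perp\|_2 + O(\sqrt[4]{t(t-1)/d})$; maximising over $\|N'_\parallel\|^2 + \|N'_\perp\|^2 = 1$ and lumping all the $O(\sqrt[4]{t(t-1)/d})$ terms into $24 \sqrt[4]{t(t-1)/d}$ yields the stated bound. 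The main obstacle will be making the zig computation precise while simultaneously tracking the small errors from the many $W_2$-to-$W'_2$ and $W$-to-$W'$ translations; each passage through Lemma~\ref{lem:subspaces} introduces a fresh additive error, and these must all fit inside the final $24 \sqrt[4]{t(t-1)/d}$ allowance.
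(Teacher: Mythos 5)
Your overall strategy matches the paper's: pass from $W^\perp$ to $(W')^\perp$, split into the $W'_2$-parallel and $W'_2$-perpendicular components, run a ``zig'' computation that converts the parallel part into an application of $\cG^{\otimes t}$ on the first register, and absorb all translation errors into the $24\sqrt[4]{t(t-1)/d}$ term. But there is a genuine gap in the final assembly. Your claim $\elltwo{BCB\,N'_\perp} \leq \lambda_2^2 \elltwo{N'_\perp} + O(\sqrt[4]{t(t-1)/d})$ cannot be obtained from a one-sided norm argument: after the first $B$ contracts $N'_\perp$ by $\lambda_2$ and the unitary $C = \ddcG^{\otimes t}$ is applied, the matrix $CBN'_\perp$ generally has a large component in $\C^{D^t \times D^t} \otimes W_2$, which the final $B$ preserves rather than contracts. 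So the best a norm argument gives is $\elltwo{BCBN'_\perp} \leq \lambda_2 \elltwo{N'_\perp}$, and with that your maximisation over $\elltwo{N'_\parallel}^2 + \elltwo{N'_\perp}^2 = 1$ yields $\sqrt{(\lambda_1+\lambda_2)^2 + \lambda_2^2}$, which exceeds $\lambda_1 + \lambda_2 + \lambda_2^2$ whenever $2(\lambda_1+\lambda_2) + \lambda_2^2 < 1$ (e.g.\ $\lambda_1 = \lambda_2 = 0.1$ gives $0.224 > 0.21$) --- precisely the regime of interest. The paper avoids this by bounding the \emph{bilinear form} $\inprod{\delta'}{(\cG \zigzag \cH)(\gamma')}$ with the test matrix $\delta'$ decomposed in the same way as $\gamma'$: the $(\perp,\perp)$ pairing becomes $\inprod{(\I \otimes \cH^\dag)(\delta'_2)}{C(\I\otimes\cH)(\gamma'_2)}$, and \emph{both} factors are contracted by $\lambda_2 + 2\sqrt[4]{t(t-1)/d}$, which is where the $\lambda_2^2$ honestly comes from. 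The final combination then uses $|b_1||a_2| + |b_2||a_1| \leq 1$ rather than a single Cauchy--Schwarz over the input alone.

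A secondary point: in your zig computation you say the first-register components of $N'_\parallel$ are only ``approximately'' orthogonal to $(\alpha_1)_\sigma$. In fact, writing $N'_\parallel = \sum_\sigma b_\sigma (\beta_1)_\sigma \otimes (\alpha'_2)_\sigma$ and using that the $(\alpha'_2)_\sigma$ are \emph{exactly} mutually orthogonal, the condition $N'_\parallel \in (W')^\perp$ forces $\inprod{(\beta_1)_\sigma}{(\alpha_1)_\sigma} = 0$ exactly; this exactness is the entire reason for the detour through $W'$ (the paper's Remark~3 notes that settling for approximate orthogonality costs additive errors of order $\poly(t!/d)$, which would restrict $t$ to roughly $\log\log N/\log\log\log N$). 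You should also make explicit that the off-diagonal terms $\sigma \neq \sigma'$ in the zig computation are controlled by the Stirling-number bound of Lemma~2, contributing $O(t(t-1)/d)$; ``projecting back averages these into $\cG^{\otimes t}$'' silently discards them.
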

\begin{proof}
In order to prove the theorem,
we need to show that for any matrix $\gamma \in W^\perp$ with 
$\elltwo{\gamma} = 1$, 
$
\elltwo{(\cG \zigzag \cH)(\gamma)} \leq \lambda.
$
By Lemma~\ref{lem:subspaces},
$W^\perp$ and $(W')^\perp$ are everywhere close. 
Since the Schatten $\ell_\infty$-norm of the superoperator 
$\cG \otimes \cH$ is one,
it suffices to
show that for any matrix $\gamma' \in (W')^\perp$ with 
$\elltwo{\gamma'} = 1$,
$
\elltwo{(\cG \zigzag \cH)(\gamma')} \leq 
\lambda - 2 \sqrt[4]{\frac{t(t-1)}{d}}.
$
This is equivalent to showing that
\[
|\inprod{\delta'}{(\cG \zigzag \cH)(\gamma')}| \leq 
\lambda - 2 \sqrt[4]{\frac{t(t-1)}{d}}
\]
for any matrices $\gamma', \delta' \in (W')^\perp$ with 
$\elltwo{\gamma'} = \elltwo{\delta'} = 1$.

Let us write 
$\gamma' = a_1 \gamma'_1 + a_1 \gamma'_2$, 
$\delta' = b_1 \delta'_1 + b_2 \delta'_2$, 
where
\[
\gamma'_1, \delta'_1 
\in (\C^{D^t \times D^t} \otimes (W'_2)) \cap (W')^\perp,
~~~
\gamma'_2, \delta'_2 
\in \C^{D^t \times D^t} \otimes (W'_2)^\perp,
\]
$a_1, b_1, a_2, b_2 \in \C$, 
$
\elltwo{\gamma'_1} = \elltwo{\delta'_1} = 
\elltwo{\gamma'_2} = \elltwo{\delta'_2} =
1.
$
Since $\gamma'_1$ is orthogonal to $\gamma'_2$ and
$\delta'_1$ is orthogonal to $\delta'_2$, 
$|a_1|^2 + |a_2|^2 = \elltwo{\gamma'}^2 = 1$ and 
$|b_1|^2 + |b_2|^2 = \elltwo{\delta'}^2 = 1$.
Thus,
\begin{eqnarray*}
\lefteqn{\inprod{\delta'}{(\cG \zigzag \cH)(\gamma')}}\\
& = &
b_1^* a_1 \inprod{\delta'_1}{(\cG \zigzag \cH)(\gamma'_1)} +
b_1^* a_2 \inprod{\delta'_1}{(\cG \zigzag \cH)(\gamma'_2)}) \\
&   &
{} +
b_2^* a_1 \inprod{\delta'_2}{(\cG \zigzag \cH)(\gamma'_1)}) +
b_2^* a_2 \inprod{\delta'_2}{(\cG \zigzag \cH)(\gamma'_2}.
\end{eqnarray*}
We will now bound each of the four inner products in the above equation.

We bound the fourth inner product first.
By Lemma~\ref{lem:subspaces}, there are matrices
$\gamma_2, \delta_2 \in \C^{D^t \times D^t} \otimes (W_2)^\perp$ such that
\[
\elltwo{\gamma'_2 - \gamma_2}, \elltwo{\delta'_2 - \delta_2} \leq 
2 \sqrt[4]{\frac{t(t-1)}{d}},
~~~
\elltwo{\gamma_2} \leq 1, ~  \elltwo{\delta_2} \leq 1.
\]
Hence,
\begin{eqnarray*}
\elltwo{(\I^{\C^{D^t \times D^t}} \otimes \cH)(\gamma'_2)} 
& \leq &
\elltwo{(\I^{\C^{D^t \times D^t}} \otimes \cH)(\gamma_2)} +
\elltwo{\gamma'_2 - \gamma_2} \\
& \leq &
\lambda_2 \elltwo{\gamma_2} + 2 \sqrt[4]{\frac{t(t-1)}{d}} \leq
\lambda_2 + 2 \sqrt[4]{\frac{t(t-1)}{d}}.
\end{eqnarray*}
Similarly,
\begin{eqnarray*}
\elltwo{(\I^{\C^{D^t \times D^t}} \otimes \cH^\dag)(\delta'_2)} 
& \leq &
\elltwo{(\I^{\C^{D^t \times D^t}} \otimes \cH^\dag)(\delta_2)} +
\elltwo{\delta'_2 - \delta_2} \\
& \leq &
\lambda_2 \elltwo{\delta_2} + 2 \sqrt[4]{\frac{t(t-1)}{d}} \leq
\lambda_2 + 2 \sqrt[4]{\frac{t(t-1)}{d}},
\end{eqnarray*}
where, in the second inequality above, we used the fact that the 
right singular vectors of $\cH^\dag$
are the left singular vectors of $\cH$ and vice versa, and the fact
that $(W_2)^\perp$ is the span of the left as well as the right singular 
vectors of $\cH$ with singular value at most $\lambda_2$.
Thus,
\begin{eqnarray*}
\lefteqn{
|\inprod{\delta'_2}{(\cG \zigzag \cH)(\gamma'_2)}|
} \\
& = &
|\inprod{\delta'_2}{
(
(\I^{\C^{D^t \times D^t}} \otimes \cH) \circ 
\ddcG^{\otimes t} \circ
(\I^{\C^{D^t \times D^t}} \otimes \cH)
)
(\gamma'_2)
}| \\
& = &
|\inprod{(\I^{\C^{D^t \times D^t}} \otimes \cH^\dag)(\delta'_2)}
{
\dcG^{\otimes t} 
((\I^{\C^{D^t \times D^t}} \otimes \cH)(\gamma'_2))
(\dcG^\dagger)^{\otimes t}
}| \\
& \leq &
\elltwo{(\I^{\C^{D^t \times D^t}} \otimes \cH^\dag)(\delta'_2)}
\elltwo{
\dcG^{\otimes t} 
((\I^{\C^{D^t \times D^t}} \otimes \cH)(\gamma'_2))
(\dcG^\dagger)^{\otimes t}
} \\
&   =  &
\elltwo{(\I^{\C^{D^t \times D^t}} \otimes \cH^\dag)(\delta'_2)}
\elltwo{
((\I^{\C^{D^t \times D^t}} \otimes \cH)(\gamma'_2))
} \\
& \leq &
\left(\lambda_2 + 2\sqrt[4]{\frac{t(t-1)}{d}}\right)^2.
\end{eqnarray*}

We bound the second and third inner products similarly.
\begin{eqnarray*}
\lefteqn{
|\inprod{\delta'_1}{(\cG \zigzag \cH)(\gamma'_2)}|
} \\
& = &
|\inprod{(\I^{\C^{D^t \times D^t}} \otimes \cH^\dag)(\delta'_1)}
{
\dcG^{\otimes t} 
((\I^{\C^{D^t \times D^t}} \otimes \cH)(\gamma'_2))
(\dcG^\dagger)^{\otimes t}
}| \\
& \leq &
\elltwo{(\I^{\C^{D^t \times D^t}} \otimes \cH^\dag)(\delta'_1)}
\elltwo{
((\I^{\C^{D^t \times D^t}} \otimes \cH)(\gamma'_2))
} 
\;\leq\;
\lambda_2 + 2\sqrt[4]{\frac{t(t-1)}{d}},
\end{eqnarray*}
where we used the fact that the Schatten $\ell_\infty$-norm of
the superoperator $\I^{\C^{D^t \times D^t}} \otimes \cH^\dag$
is one in the inequality above. Analogously,
\begin{eqnarray*}
\lefteqn{
|\inprod{\delta'_2}{(\cG \zigzag \cH)(\gamma'_1)}|
} \\
& = &
|\inprod{(\I^{\C^{D^t \times D^t}} \otimes \cH^\dag)(\delta'_2)}
{
\dcG^{\otimes t} 
((\I^{\C^{D^t \times D^t}} \otimes \cH)(\gamma'_1))
(\dcG^\dagger)^{\otimes t}
}| \\
& \leq &
\elltwo{(\I^{\C^{D^t \times D^t}} \otimes \cH^\dag)(\delta'_2)}
\elltwo{
((\I^{\C^{D^t \times D^t}} \otimes \cH)(\gamma'_1))
} 
\;\leq\;
\lambda_2 + 2\sqrt[4]{\frac{t(t-1)}{d}}.
\end{eqnarray*}

Finally, we bound the first inner product as follows.
Let 
$
(\C^{D^t \times D^t} \otimes W'_2) \cap (W')^\perp \ni \gamma'_1 =
\sum_{\sigma} b_\sigma (\beta_1)_\sigma \otimes (\alpha'_2)_\sigma,
$
where $\elltwo{\gamma'_1} = 1$, 
$\elltwo{(\beta_1)_\sigma} = 1$, $b_\sigma \in \C$. 
Similarly, let
$
(\C^{D^t \times D^t} \otimes W'_2) \cap (W')^\perp \ni \delta'_1 =
\sum_{\sigma} c_\sigma (\theta_1)_\sigma \otimes (\alpha'_2)_\sigma,
$
where $\elltwo{\delta'_1} = 1$, 
$\elltwo{(\theta_1)_\sigma} = 1$, $c_\sigma \in \C$. 
Let $b$ be the $t!$-tuple whose $\sigma$th entry
is the complex number $b_\sigma$.
Let $c$ be the $t!$-tuple whose $\sigma$th entry
is the complex number $c_\sigma$.
By Lemma~\ref{lem:perm},
\[
1 = 
\elltwo{\gamma'_1}^2 = 
\sum_\sigma |b_\sigma|^2 \elltwo{(\alpha'_2)_\alpha}^2 \geq
\elltwo{b}^2 \left(1 - \frac{t(t-1)}{2d}\right),
\]
which gives
$
\elltwo{b}^2 \leq \frac{1}{1 - \frac{t(t-1)}{2d}}.
$
Similarly,
$
\elltwo{c}^2 \leq \frac{1}{1 - \frac{t(t-1)}{2d}}.
$
Define the matrices
\[
\gamma_1 :=
\sum_{\sigma} b_\sigma (\beta_1)_\sigma \otimes (\alpha_2)_\sigma,
~~~
\delta_1 :=
\sum_{\sigma} c_\sigma (\theta_1)_\sigma \otimes (\alpha_2)_\sigma
~~~
\gamma_1, \delta_1 \in \C^{D^t \times D^t} \otimes W_2.
\]
By Lemma~\ref{lem:subspaces},
$
\elltwo{\gamma'_1 - \gamma_1},
\elltwo{\delta'_1 - \delta_1} \leq 2\sqrt{\frac{t(t-1)}{d}}.
$
Recalling the fact that the Schatten $\ell_\infty$-norm of 
$\cG \zigzag \cH$ is one, we get
\begin{eqnarray*}
\lefteqn{
|
\inprod{\delta'_1}{(\cG \zigzag \cH)(\gamma'_1)} -
\inprod{\delta_1}{(\cG \zigzag \cH)(\gamma_1)}
| 
} \\
& \leq &
|
\inprod{\delta'_1}{(\cG \zigzag \cH)(\gamma'_1)} -
\inprod{\delta'_1}{(\cG \zigzag \cH)(\gamma_1)}
| +
|
\inprod{\delta'_1}{(\cG \zigzag \cH)(\gamma_1)} -
\inprod{\delta_1}{(\cG \zigzag \cH)(\gamma_1)}
| \\
& \leq &
2\sqrt{\frac{t(t-1)}{d}} +
2\sqrt{\frac{t(t-1)}{d}} 
\left(1 + 2\sqrt{\frac{t(t-1)}{d}}\right) 
\;\leq\;
9\sqrt{\frac{t(t-1)}{d}}.
\end{eqnarray*}
Moreover, as $\gamma'_1 \in (W')^\perp$, 
$\inprod{(\beta_1)_\sigma}{(\alpha_1)_\sigma} = 0$ 
for all $\sigma \in S_t$.
We now evaluate
\begin{eqnarray*}
\lefteqn{
\inprod{\delta_1}{(\cG \zigzag \cH)(\gamma_1)}
} \\
& = &
\inprod{\delta_1}{
(
(\I^{\C^{D^t \times D^t}} \otimes \cH) \circ 
\ddcG^{\otimes t} \circ
(\I^{\C^{D^t \times D^t}} \otimes \cH)
)
(\gamma_1)
} \\
& = &
\inprod{(\I^{\C^{D^t \times D^t}} \otimes \cH^\dag)(\delta_1)}{
\dcG^{\otimes t} 
((\I^{\C^{D^t \times D^t}} \otimes \cH)(\gamma_1))
(\dcG^\dagger)^{\otimes t}
} \\
& = &
\inprod{\delta_1}{\dcG^{\otimes t} \gamma_1 (\dcG^\dagger)^{\otimes t}} \\
& = &
\sum_{\sigma', \sigma} c_{\sigma'}^* b_\sigma
\inprod{(\theta_1)_{\sigma'} \otimes (\alpha_2)_{\sigma'}}{
\dcG^{\otimes t} 
((\beta_1)_\sigma \otimes (\alpha_2)_\sigma)
(\dcG^\dagger)^{\otimes t}
} \\
& = &
\sum_{\sigma' \neq \sigma} c_{\sigma'}^* b_\sigma
\inprod{(\theta_1)_{\sigma'} \otimes (\alpha_2)_{\sigma'}}{
\dcG^{\otimes t} 
((\beta_1)_\sigma \otimes (\alpha_2)_\sigma)
(\dcG^\dagger)^{\otimes t}
} \\
&    &
{} +
\sum_{\sigma} c_\sigma^* b_\sigma
\inprod{(\theta_1)_{\sigma} \otimes (\alpha_2)_{\sigma}}{
\dcG^{\otimes t} 
((\beta_1)_\sigma \otimes (\alpha_2)_\sigma)
(\dcG^\dagger)^{\otimes t}
}.
\end{eqnarray*}
Fix $\sigma, \sigma' \in S_t$, $\sigma \neq \sigma'$. Then,
\begin{eqnarray*}
\lefteqn{
|
\inprod{(\theta_1)_{\sigma'} \otimes (\alpha_2)_{\sigma'}}{
\dcG^{\otimes t} 
((\beta_1)_\sigma \otimes (\alpha_2)_\sigma)
(\dcG^\dagger)^{\otimes t}
}
|
} \\
& = &
d^{-t}
\left|
\sum_{i_1, \ldots, i_t, j_1, \ldots, j_t \in [d]}
\langle
(\theta_1)_{\sigma'} \otimes 
(
(e_{i_1} \otimes \cdots \otimes e_{i_t}) 
(e_{i_{\sigma'(1)}}^\dag \otimes \cdots \otimes e_{i_{\sigma'(t)}}^\dag)
), 
\right. \\
&   &
~~~~~~~~~~~~~~~~~~~~~~~~~~~~~~
(
(U_{j_1} \otimes \cdots \otimes U_{j_t})(\beta_1)_{\sigma} 
(
U_{j_{\sigma(1)}}^\dagger \otimes \cdots \otimes 
U_{j_{\sigma(t)}}^\dagger
)
) \\
&   &
\left.
~~~~~~~~~~~~~~~~~~~~~~~~~~~~~~~~~~~~~~~
{} \otimes
(
(e_{-j_1} \otimes \cdots \otimes e_{-j_t})
(e_{-j_{\sigma(1)}}^\dag \otimes \cdots \otimes e_{-j_{\sigma(t)}}^\dag)
)
\rangle
\right| \\
& \leq &
d^{-t}
\sum_{i_1, \ldots, i_t, j_1, \ldots, j_t \in [d]}
|
\inprod{(\theta_1)_{\sigma'}}
{
(U_{j_1} \otimes \cdots \otimes U_{j_t})(\beta_1)_{\sigma} 
(
U_{j_{\sigma(1)}}^\dagger \otimes \cdots \otimes 
U_{j_{\sigma(t)}}^\dagger
)
}
| \\
&      &
~~~~~~~~~~~~~~~~~~~~~~~~~~~~~~~
{} \cdot |
\langle
(e_{i_1} \otimes \cdots \otimes e_{i_t})
(e_{i_{\sigma'(1)}}^\dag \otimes \cdots \otimes e_{i_{\sigma'(t)}}^\dag), 
\\
&      & 
~~~~~~~~~~~~~~~~~~~~~~~~~~~~~~~~~~~~~~~~~~~
(e_{-j_1} \otimes \cdots \otimes e_{-j_t})
(e_{-j_{\sigma(1)}}^\dag \otimes \cdots \otimes e_{-j_{\sigma(t)}}^\dag)
\rangle| \\
& \leq &
d^{-t}
\sum_{i_1, \ldots, i_t}
\delta_{i_{\sigma'(1)}, i_{\sigma(1)}} \cdots 
\delta_{i_{\sigma'(t)}, i_{\sigma(t)}} 
\;  = \;
d^{t_{(\sigma')^\dagger \sigma} - t}.
\end{eqnarray*}
Hence,
\begin{eqnarray*}
\lefteqn{
\left|\sum_{\sigma' \neq \sigma} c_{\sigma'}^* b_\sigma
\inprod{(\theta_1)_{\sigma'} \otimes (\alpha_2)_{\sigma'}}{
\dcG^{\otimes t} 
((\beta_1)_\sigma \otimes (\alpha_2)_\sigma)
(\dcG^\dagger)^{\otimes t}
}\right|
} \\
& \leq &
\sum_{\sigma' \neq \sigma} |c_{\sigma'}| |b_\sigma|
|
\inprod{(\theta_1)_{\sigma'} \otimes (\alpha_2)_{\sigma'}}{
\dcG^{\otimes t} 
((\beta_1)_\sigma \otimes (\alpha_2)_\sigma)
(\dcG^\dagger)^{\otimes t}
}
| \\
& \leq &
\sum_{\sigma' \neq \sigma} |c_{\sigma'}| |b_\sigma|
d^{t_{(\sigma')^\dagger \sigma} - t} 
\;  = \;
|c|^\dagger M |b|
\;\leq\;
\elltwo{c} \elltwo{b} \ellinfty{M} \\
& \leq &
\frac{t(t-1)}{d}  \frac{1}{1 - \frac{t(t-1)}{2d}} 
\;\leq\;
\frac{2t(t-1)}{d},
\end{eqnarray*}
where $|c|$, $|b|$ denote the $t!$-tuples
whose $\sigma$th entries are $|c_\sigma|$, $|b_\sigma|$, and
Lemma~\ref{lem:stirling} is used in the next to last inequality.
Now fix $\sigma \in S_t$. We have,
\begin{eqnarray*}
\lefteqn{
|
\inprod{(\theta_1)_{\sigma} \otimes (\alpha_2)_{\sigma}}{
\dcG^{\otimes t} 
((\beta_1)_\sigma \otimes (\alpha_2)_\sigma)
(\dcG^\dagger)^{\otimes t}
}
|
} \\
& = &
d^{-t}
\left|
\sum_{i_1, \ldots, i_t, j_1, \ldots, j_t \in [d]}
\langle
(\theta_1)_{\sigma} \otimes 
(
(e_{i_1} \otimes \cdots \otimes e_{i_t})
(e_{i_{\sigma(1)}}^\dag \otimes \cdots \otimes e_{i_{\sigma(t)}}^\dag)
), 
\right. \\
&   &
~~~~~~~~~~~~~~~~~~~~~~~~~~~~~~
(
(U_{j_1} \otimes \cdots \otimes U_{j_t})(\beta_1)_{\sigma} 
(
U_{j_{\sigma(1)}}^\dagger \otimes \cdots \otimes U_{j_{\sigma(t)}}^\dagger
)
) \\
&   &
\left.
~~~~~~~~~~~~~~~~~~~~~~~~~~~~~~~~~~~~~~~
{} \otimes
(
(e_{-j_1} \otimes \cdots \otimes e_{-j_t})
(e_{-j_{\sigma(1)}}^\dag \otimes \cdots \otimes e_{-j_{\sigma(t)}}^\dag)
)
\rangle
\right| \\
& = &
d^{-t}
\left|
\sum_{i_1, \ldots, i_t \in [d]}
\inprod{(\theta_1)_{\sigma}}{
(U_{-i_1} \otimes \cdots \otimes U_{-i_t})(\beta_1)_{\sigma} 
(
U_{-i_{\sigma(1)}}^\dagger \otimes \cdots \otimes 
U_{-i_{\sigma(t)}}^\dagger
)
}
\right| \\
& = &
\left|
\left\langle
(\theta_1)_{\sigma},
d^{-t}
\sum_{i_1, \ldots, i_t \in [d]}
(U_{-i_1} \otimes \cdots \otimes U_{-i_t})(\beta_1)_{\sigma} 
(
U_{-i_{\sigma(1)}}^\dagger \otimes \cdots \otimes 
U_{-i_{\sigma(t)}}^\dagger
)
\right\rangle
\right| \\
& \leq &
\elltwo{(\theta_1)_{\sigma}} \cdot
\elltwo{
d^{-t}
\sum_{i_1, \ldots, i_t \in [d]}
(U_{-i_1} \otimes \cdots \otimes U_{-i_t})(\beta_1)_{\sigma} 
(
U_{-i_{\sigma(1)}}^\dagger \otimes \cdots \otimes 
U_{-i_{\sigma(t)}}^\dagger
)
} \\
&   =  &
\elltwo{
d^{-t}
\sum_{i_1, \ldots, i_t \in [d]}
(U_{i_1} \otimes \cdots \otimes U_{i_t})(\beta_1)_{\sigma} 
(
U_{i_{\sigma(1)}}^\dagger \otimes \cdots \otimes U_{i_{\sigma(t)}}^\dagger
)
}.
\end{eqnarray*}
Let us now express $(\beta_1)_\sigma$ as 
$\Sigma^{(\C^D)^{\otimes t}} (\hat{\beta}_1)_\sigma$
for some $(\hat{\beta}_1)_\sigma \in \C^{D^t \times D^t}$,
$\elltwo{(\hat{\beta}_1)_\sigma} = 1$. Observe that
\[
0 =
\inprod{(\beta_1)_\sigma}{(\alpha_1)_\sigma} =
\inprod{\Sigma^{(\C^D)^{\otimes t}} (\hat{\beta}_1)_\sigma}
       {\Sigma^{(\C^D)^{\otimes t}} \alpha_1} =
\inprod{(\hat{\beta}_1)_\sigma}{\alpha_1}.
\]
We can now write
\begin{eqnarray*}
\lefteqn{
|
\inprod{(\theta_1)_{\sigma} \otimes (\alpha_2)_{\sigma}}{
\dcG^{\otimes t} 
((\beta_1)_\sigma \otimes (\alpha_2)_\sigma)
(\dcG^\dagger)^{\otimes t}
}
|
} \\
& \leq &
\elltwo{
d^{-t}
\sum_{i_1, \ldots, i_t \in [d]}
(U_{i_1} \otimes \cdots \otimes U_{i_t}) 
\Sigma^{(\C^D)^{\otimes t}} (\hat{\beta}_1)_{\sigma} 
(
U_{i_{\sigma(1)}}^\dagger \otimes \cdots \otimes U_{i_{\sigma(t)}}^\dagger
)
} \\
&  =   &
\elltwo{
d^{-t}
\sum_{i_1, \ldots, i_t \in [d]}
\Sigma^{(\C^D)^{\otimes t}}
(U_{i_1} \otimes \cdots \otimes U_{i_t}) (\hat{\beta}_1)_{\sigma} 
(
U_{i_1}^\dagger \otimes \cdots \otimes U_{i_t}^\dagger
)
} \\
&  =   &
\elltwo{
\Sigma^{(\C^D)^{\otimes t}} \cG^{\otimes t}((\hat{\beta}_1)_{\sigma})
} 
\; = \;
\elltwo{
\cG^{\otimes t}((\hat{\beta}_1)_{\sigma})
}.
\end{eqnarray*}
Since $\cG$ is a quantum expander, i.e. a $(D, d, \lambda_1, 1)$-qTPE,
$\cG^{\otimes t}$ is also a quantum expander, i.e. a
$(D^t, d^t, \lambda_1, 1)$-qTPE by Fact~\ref{fact:tensor}. Since 
$\inprod{(\hat{\beta}_1)_\sigma}{\alpha_1} = 0$, we see that
$
\elltwo{
\cG^{\otimes t}((\hat{\beta}_1)_{\sigma})
} \leq \lambda_1.
$
Thus,
\[
|
\inprod{(\theta_1)_{\sigma} \otimes (\alpha_2)_{\sigma}}{
\dcG 
((\beta_1)_\sigma \otimes (\alpha_2)_\sigma)
\dcG^\dagger
}
| \leq 
\lambda_1.
\]
Hence,
\begin{eqnarray*}
\lefteqn{
\left|\sum_{\sigma} c_\sigma^* b_\sigma
\inprod{(\theta_1)_{\sigma} \otimes (\alpha_2)_{\sigma}}{
\dcG^{\otimes t} 
((\beta_1)_\sigma \otimes (\alpha_2)_\sigma)
(\dcG^\dagger)^{\otimes t}
}\right|
} \\
& \leq &
\sum_{\sigma} |c_\sigma| |b_\sigma|
|
\inprod{(\theta_1)_{\sigma} \otimes (\alpha_2)_{\sigma}}{
\dcG^{\otimes t} 
((\beta_1)_\sigma \otimes (\alpha_2)_\sigma)
(\dcG^\dagger)^{\otimes t}
}
| 
\;\leq\;
\lambda_1 \elltwo{c} \elltwo{b} \\
& \leq &
\frac{\lambda_1}{1 - \frac{t(t-1)}{2d}} 
\;\leq\;
\lambda_1\left(1 + \frac{t(t-1)}{d}\right).
\end{eqnarray*}
This implies that
\[
|\inprod{\delta_1}{(\cG \zigzag \cH)(\gamma_1)}| \leq
\lambda_1\left(1 + \frac{t(t-1)}{d}\right) +
2 \frac{t(t-1)}{d},
\]
which further leads to
\begin{eqnarray*}
|\inprod{\delta'_1}{(\cG \zigzag \cH)(\gamma'_1)}| 
& \leq &
\lambda_1\left(1 + \frac{t(t-1)}{d}\right) +
2 \frac{t(t-1)}{d} + 9 \sqrt{\frac{t(t-1)}{d}} \\
& \leq &
\lambda_1 + 12 \sqrt{\frac{t(t-1)}{d}}.
\end{eqnarray*}

Putting the bounds on the four inner products together, we get
\begin{eqnarray*}
\lefteqn{|\inprod{\delta'}{(\cG \zigzag \cH)(\gamma')}|}\\
& \leq &
|b_1| |a_1| \left(\lambda_1 + 12\sqrt{\frac{t(t-1)}{d}}\right) +
|b_1| |a_2| \left(\lambda_2 + 2\sqrt[4]{\frac{t(t-1)}{d}}\right) \\
&     &
{} +
|b_2| |a_1| \left(\lambda_2 + 2\sqrt[4]{\frac{t(t-1)}{d}}\right) +
|b_2| |a_2| \left(\lambda_2 + 2\sqrt[4]{\frac{t(t-1)}{d}}\right)^2 \\
& \leq &
\left(\lambda_1 + 12\sqrt{\frac{t(t-1)}{d}}\right) +
\left(\lambda_2 + 2\sqrt[4]{\frac{t(t-1)}{d}}\right) +
\left(\lambda_2 + 2\sqrt[4]{\frac{t(t-1)}{d}}\right)^2,
\end{eqnarray*}
where we used 
$
|b_1| |a_2| + |b_2| |a_1| \leq
\sqrt{|b_1|^2 + |b_2|^2} \sqrt{|a_2|^2 + |a_1|^2} \leq
1
$
in the last inequality.
This leads to the bound
\begin{eqnarray*}
\lefteqn{\elltwo{(\cG \zigzag \cH)(\gamma)}} \\
& \leq &
\left(\lambda_1 + 12\sqrt{\frac{t(t-1)}{d}}\right) +
\left(\lambda_2 + 2\sqrt[4]{\frac{t(t-1)}{d}}\right) +
\left(\lambda_2 + 2\sqrt[4]{\frac{t(t-1)}{d}}\right)^2 +
2\sqrt[4]{\frac{t(t-1)}{d}} \\
& \leq &
\lambda_1 + \lambda_2 + \lambda_2^2 + 
24\sqrt[4]{\frac{t(t-1)}{d}}, 
\end{eqnarray*}
finishing the proof of the theorem.
\end{proof}

\noindent
\paragraph{Remarks:} \  

\smallskip

\noindent
1.\ Setting $t=1$ recovers the eigenvalue bound on the zigzag product
of quantum expanders, i.e. $1$-qTPEs, proved in
\cite[Theorem~4.8]{BST}.

\smallskip

\noindent
2.\ For $t = \polylog(D)$, taking an efficient construction 
(e.g. via the zigzag product) of a 
$(D, d, \lambda_1, 1)$-qTPE, $d = (10 s t \log t)^6$, 
$\lambda_1 = 100 d^{-1/4}$  as in \cite{BST}, and combining it via
the zigzag product with
a $(d, s, \lambda_2, t)$-qTPE, 
$\lambda_2 = 8 s^{-1/2}$ obtained from the random 
construction of
Fact~\ref{fact:randomqTPE}, gives us a $(Dd, s^2, \lambda, t)$-qTPE,
$\lambda := \lambda_1 + 2 \lambda_2 + O(\sqrt[4]{\frac{t^2}{d}})$
which is efficiently computable.
This gives rise to a fourth power tradeoff between
degree $s^2$ and second largest singular value $10 s^{-1/2}$. 
This tradeoff is the same
as in the standard zigzag product for classical \cite{RVW} and quantum 
\cite{BST} expanders.

\smallskip

\noindent
3.\ The reader may wonder why we went from the subspace $W^\perp$ to
$(W')^\perp$ and back in the above proof. The reason behind this
seemingly unnatural strategy is because we want to ensure that
in the proof of the bound on the first inner product, $(\beta_1)_\sigma$ is
perfectly orthogonal to $(\alpha_1)_\sigma$. Approximate orthogonality
in this step seems to give additive losses of $\poly(\frac{t!}{d})$ in the
expression for $\lambda$, which would require $d \geq t!$, leading
to the construction of efficient $t$-qTPEs in dimension $N$ only for 
$t \leq \frac{\log \log N}{\log \log \log N}$.  This is too small for 
many applications. Going from $W^\perp$ to $(W')^\perp$ allows us to
use the fact that $(\alpha'_2)_{\sigma'}$ is orthogonal to
$(\alpha_2)_\sigma$, $\sigma' \neq \sigma$ which
finally ensures that $(\beta_1)_\sigma$ is indeed
perfectly orthogonal to $(\alpha_1)_\sigma$. But then 
the second eigenvalue bounds on $\cG$ and $\cH$ are in terms of $W^\perp$
and so we have to go back to $W^\perp$ from $(W')^\perp$ in order to
use them in the proof. By adopting this back and forth strategy, we 
only get additive
losses of $\poly(\frac{t}{d})$, which would require $d \geq \poly(t)$, 
leading
to the construction of efficient $t$-qTPEs in dimension $N$ for 
$t = \polylog(N)$. 

\smallskip

\noindent
4.\  An improved analysis of $\lambda$ in the above theorem along the
lines of \cite[Theorem~4.3]{RVW} can be done, giving us the bound 
\[
\lambda :=
\frac{1}{2} (1-\mu_2^2) \mu_1 +
\frac{1}{2} \sqrt{(1- \mu_2^2) \mu_1^2 + 4 \mu_2^2} + 
2\sqrt[4]{\frac{t(t-1)}{d}},
\]
where
\[
\mu_1 := \lambda_1 + 9\sqrt{\frac{t(t-1)}{d}},
~~~
\mu_2 := \lambda_2 + 2\sqrt[4]{\frac{t(t-1)}{d}}.
\]
This bound has several nice properties e.g. it is always less than 
$\mu_1 + \mu_2 + 2\sqrt[4]{\frac{t(t-1)}{d}}$, it is always
less than $1 + 2\sqrt[4]{\frac{t(t-1)}{d}}$ if $\mu_1, \mu_2 < 1$ etc.

\smallskip

\noindent
5. As in \cite[Theorem~6.2]{RVW}, one can similarly define a
`derandomised' zigzag product as follows:
\begin{definition}[Derandomised zigzag product of qTPEs]
\label{def:qzigzagprime}
The {\em derandomised zigzag product} of explicitly Hermitian 
qTPEs $\cG$ and $\cH$, denoted by
$\cG \zigzag' \cH$, is defined as the following set
of $s^3$ unitary matrices on $\C^{Dd}$:
\[
\cG \zigzag' \cH :=
\{
(\one^{\C^D} \otimes V_i) 
(\one^{\C^D} \otimes V_j^\dagger)
\dcG
(\one^{\C^D} \otimes V_j)
(\one^{\C^D} \otimes V_k):
i, j, k \in [s]
\}.
\]
\end{definition}
\noindent
With this definition, one can similarly show that the second eigenvalue
$\lambda$ of $\cG \zigzag' \cH$ satisfies the bound
\[
\lambda :=
\mu_1 + 2\mu_2^2 +
2\sqrt[4]{\frac{t(t-1)}{d}},
\]
where
$\mu_1$, $\mu_2$ are defined in the previous remark.
For $t = \polylog(D)$,
using the derandomised zigzag product for constructing a
quantum expander i.e. $(D, d, \lambda_1, 1)$-qTPE, $d = (10 s t \log t)^6$,
$\lambda_1 = 100 d^{-1/3}$, and combining it
via the derandomised zigzag product
with a $(d, s, \lambda_2, t)$-qTPE, $\lambda_2 = 8 s^{-1/2}$ obtained 
from Fact~\ref{fact:randomqTPE}, gives us a
$(Dd, s^3, \lambda, t)$-qTPE,
$\lambda := \lambda_1 + 2 \lambda_2^2 + O(\sqrt[4]{\frac{t^2}{d}})$
which is efficiently computable.
This gives rise to a third power tradeoff between
degree $s^3$ and second largest singular value $130 s^{-1}$. 
This tradeoff is the same
as in the derandomised zigzag product for classical expanders \cite{RVW}.

\section{Generalised zigzag product gives almost Ramanujan qTPE}
Inspired by the definition of generalised zigzag product for 
classical expanders, i.e.  $1$-cTPEs, in \cite{BT}, 
we define the zigzag product of a $1$-qTPE and a $t$-qTPE as follows.
\begin{definition}[Generalised zigzag product of qTPEs]
\label{def:qgenzigzag}
Let $\cG = \{U_i\}_{i=1}^d$ be a $(D, d, \lambda_1, 1)$-qTPE. For 
$1 \leq j \leq k$, let $\cH_j = \{V_i(j)\}_{i=1}^s$ be a
$(dd', s, \lambda_2, t)$-qTPE. Let $\vcH := (\cH_k, \ldots, \cH_1)$.
Define the unitary matrix $\dcG$ on the vector space
$\C^{Ddd'} \cong \C^D \otimes (\C^{d} \otimes \C^{d'})$ by
\[
e_a \otimes (e_b \otimes e_{b'}) 
\stackrel{\dcG}{\mapsto}
(U_b e_a) \otimes (e_b \otimes e_{b'}),
\]
where $e_a$, $e_b$, $e_{b'}$ denote computational basis vectors of
$\C^D$, $\C^d$, $\C^{d'}$ respectively.
The zigzag product of qTPEs $\cG$ and $\vcH$, denoted by
$\cG \zigzag \vcH$, is defined as the following set
of $s^{k}$ unitary matrices on $\C^{Ddd'}$:
\[
\cG \zigzag \cH :=
\{
(\one^{\C^D} \otimes V_{i_{k}}(k)) 
\dcG \cdots \dcG
(\one^{\C^D} \otimes V_{i_1}(1)):
i_k, \ldots, i_1 \in [s]
\}.
\]
\end{definition}

\paragraph{Remarks:} \ \\

\noindent
1.\ The generalised zigzag product of Hermitian qTPEs will in general
not be Hermitian because the qTPEs $\cH_k, \ldots, \cH_1$ in general
have no relation amongst them. That is why we dispense with the
involution `-' in defining the unitary $\dcG$ and the generalised
zigzag product. Note that any qTPE can be made explicitly Hermitian
by doubling its degree according to Fact~\ref{fact:doubling}.

\smallskip

\noindent
2.\ Viewed as a superoperator on $\C^{(Ddd')^t \times (Ddd')^t}$,
the generalised zigzag product $\cG \zigzag \vcH$ is nothing but
\[
\cG \zigzag \vcH :=
(\I^{\C^{D^t \times D^t}} \otimes \cH_k) \circ 
\ddcG^{\otimes t} \circ \cdots \circ \ddcG^{\otimes t}
(\I^{\C^{D^t \times D^t}} \otimes \cH_1).
\]

\bigskip

Suppose $t \leq dd' \leq D \leq Ddd'$. Define the subspaces
\[
\begin{array}{c}
W \leq \C^{(Ddd')^t \times (Ddd')^t}, 
W_1 \leq \C^{D^t \times D^t}, 
W_2 \leq \C^{(dd')^t \times (dd')^t}, \\
W'_2 \leq \C^{(dd')^t \times (dd')^t}, 
W' \leq \C^{(Ddd')^t \times (Ddd')^t}, 
W^\perp \leq \C^{(Ddd')^t \times (Ddd')^t}, \\
(W')^\perp \leq \C^{(Ddd')^t \times (Ddd')^t}, 
(W_1)^\perp \leq \C^{D^t \times D^t},  
(W'_2)^\perp \leq \C^{(dd')^t \times (dd')^t}, 
\end{array}
\]
and matrices 
\[
\begin{array}{c}
\alpha_\sigma \in \C^{(Ddd')^t \times (Ddd')^t}, 
(\alpha_1)_\sigma \in \C^{D^t \times D^t},  
(\alpha_2)_\sigma \in \C^{(dd')^t \times (dd')^t},  \\
(\alpha'_2)_\sigma \in \C^{(dd')^t \times (dd')^t}, 
\alpha'_\sigma \in \C^{(Ddd')^t \times (Ddd')^t}, 
\end{array}
\]
for a $\sigma \in S_t$ in similar fashion as before.
Then 
\[
(W')^\perp = 
(\C^{D^t \times D^t} \otimes (W'_2)^\perp) \oplus
((\C^{D^t \times D^t} \otimes W'_2) \cap (W')^\perp)
\]
as before and Lemma~\ref{lem:subspaces} on everywhere closeness 
holds with $d$ replaced by $dd'$. 
For a matrix $\gamma \in W^\perp$, define $\gamma'$ to be the matrix
in $(W')^\perp$ such that
\[
\elltwo{\gamma' - \gamma} \leq
2\sqrt[4]{\frac{t(t-1)}{dd'}}
\elltwo{\gamma},
~~~~
\elltwo{\gamma'} \leq \elltwo{\gamma},
\]
whose existence is guaranteed by Lemma~\ref{lem:subspaces}.
For a matrix 
$\gamma' \in (W')^\perp$, define 
$(\gamma')^{\|'}$ to be its projection onto 
$(\C^{D^t \times D^t} \otimes (W'_2)) \cap (W')^\perp$ and
$(\gamma')^{\perp'}$ to be its projection onto
$\C^{D^t \times D^t} \otimes (W'_2)^\perp$.
Define $(\gamma')^{\|}$ to be the matrix in
$\C^{D^t \times D^t} \otimes W_2$ such that
\[
\elltwo{(\gamma')^\| - (\gamma')^{\|'}} \leq
2\sqrt{\frac{t(t-1)}{dd'}}
\elltwo{(\gamma')^{\|'}},
\]
whose existence is guaranteed by Lemma~\ref{lem:subspaces}.
Define $(\gamma')^{\perp}$ to be the matrix in
$\C^{D^t \times D^t} \otimes (W_2)^\perp$ such that
\[
\elltwo{(\gamma')^\perp - (\gamma')^{\perp'}} \leq
2\sqrt[4]{\frac{t(t-1)}{dd'}}
\elltwo{(\gamma')^{\perp'}},
~~~~
\elltwo{(\gamma')^\perp} \leq \elltwo{(\gamma')^{\perp'}},
\]
whose existence is guaranteed by Lemma~\ref{lem:subspaces}.
We now define by induction two sequences of matrices starting with
$\gamma_0, \delta_0 \in W^\perp$, 
$\elltwo{\gamma_0} = \elltwo{\delta_0} = 1$. 
Let $\gamma'_0$, $\delta'_0$  be the matrices in $(W')^\perp$ that are
$2\sqrt[4]{\frac{t(t-1)}{dd'}}$-close to $\gamma_0$, $\delta_0$, whose
existence has been shown above.
For $1 \leq i \leq k-1$, define 
\[
\gamma_i := 
(\ddcG^{\otimes t} \circ (\I \otimes \cH_i))(\gamma'_{i-1})^{\perp},
~~~
\delta_i := 
((\ddcG^\dagger)^{\otimes t} \circ (\I \otimes \cH_{k-i+1})^\dagger)
(\delta'_{i-1})^{\perp}.
\]
Observe that $\gamma_i, \delta_i \in W^\perp$, so we can define
$\gamma'_i, \delta'_i \in (W')^\perp$ accordingly. This implies
that 
$
(\gamma'_i)^{\|'}, (\delta'_i)^{\|'} \in 
(\C^{D^t \times D^t} \cap W'_2) \cap (W')^\perp.
$
We will assume that our parameters are such that $\lambda_2 < 1/2$.
Using induction on $i$, it is easy to see that
\[
\elltwo{\gamma'_i} \leq 
\elltwo{\gamma_i} \leq 
\lambda_2 \elltwo{(\gamma'_{i-1})^\perp} \leq 
\lambda_2 \elltwo{(\gamma'_{i-1})^{\perp'}} \leq 
\lambda_2 \elltwo{\gamma'_{i-1}} \leq 
\lambda_2^i.
\]
Similarly,
\[
\elltwo{\delta'_i} \leq 
\elltwo{\delta_i} \leq 
\lambda_2 \elltwo{(\delta'_{i-1})^\perp} \leq 
\lambda_2 \elltwo{(\delta'_{i-1})^{\perp'}} \leq 
\lambda_2 \elltwo{\delta'_{i-1}} \leq 
\lambda_2^i.
\]

By construction, we have the orthogonal decompositions
$\gamma'_i = (\gamma'_i)^{\|'} + (\gamma'_i)^{\perp'}$,
$\delta'_i = (\delta'_i)^{\|'} + (\delta'_i)^{\perp'}$. Thus,
$
\elltwo{\gamma'_i}^2 =
\elltwo{(\gamma'_i)^{\|'}}^2 + 
\elltwo{(\gamma'_i)^{\perp'}}^2,
$
$
\elltwo{\delta'_i}^2 =
\elltwo{(\delta'_i)^{\|'}}^2 + 
\elltwo{(\delta'_i)^{\perp'}}^2.
$
By induction, we observe that
\begin{eqnarray*}
\lefteqn{
\sum_{0 \leq i \leq k-1}
\elltwo{(\gamma'_{i})^{\|'}}^2 
} \\
& \leq &
\elltwo{(\gamma'_{0})^{\|'}}^2 + 
\elltwo{\gamma'_{1}}^2 
\;\leq\;
\elltwo{(\gamma'_{0})^{\|'}}^2 + 
\lambda_2^2 \elltwo{(\gamma'_{0})^{\perp}}^2 
\;\leq\;
\elltwo{(\gamma'_{0})^{\|'}}^2 + 
\elltwo{(\gamma'_{0})^{\perp'}}^2 \\
& = &
\elltwo{\gamma'_{0}}^2 
\;\leq\;
\elltwo{\gamma_{0}}^2 
\;=\;
1.
\end{eqnarray*}
Similarly,
\[
\sum_{0 \leq i \leq k-1}
\elltwo{(\delta'_{i})^{\|'}}^2 \leq 1.
\]
On the other hand,
\[
\sum_{0 \leq i \leq k-1}
\elltwo{(\delta'_{i})^{\perp}}^2 \leq
\sum_{0 \leq i \leq k-1}
\elltwo{(\delta'_{i})^{\perp'}}^2 \leq
\sum_{0 \leq i \leq k-1}
\elltwo{\delta'_{i}}^2 \leq
\sum_{0 \leq i \leq k-1} \lambda_2^{2i} \leq
2.
\]
Similarly,
\[
\sum_{0 \leq i \leq k-1}
\elltwo{(\gamma'_{i})^{\perp'}}^2 \leq
2.
\]

For $0 \leq i < j \leq k$, define 
\begin{eqnarray*}
e_i 
& := &
\inprod{\delta_0}{
(
(\I^{\C^{D^t \times D^t}} \otimes \cH_k) \circ 
\ddcG^{\otimes t} \circ \cdots \circ \ddcG^{\otimes t}
(\I^{\C^{D^t \times D^t}} \otimes \cH_1)
)(\gamma_i - \gamma'_i)
}, \\
f_i 
& := &
\inprod{\delta_0}{
(
(\I^{\C^{D^t \times D^t}} \otimes \cH_k) \circ 
\ddcG^{\otimes t} \circ \cdots \circ \ddcG^{\otimes t}
(\I^{\C^{D^t \times D^t}} \otimes \cH_1)
)((\gamma'_i)^{\perp'} - (\gamma'_i)^\perp)
}, \\
d_i 
& := &
\inprod{(\delta'_{k-i-1})^{\perp}}{
(\I^{\C^{D^t \times D^t}} \otimes \cH_{i+1})
((\gamma'_i)^{\|'} - (\gamma'_i)^{\|})
}, \\
l_{i} 
& := &
\inprod{(\delta'_{k-i-1})^{\|}}{
(\I^{\C^{D^t \times D^t}} \otimes \cH_{i+1})
((\gamma'_i)^{\|'} - (\gamma'_i)^{\|})
}, \\
g_{ji}
& := &
\inprod{\delta_{k-j} - \delta'_{k-j}}{
(
(\I^{\C^{D^t \times D^t}} \otimes \cH_k) \circ 
\ddcG^{\otimes t} \circ \cdots \circ \ddcG^{\otimes t}
(\I^{\C^{D^t \times D^t}} \otimes \cH_{i+1})
)((\gamma'_i)^{\|'})
}, \\
h_{ji} 
& := &
\inprod{(\delta'_{k-j})^{\perp'} - (\delta'_{k-j})^{\perp}}{
(
(\I^{\C^{D^t \times D^t}} \otimes \cH_k) \circ 
\ddcG^{\otimes t} \circ \cdots \circ \ddcG^{\otimes t}
(\I^{\C^{D^t \times D^t}} \otimes \cH_1)
)((\gamma'_i)^{\|'})
}, \\
m_{ji} 
& := &
\inprod{(\delta'_{k-j})^{\|'} - (\delta'_{k-j})^{\|}}{
(
(\I^{\C^{D^t \times D^t}} \otimes \cH_j) \circ 
\ddcG^{\otimes t} \circ \cdots \circ \ddcG^{\otimes t}
(\I^{\C^{D^t \times D^t}} \otimes \cH_{i+1})
)((\gamma'_i)^{\|'})
}.
\end{eqnarray*}
Then,
\begin{eqnarray*}
|e_i| 
& \leq &
\elltwo{\gamma_i - \gamma'_i} \leq
2\sqrt{\frac{t(t-1)}{dd'}}
\elltwo{\gamma_i}, \\
|f_i| 
& \leq &
\elltwo{(\gamma'_i)^{\perp'} - (\gamma'_i)^\perp} \leq
2\sqrt[4]{\frac{t(t-1)}{dd'}}
\elltwo{(\gamma'_i)^{\perp'}} \leq
2\sqrt[4]{\frac{t(t-1)}{dd'}}
\elltwo{\gamma_i}, \\
|d_i| 
& \leq &
\elltwo{(\gamma'_i)^{\|'} - (\gamma'_i)^{\|}} \cdot
\elltwo{(\delta'_{k-i-1})^\perp} \leq
2\sqrt{\frac{t(t-1)}{dd'}}
\elltwo{(\gamma'_i)^{\|'}} \cdot
\elltwo{(\delta'_{k-i-1})^{\perp}} \\
& \leq &
\sqrt[4]{\frac{t(t-1)}{dd'}}
\left(
\elltwo{(\gamma'_i)^{\|'}}^2 + \elltwo{(\delta'_{k-i-1})^{\perp}}^2
\right), \\
|l_{i}| 
& \leq &
\elltwo{(\gamma'_i)^{\|'} - (\gamma'_i)^{\|}} \cdot
\elltwo{(\delta'_{k-i-1})^{\|}} \\
& \leq &
2\sqrt{\frac{t(t-1)}{dd'}}
\left(1 + 2\sqrt{\frac{t(t-1)}{dd'}}\right)
\elltwo{(\gamma'_i)^{\|'}} \cdot
\elltwo{(\delta'_{k-i-1})^{\|'}} \\
& \leq &
2\sqrt{\frac{t(t-1)}{dd'}}
(\elltwo{(\gamma'_i)^{\|'}}^2 + \elltwo{(\delta'_{k-i-1})^{\|'}}^2) \\
|g_{ji}| 
& \leq &
\elltwo{\delta_{k-j} - \delta'_{k-j}} \cdot \elltwo{(\gamma'_i)^{\|'}} 
\leq
2\sqrt{\frac{t(t-1)}{dd'}}
\elltwo{\delta_{k-j}} \cdot \elltwo{\gamma_i}, \\
|h_{ji}| 
& \leq &
\elltwo{(\delta'_{k-j})^{\perp'} - (\delta'_{k-j})^\perp} \cdot
\elltwo{(\gamma'_i)^{\|'}} \leq
2\sqrt[4]{\frac{t(t-1)}{dd'}}
\elltwo{(\delta'_{k-j})^{\perp'}} \cdot \elltwo{\gamma_i} \\
& \leq &
2\sqrt[4]{\frac{t(t-1)}{dd'}}
\elltwo{\delta_{k-j}} \cdot \elltwo{\gamma_i}, \\
|m_{ji}| 
& \leq &
\elltwo{(\delta'_{k-j})^{\|'} - (\delta'_{k-j})^{\|}} \cdot
\elltwo{(\gamma'_{i})^{\|'}} \leq
2\sqrt{\frac{t(t-1)}{dd'}}
\elltwo{(\delta'_{k-j})^{\|'}} \cdot
\elltwo{(\gamma'_{i})^{\|'}} \\
& \leq &
2\sqrt{\frac{t(t-1)}{dd'}}
\elltwo{\gamma_i} \cdot \elltwo{\delta_{k-j}}.
\end{eqnarray*}

We can now write
\begin{eqnarray*}
\lefteqn{
\inprod{\delta_0}{(\cG \zigzag \vcH)(\gamma_0)} 
} \\
& = &
\inprod{\delta_0}{
(
(\I^{\C^{D^t \times D^t}} \otimes \cH_k) \circ 
\ddcG^{\otimes t} \circ \cdots \circ \ddcG^{\otimes t}
(\I^{\C^{D^t \times D^t}} \otimes \cH_1)
)(\gamma_0)
}  \\
& = &
\inprod{\delta_0}{
(
(\I^{\C^{D^t \times D^t}} \otimes \cH_k) \circ 
\ddcG^{\otimes t} \circ \cdots \circ \ddcG^{\otimes t}
(\I^{\C^{D^t \times D^t}} \otimes \cH_1)
)(\gamma'_0)
} + e_0 \\
& =  &
\inprod{\delta_0}{
(
(\I^{\C^{D^t \times D^t}} \otimes \cH_k) \circ 
\ddcG^{\otimes t} \circ \cdots \circ \ddcG^{\otimes t}
(\I^{\C^{D^t \times D^t}} \otimes \cH_1)
)((\gamma'_0)^{\perp'})
} \\
&    &
{} +
\inprod{\delta_0}{
(
(\I^{\C^{D^t \times D^t}} \otimes \cH_k) \circ 
\ddcG^{\otimes t} \circ \cdots \circ \ddcG^{\otimes t}
(\I^{\C^{D^t \times D^t}} \otimes \cH_1)
)((\gamma'_0)^{\|'})
} + 
e_0 \\
& =  &
\inprod{\delta_0}{
(
(\I^{\C^{D^t \times D^t}} \otimes \cH_k) \circ 
\ddcG^{\otimes t} \circ \cdots \circ \ddcG^{\otimes t}
(\I^{\C^{D^t \times D^t}} \otimes \cH_1)
)((\gamma'_0)^{\perp})
} + f_0  \\
&    &
{} +
\inprod{\delta_0}{
(
(\I^{\C^{D^t \times D^t}} \otimes \cH_k) \circ 
\ddcG^{\otimes t} \circ \cdots \circ \ddcG^{\otimes t}
(\I^{\C^{D^t \times D^t}} \otimes \cH_1)
)((\gamma'_0)^{\|'})
} + 
e_0 \\
& =  &
\inprod{\delta_0}{
(
(\I^{\C^{D^t \times D^t}} \otimes \cH_k) \circ 
\ddcG^{\otimes t} \circ \cdots \circ \ddcG^{\otimes t}
(\I^{\C^{D^t \times D^t}} \otimes \cH_2)
)(\gamma_1)
} \\
&    &
{} +
\inprod{\delta_0}{
(
(\I^{\C^{D^t \times D^t}} \otimes \cH_k) \circ 
\ddcG^{\otimes t} \circ \cdots \circ \ddcG^{\otimes t}
(\I^{\C^{D^t \times D^t}} \otimes \cH_1)
)((\gamma'_0)^{\|'})
} + 
e_0 + f_0 \\
& =  &
\inprod{\delta_0}{
(\I^{\C^{D^t \times D^t}} \otimes \cH_k)((\gamma'_{k-1})^{\perp})
} \\
&    &
{} +
\sum_{0 \leq i \leq k-1}
\inprod{\delta_0}{
(
(\I^{\C^{D^t \times D^t}} \otimes \cH_k) \circ 
\ddcG^{\otimes t} \circ \cdots \circ \ddcG^{\otimes t}
(\I^{\C^{D^t \times D^t}} \otimes \cH_{i+1})
)((\gamma'_i)^{\|'})
} \\
&   &
{} + 
\sum_{0 \leq i \leq k-1}
(e_i + f_i) \\
& =  &
\inprod{\delta_0}{
(\I^{\C^{D^t \times D^t}} \otimes \cH_k)((\gamma'_{k-1})^{\perp})
} \\
&    &
{} +
\sum_{0 \leq i < j \leq k}
\inprod{(\delta'_{k-j})^{\|'}}{
(
(\I^{\C^{D^t \times D^t}} \otimes \cH_j) \circ 
\ddcG^{\otimes t} \circ \cdots \circ \ddcG^{\otimes t}
(\I^{\C^{D^t \times D^t}} \otimes \cH_{i+1})
)((\gamma'_i)^{\|'})
} \\
&    &
{} +
\sum_{0 \leq i \leq k-1}
\inprod{(\delta'_{k-i-1})^{\perp}}{
(
(\I^{\C^{D^t \times D^t}} \otimes \cH_{i+1})
)((\gamma'_i)^{\|'})
} \\
&   &
{} + 
\sum_{0 \leq i < k}
(e_i + f_i) +
\sum_{0 \leq i < j \leq k}
(g_{ji} + h_{ji}) \\
& =  &
\inprod{\delta_0}{
(\I^{\C^{D^t \times D^t}} \otimes \cH_k)((\gamma'_{k-1})^{\perp})
} \\
&    &
{} +
\sum_{0 \leq i < i+1 < j \leq k}
\inprod{(\delta'_{k-j})^{\|}}{
(
(\I^{\C^{D^t \times D^t}} \otimes \cH_j) \circ 
\ddcG^{\otimes t} \circ \cdots \circ \ddcG^{\otimes t}
(\I^{\C^{D^t \times D^t}} \otimes \cH_{i+1})
)((\gamma'_i)^{\|'})
} \\
&    &
{} +
\sum_{0 \leq i < k}
\inprod{(\delta'_{k-i-1})^{\|}}{
(
(\I^{\C^{D^t \times D^t}} \otimes \cH_{i+1})
)((\gamma'_i)^{\|})
} \\
&    &
{} +
\sum_{0 \leq i \leq k-1}
\inprod{(\delta'_{k-i-1})^{\perp}}{
(
(\I^{\C^{D^t \times D^t}} \otimes \cH_{i+1})
)((\gamma'_i)^{\|})
} \\
&   &
{} + 
\sum_{0 \leq i < k}
(e_i + f_i + d_i +l_i)  +
\sum_{0 \leq i < j \leq k}
(g_{ji} + h_{ji} +  m_{ji}) \\
& =  &
\inprod{\delta_0}{
(\I^{\C^{D^t \times D^t}} \otimes \cH_k)((\gamma'_{k-1})^{\perp})
} \\
&    &
{} +
\sum_{0 \leq i < i+1 < j \leq k}
\inprod{(\delta'_{k-j})^{\|}}{
(\ddcG^{\otimes t} \circ 
(\I^{\C^{D^t \times D^t}} \otimes \cH_{j-1}) 
\circ \cdots \circ 
\ddcG^{\otimes t} \circ
(\I^{\C^{D^t \times D^t}} \otimes \cH_{i+1})
)
((\gamma'_i)^{\|'})
} \\
&    &
{} +
\sum_{0 \leq i \leq k-1}
\inprod{(\delta'_{k-i-1})^{\|}}{(\gamma'_i)^{\|})} \\
&    &
{} + 0 + 
\sum_{0 \leq i < k}
(e_i + f_i + d_i + l_i)  +
\sum_{0 \leq i < j \leq k}
(g_{ji} + h_{ji} + m_{ji}).
\end{eqnarray*}
We will now bound each of the six terms in the last equality.

We start by bounding the fifth term as follows. 
\[
\sum_{0 \leq i < k} |e_i| \leq
2\sqrt{\frac{t(t-1)}{dd'}}
\sum_{0 \leq i < k} \elltwo{\gamma_i} \leq
2\sqrt{\frac{t(t-1)}{dd'}}
\sum_{0 \leq i < k} \lambda_2^i \leq
4\sqrt{\frac{t(t-1)}{dd'}}.
\]
Similarly,
\[
\sum_{0 \leq i < k} |f_i| \leq
4\sqrt[4]{\frac{t(t-1)}{dd'}}.
\]
Next,
\[
\sum_{0 \leq i < k} |d_i| \leq
\sqrt[4]{\frac{t(t-1)}{dd'}}
\sum_{0 \leq i < k} 
\left(
\elltwo{(\gamma'_i)^{\|'}}^2 + \elltwo{(\delta'_{k-i-1})^{\perp}}^2
\right) \leq
3 \sqrt[4]{\frac{t(t-1)}{dd'}}.
\]
Similarly,
\[
\sum_{0 \leq i < k} |l_i| \leq
4\sqrt{\frac{t(t-1)}{dd'}}.
\]
Hence,
\[
\sum_{0 \leq i < k} (|e_i| + |f_i| + |d_i| + |l_i|) \leq
15 \sqrt[4]{\frac{t(t-1)}{dd'}}.
\]

We now bound the sixth term as follows.
\[
\sum_{0 \leq i < j \leq k} |g_{ji}| \leq
2 \sqrt{\frac{t(t-1)}{dd'}}
\sum_{0 \leq i < j \leq k} 
\elltwo{\delta_{k-j}} \cdot \elltwo{\gamma_i} \leq
4 \sqrt{\frac{t(t-1)}{dd'}}
\sum_{0 \leq i < k} \elltwo{\gamma_i} \leq
8 \sqrt{\frac{t(t-1)}{dd'}}.
\]
Similarly,
\[
\sum_{0 \leq i < j \leq k} |h_{ji}| \leq
8 \sqrt[4]{\frac{t(t-1)}{dd'}}, 
~~~~
\sum_{0 \leq i < j \leq k} |m_{ji}| \leq
8 \sqrt{\frac{t(t-1)}{dd'}}.
\]
Hence,
\[
\sum_{0 \leq i < j \leq k} 
(|g_{ji}| + |h_{ji}| + |m_{ji}|) \leq 
24 \sqrt[4]{\frac{t(t-1)}{dd'}}. 
\]

Next we bound the first term as follows.
Observe that
\begin{eqnarray*}
\lefteqn{
|
\inprod{\delta_0}{
(\I^{\C^{D^t \times D^t}} \otimes \cH_k)((\gamma'_{k-1})^{\perp})
}
|
} \\
& \leq &
\elltwo{
(\I^{\C^{D^t \times D^t}} \otimes \cH_k)((\gamma'_{k-1})^{\perp})
} 
\;\leq\;
\lambda_2 \elltwo{(\gamma'_{k-1})^{\perp}} 
\;\leq\;
\lambda_2 \elltwo{(\gamma'_{k-1})^{\perp'}} \;\leq\;
\lambda_2 \elltwo{\gamma'_{k-1}} 
\;\leq\;
\lambda_2^k.
\end{eqnarray*}

We now bound the third term as follows. The proof is very similar
to that of \cite[Lemma~16]{BT}. We give it below for completeness.
\begin{eqnarray*}
\lefteqn{
\sum_{0 \leq i \leq k-1}
\inprod{(\delta'_{k-i-1})^{\|}}{(\gamma'_{i})^{\|}}
} \\
& \leq &
\sum_{0 \leq i \leq k-1}
\elltwo{(\delta'_{k-i-1})^{\|}} \cdot
\elltwo{(\gamma'_{i})^{\|}} \\
&   =  &
\lambda_2^{k-1}
\sum_{0 \leq i \leq k-1}
\lambda_2^{-(k-i-1)}\elltwo{(\delta'_{k-i-1})^{\|}} \cdot
\lambda_2^{-i}\elltwo{(\gamma'_{i})^{\|}} \\
& \leq &
\frac{\lambda_2^{k-1}}{2}
\left(
\sum_{0 \leq i \leq k-1}
\lambda_2^{-2i}\elltwo{(\delta'_{i})^{\|}}^2 +
\sum_{0 \leq i \leq k-1}
\lambda_2^{-2i}\elltwo{(\gamma'_{i})^{\|}}^2 
\right).
\end{eqnarray*}
Now,
\[
\sum_{0 \leq i \leq k-1}
\lambda_2^{-2i}\elltwo{(\delta'_{i})^{\|}}^2
\leq
\sum_{0 \leq i \leq k-1}
\lambda_2^{-2i}\elltwo{(\delta'_{i})^{\|'}}^2 
\left(
1 + 2 \sqrt{\frac{t(t-1)}{dd'}}
\right)^2.
\]
By induction, we observe that
\begin{eqnarray*}
\lefteqn{
\sum_{0 \leq i \leq k-1}
\lambda_2^{-2i}\elltwo{(\delta'_{i})^{\|'}}^2
} \\
& \leq &
\sum_{0 \leq i \leq k-1}
\lambda_2^{-2i}\elltwo{(\delta'_{i})^{\|'}}^2 + 
\lambda_2^{-2(k-1)}\elltwo{(\delta'_{k-1})^{\perp'}}^2 
\;=\;
\sum_{0 \leq i \leq k-2}
\lambda_2^{-2i}\elltwo{(\delta'_{i})^{\|'}}^2 +
\lambda_2^{-2(k-1)}\elltwo{\delta'_{k-1}}^2 \\
& \leq &
\sum_{0 \leq i \leq k-2}
\lambda_2^{-2i}\elltwo{(\delta'_{i})^{\|'}}^2 +
\lambda_2^{-2(k-2)}\elltwo{(\delta'_{k-2})^{\perp'}}^2 
\;\leq\;
\elltwo{\delta'_{0}}^2 
\;\leq\;
\elltwo{\delta_{0}}^2 
\; = \;
1.
\end{eqnarray*}
Thus,
\[
\sum_{0 \leq i \leq k-1}
\lambda_2^{-2i}\elltwo{(\delta'_{i})^{\|}}^2
\leq
\left(
1 + 2 \sqrt{\frac{t(t-1)}{dd'}}
\right)^2.
\]
Similarly,
\[
\sum_{0 \leq i \leq k-1}
\lambda_2^{-2i}\elltwo{(\gamma'_{i})^{\|}}^2 \leq 
\left(
1 + 2 \sqrt{\frac{t(t-1)}{dd'}}
\right)^2.
\]
Thus,
\[
\sum_{0 \leq i \leq k-1}
\inprod{(\delta'_{k-i-1})^{\|}}{(\gamma'_{i})^{\|}}
 \leq
\lambda_2^{k-1}
\left(
1 + 2 \sqrt{\frac{t(t-1)}{dd'}}
\right)^2.
\]

We now bound the second term as follows: Fix
\[
(\C^{D^t \times D^t} \otimes W'_2) \cap (W')^\perp \ni
(\gamma'_i)^{\|'} =
\sum_\sigma b_\sigma 
(\Sigma^{(\C^D)^{\otimes t}} \otimes \Sigma^{(\C^{dd'})^{\otimes t}})
(\beta_1)_\sigma \otimes (\alpha'_2),
\]
\[
(\C^{D^t \times D^t} \otimes W'_2) \cap (W')^\perp \ni
(\delta'_{k-j})^{\|'} =
\sum_\sigma c_\sigma 
(\Sigma^{(\C^D)^{\otimes t}} \otimes \Sigma^{(\C^{dd'})^{\otimes t}})
(\theta_1)_\sigma \otimes (\alpha'_2),
\]
where 
$\elltwo{(\beta_1)_\sigma} = 1$,
$\elltwo{(\theta_1)_\sigma} = 1$,
$b_\sigma, c_\sigma \in \C$.
Let $b$ be the $t!$-tuple whose $\sigma$th entry
is $|b_\sigma|$.
Let $c$ be the $t!$-tuple whose $\sigma$th entry
is $|c_\sigma|$.
By Lemma~\ref{lem:perm},
\[
\elltwo{(\gamma'_i)^{\|'}}^2 = 
\sum_\sigma |b_\sigma|^2 \elltwo{(\alpha'_2)_\alpha}^2 \geq
\elltwo{b}^2 \left(1 - \frac{t(t-1)}{2dd'}\right),
\]
which gives
$
\elltwo{b}^2 \leq 
\frac{\elltwo{(\gamma'_i)^{\|'}}^2}{1 - \frac{t(t-1)}{2dd'}}.
$
Similarly,
$
\elltwo{c}^2 \leq 
\frac{\elltwo{(\delta'_{k-j})^{\|'}}^2}{1 - \frac{t(t-1)}{2dd'}}.
$
Define the matrix
\[
\C^{D^t \times D^t} \otimes W_2 \ni
(\delta'_{k-j})^{\|} :=
\sum_\sigma c_\sigma 
(\Sigma^{(\C^D)^{\otimes t}} \otimes \Sigma^{(\C^{dd'})^{\otimes t}})
((\theta_1)_\sigma \otimes (\alpha_2)).
\]
By Lemma~\ref{lem:subspaces},
\[
\elltwo{(\delta'_{k-j})^{\|'} - (\delta'_{k-j})^{\|}} \leq 
2\sqrt{\frac{t(t-1)}{dd'}} \elltwo{(\delta'_{k-j})^{\|'}}.
\]
Moreover, as $(\gamma'_i)^{\|'} \in (W')^\perp$, 
\[
0 =
\inprod{(\gamma'_i)^{\|'}}{
(\Sigma^{(\C^D)^{\otimes t}} \otimes \Sigma^{(\C^{dd'})^{\otimes t}})
(\alpha_1 \otimes \alpha'_2)
} =
b_\sigma^*
\inprod{(\beta_1)_\sigma}{\alpha_1} \cdot
\inprod{\alpha'_2}{\alpha'_2},
\]
implying that
$
\inprod{(\beta_1)_\sigma}{\alpha_1} = 0
$
for all $\sigma \in S_t$. 

Observe that
\begin{eqnarray*}
\lefteqn{
|
\inprod{(\delta'_{k-j})^{\|}}{
(\ddcG^{\otimes t} \circ 
(\I^{\C^{D^t \times D^t}} \otimes \cH_{j-1}) 
\circ \cdots \circ 
\ddcG^{\otimes t} \circ
(\I^{\C^{D^t \times D^t}} \otimes \cH_{i+1})
)
((\gamma'_i)^{\|'})
}
|
} \\
& = &
\left|
\sum_{\sigma \neq \sigma'} c^*_{\sigma'} b_\sigma
\langle
((\Sigma')^{(\C^D)^{\otimes t}} \otimes (\Sigma')^{(\C^{dd'})^{\otimes t}})
((\theta_1)_{\sigma'} \otimes \alpha_2), 
\right. \\
&   &
~~~~~
(\Sigma^{(\C^D)^{\otimes t}} \otimes \Sigma^{(\C^{dd'})^{\otimes t}})
(
(
\ddcG^{\otimes t} \circ 
(\I^{\C^{D^t \times D^t}} \otimes \cH_{j-1}) 
\circ \cdots \circ 
\ddcG^{\otimes t} \circ
(\I^{\C^{D^t \times D^t}} \otimes \cH_{i+1})
)
((\beta_1)_\sigma \otimes \alpha'_2)
) \rangle \\
&    &
{} +
\sum_{\sigma} c^*_{\sigma} b_\sigma
\langle
(\Sigma^{(\C^D)^{\otimes t}} \otimes \Sigma^{(\C^{dd'})^{\otimes t}})
((\theta_1)_{\sigma} \otimes \alpha_2), \\
&   &
~~~~~
\left.
(\Sigma^{(\C^D)^{\otimes t}} \otimes \Sigma^{(\C^{dd'})^{\otimes t}})
(
(
\ddcG^{\otimes t} \circ 
(\I^{\C^{D^t \times D^t}} \otimes \cH_{j-1}) 
\circ \cdots \circ 
\ddcG^{\otimes t} \circ
(\I^{\C^{D^t \times D^t}} \otimes \cH_{i+1})
)
((\beta_1)_\sigma \otimes \alpha'_2)
) \rangle
\right| \\
& \leq &
\sum_{\sigma \neq \sigma'} |c_{\sigma'}| |b_\sigma|
\left|
\langle
((\Sigma')^{(\C^D)^{\otimes t}} \otimes (\Sigma')^{(\C^{dd'})^{\otimes t}})
((\theta_1)_{\sigma'} \otimes \alpha_2), 
\right. \\
&   &
~~~~~
\left.
(\Sigma^{(\C^D)^{\otimes t}} \otimes \Sigma^{(\C^{dd'})^{\otimes t}})
(
(
\ddcG^{\otimes t} \circ 
(\I^{\C^{D^t \times D^t}} \otimes \cH_{j-1}) 
\circ \cdots \circ 
\ddcG^{\otimes t} \circ
(\I^{\C^{D^t \times D^t}} \otimes \cH_{i+1})
)
((\beta_1)_\sigma \otimes \alpha'_2)
) \rangle
\right| \\
&    &
{} +
\sum_{\sigma} |c_{\sigma}| |b_\sigma|
\left|
\langle
(\Sigma^{(\C^D)^{\otimes t}} \otimes \Sigma^{(\C^{dd'})^{\otimes t}})
((\theta_1)_{\sigma} \otimes \alpha_2), 
\right. \\
&   &
~~~~~
\left.
(\Sigma^{(\C^D)^{\otimes t}} \otimes \Sigma^{(\C^{dd'})^{\otimes t}})
(
(
\ddcG^{\otimes t} \circ 
(\I^{\C^{D^t \times D^t}} \otimes \cH_{j-1}) 
\circ \cdots \circ 
\ddcG^{\otimes t} \circ
(\I^{\C^{D^t \times D^t}} \otimes \cH_{i+1})
)
((\beta_1)_\sigma \otimes \alpha'_2)
) \rangle
\right|.
\end{eqnarray*}

Fix a $\sigma \in S_t$. 
Let $1 \leq l \leq k-1$ be a positive integer  
Fix a $\frac{\epsilon}{8 t d^k}$-good sequence of unitary 
maps $(U_l, \ldots, U_l)$ on
$\C^d \otimes \C^{d'} \cong \C^{dd'}$. 
Let the corresponding unitary superoperators 
tensored with the identity superoperator on $\C^{D \times D}$
be denoted by $\dU_l, \ldots, \dU_1$. 
Let $x_0 := e_{\vec{i_0}} \otimes e'_{\vec{j_0}}$ denote a computational
basis vector of
$
(\C^d)^{\otimes t} \otimes (\C^{d'})^{\otimes t} \cong \C^{(dd')^t}
$.
We now calculate the matrix
\begin{eqnarray*}
\lefteqn{
(
\ddcG^{\otimes t} \circ \dU_l^{\otimes t} 
\circ \cdots \circ 
\ddcG^{\otimes t} \circ
\dU_1^{\otimes t }
)
(
(\Sigma^{(\C^D)^{\otimes t}} \otimes \Sigma^{(\C^{dd'})^{\otimes t}})
((\beta_1)_\sigma \otimes (x_0 x_0^\dag))
)
} \\
& = &
(\Sigma^{(\C^D)^{\otimes t}} \otimes \Sigma^{(\C^{dd'})^{\otimes t}})
(
(
\ddcG^{\otimes t} \circ \dU_l^{\otimes t} 
\circ \cdots \circ 
\ddcG^{\otimes t} \circ
\dU_1^{\otimes t} 
)
((\beta_1)_\sigma \otimes (x_0 x_0^\dag))
)
\end{eqnarray*}
Fix two sequences of computational basis vectors 
$(e_{\vec{i_l}}, \ldots, e_{\vec{i_1}})$,
$(e_{\vec{i'_l}}, \ldots, e_{\vec{i'_1}})$ of $\C^d$.
Starting from computational basis vector $x_0$ of $\C^{(dd')^t}$,
define the two sequences of vectors 
$(x_l, \ldots, x_1)$, $(x'_l, \ldots, x'_1)$ of $\C^{(dd')^t}$
accordingly, as in Section~\ref{subsec:epsgood}. Let 
$p(\vec{i_l}, \ldots, \vec{i_1})$,
$p(\vec{i'_l}, \ldots, \vec{i'_1})$ denote the probabilities of
obtaining the corresponding sequences of outcomes on measuring 
$\C^{d^t}$ in its computational basis. 
For a computational basis vector $\vec{i}$ of $\C^{d^t}$, let 
$V^{\otimes \vec{i}}$ be the corresponding tensor product of 
unitary operators on $\C^D$ arising from the $1$-qTPE $\cG$ 
(note that the unitary operators of $\cG$ are indexed by the computational
basis vectors of $\C^d$). Let 
$
(
V^{\otimes \vec{i_l}}, \ldots, V^{\otimes \vec{i_1}}, 
V^{\otimes \vec{i_0}}
),
$
$
(
(V^\dag)^{\otimes \vec{i'_l}}, \ldots, (V^\dag)^{\otimes \vec{i'_1}}, 
(V^\dag)^{\otimes \vec{i_0}}
),
$
be corresponding sequences of unitary maps on $(\C^D)^{\otimes t}$.
Then 
\begin{eqnarray*}
\lefteqn{
(
\ddcG^{\otimes t} \circ \dU_l^{\otimes t} 
\circ \cdots \circ 
\ddcG^{\otimes t} \circ
\dU_1^{\otimes t} 
)
((\beta_1)_\sigma \otimes (x_0 x_0^\dag))
} \\
& = &
\sum_{
\begin{array}{c}
(\vec{i_l}, \ldots, \vec{i_1}) \\
(\vec{i'_l}, \ldots, \vec{i'_1})
\end{array}
}
\sqrt{
p(\vec{i_l}, \ldots, \vec{i_1})
p(\vec{i'_l}, \ldots, \vec{i'_1})
} \\
&   &
~~~~~~~~~
(
(V^{\otimes \vec{i_l}} \cdots V^{\otimes \vec{i_1}})
(\beta_1)_\sigma
((V^\dag)^{\otimes \vec{i'_1}} \cdots (V^\dag)^{\otimes \vec{i'_l}}
)
) \otimes 
(x_l (x'_l)^\dag).
\end{eqnarray*}

Let $\sigma' \in S_t$, $\sigma' \neq \sigma$. 
Suppose all the $t$ entries of $(\vec{i_0}, \vec{j_0})$ are distinct.
Then 
\begin{eqnarray*}
\lefteqn{
|
\langle
((\Sigma')^{(\C^D)^{\otimes t}} \otimes (\Sigma')^{(\C^{dd'})^{\otimes t}})
((\theta_1)_{\sigma'} \otimes \alpha_2),
} \\
&   &
~~~~~
(\Sigma^{(\C^D)^{\otimes t}} \otimes \Sigma^{(\C^{dd'})^{\otimes t}})
(
(
\ddcG^{\otimes t} \circ \dU_l^{\otimes t} 
\circ \cdots \circ 
\ddcG^{\otimes t} \circ
\dU_1^{\otimes t}
)
((\beta_1)_\sigma \otimes (x_0 x_0^\dag))
) \rangle
| \\
& \leq &
\sum_{
\begin{array}{c}
(\vec{i_l}, \ldots, \vec{i_1}) \\
(\vec{i'_l}, \ldots, \vec{i'_1})
\end{array}
}
\sqrt{
p(\vec{i_l}, \ldots, \vec{i_1})
p(\vec{i'_l}, \ldots, \vec{i'_1})
} \\
&   &
~~~~~
|
\inprod{(\theta_1)_{\sigma'}}{
((\Sigma')^{-1} \Sigma)^{(\C^{D})^{\otimes t}}
(
(V^{\otimes \vec{i_l}} \cdots V^{\otimes \vec{i_1}})
(\beta_1)_\sigma
((V^\dag)^{\otimes \vec{i'_1}} \cdots (V^\dag)^{\otimes \vec{i'_l}})
)
} 
| \\
&    &
~~~~~~~~
{} \cdot
|
\inprod{\alpha_2}{
((\Sigma')^{-1} \Sigma)^{(\C^{dd'})^{\otimes t}}(x_l (x'_l)^\dag)
}
| \\
& \leq &
\sum_{
\begin{array}{c}
(\vec{i_l}, \ldots, \vec{i_1}) \\
(\vec{i'_l}, \ldots, \vec{i'_1})
\end{array}
}
\sqrt{
p(\vec{i_l}, \ldots, \vec{i_1})
p(\vec{i'_l}, \ldots, \vec{i'_1})
} \; 
|
\inprod{\alpha_2}{
((\Sigma')^{-1} \Sigma)^{(\C^{dd'})^{\otimes t}}(x_l (x'_l)^\dag)
}
| \\
&   =  &
(dd')^{-t/2}
\sum_{
\begin{array}{c}
(\vec{i_l}, \ldots, \vec{i_1}) \\
(\vec{i'_l}, \ldots, \vec{i'_1})
\end{array}
}
\sqrt{
p(\vec{i_l}, \ldots, \vec{i_1})
p(\vec{i'_l}, \ldots, \vec{i'_1})
} \; 
|
\inprod{x'_l}{
((\Sigma')^{-1} \Sigma)^{(\C^{dd'})^{\otimes t}}(x_l)
}
|.
\end{eqnarray*}
Let $\{\} \neq T \subseteq [t]$ be the set of coordinates are 
not mapped to themselves by the permutation
$(\sigma')^{-1} \sigma$. Let $T'$ be a subset of $[t]$ denoting
the coordinates where  sequence $(\vec{i_l}, \ldots,  \vec{i_1})$
disagrees with the sequence $(\vec{i'_l}, \ldots,  \vec{i'_1})$. 
Define $T'' := [t] \setminus (T \cup T')$. We use the notation
$(\vec{i_l}, \ldots,  \vec{i_1})_{T''}$ to denote the sequence
restricted to the coordinates in $T''$, and the notation
$
(\vec{i_l}, \ldots, \vec{i_1})_{T \cup T'} 
\not{\equiv} (\vec{i'_l}, \ldots, \vec{i'_1})_{T \cup T'}
$
to denote that the two sequences disagree on every coordinate in 
$T'$. Thus, we can bound the above quantity by
\begin{eqnarray*}
\lefteqn{
|
\langle
((\Sigma')^{(\C^D)^{\otimes t}} \otimes (\Sigma')^{(\C^{dd'})^{\otimes t}})
((\theta_1)_{\sigma'} \otimes \alpha_2),
} \\
&   &
~~~~~
(\Sigma^{(\C^D)^{\otimes t}} \otimes \Sigma^{(\C^{dd'})^{\otimes t}})
(
(
\ddcG^{\otimes t} \circ \dU_l^{\otimes t} 
\circ \cdots \circ 
\ddcG^{\otimes t} \circ
\dU_1^{\otimes t} 
)
((\beta_1)_\sigma \otimes (x_0 x_0^\dag))
) \rangle
| \\
& \leq &
(dd')^{-t/2}
\sum_{T'': T'' \cap T = \{\}}
\sum_{(\vec{i_l}, \ldots, \vec{i_1})_{T''}}
\sum_{
\begin{array}{c}
(\vec{i_l}, \ldots, \vec{i_1})_{T \cup T'} \\
{} \not{\equiv} (\vec{i'_l}, \ldots, \vec{i'_1})_{T \cup T'}
\end{array}
} \\
&   &
~~~~~
p((\vec{i_l}, \ldots, \vec{i_1})_{T''})
\sqrt{
p(
(\vec{i_l}, \ldots, \vec{i_1})_{T \cup T''} | 
(\vec{i_l}, \ldots, \vec{i_1})_{T''}
)
p(
(\vec{i'_l}, \ldots, \vec{i'_1})_{T \cup T''} | 
(\vec{i_l}, \ldots, \vec{i_1})_{T''}
)
} \\
&   &
~~~~~~~~
|
\inprod{x'_l}{
((\Sigma')^{-1} \Sigma)^{(\C^{dd'})^{\otimes t}}(x_l)
}
| \\
& \leq &
(dd')^{-t/2}
\sum_{T'': T'' \cap T = \{\}}
\sum_{(\vec{i_l}, \ldots, \vec{i_1})_{T''}}
p((\vec{i_l}, \ldots, \vec{i_1})_{T''}) \,
d^{l |T \cup T'|}
\left(\frac{\epsilon}{t d^k}\right)^{|T \cup T'|} \\
&   =  &
(dd')^{-t/2}
\sum_{T'': T'' \cap T = \{\}}
(t^{-1} \epsilon)^{|[t] \setminus T''|} 
\;  = \;
(dd')^{-t/2}
(t^{-1} \epsilon)^{|T|} (1 + t^{-1} \epsilon)^{t - |T|} \\
& \leq &
(dd')^{-t/2} (t^{-1} \epsilon)^{t - f_{(\sigma')^{-1} \sigma}}
e^\epsilon 
\;\leq\;
2 (dd')^{-t/2} (t^{-1} \epsilon)^{t - f_{(\sigma')^{-1} \sigma}},
\end{eqnarray*}
where $f_{(\sigma')^{-1} \sigma}$ denotes the number of fixed
points of the permuation $(\sigma')^{-1} \sigma$ and the second
inequality follows from the fact that the sequence $(U_l, \ldots, U_1)$ is
$\frac{\epsilon}{8 t d^k}$-good and so it maps orthogonal states to
almost orthogonal states.

When $\sigma' = \sigma$, the set of unfixed points $T = \{\}$. 
Hence we can upper bound
the corresponding quantity by
\begin{eqnarray*}
\lefteqn{
|
\langle
((\Sigma)^{(\C^D)^{\otimes t}} \otimes (\Sigma)^{(\C^{dd'})^{\otimes t}})
((\theta_1)_{\sigma} \otimes \alpha_2),
} \\
&   &
~~~~~
(\Sigma^{(\C^D)^{\otimes t}} \otimes \Sigma^{(\C^{dd'})^{\otimes t}})
(
(
\ddcG^{\otimes t} \circ \dU_l^{\otimes t} 
\circ \cdots \circ 
\ddcG^{\otimes t} \circ
\dU_1^{\otimes t}
)
((\beta_1)_\sigma \otimes (x_0 x_0^\dag))
) \rangle
| \\
&   =  &
\left|
\sum_{
\begin{array}{c}
(\vec{i_l}, \ldots, \vec{i_1}) \\
(\vec{i'_l}, \ldots, \vec{i'_1})
\end{array}
}
\sqrt{
p(\vec{i_l}, \ldots, \vec{i_1})
p(\vec{i'_l}, \ldots, \vec{i'_1})
} 
\right. \\
&   &
~~~~~~~~~~~
\left.
\inprod{(\theta_1)_{\sigma}}{
(V^{\otimes \vec{i_l}} \cdots V^{\otimes \vec{i_1}})
(\beta_1)_\sigma
((V^\dag)^{\otimes \vec{i'_1}} \cdots (V^\dag)^{\otimes \vec{i'_l}})
} 
\inprod{\alpha_2}{x_l (x'_l)^\dag}
\right| \\
&   =  &
(dd')^{-t/2}
\left|
\sum_{
\begin{array}{c}
(\vec{i_l}, \ldots, \vec{i_1}) \\
(\vec{i'_l}, \ldots, \vec{i'_1})
\end{array}
}
\sqrt{
p(\vec{i_l}, \ldots, \vec{i_1})
p(\vec{i'_l}, \ldots, \vec{i'_1})
} 
\right. \\
&   &
~~~~~~~~~~~~~~~~~~~~~~~~
\left.
\inprod{(\theta_1)_{\sigma}}{
(V^{\otimes \vec{i_l}} \cdots V^{\otimes \vec{i_1}})
(\beta_1)_\sigma
((V^\dag)^{\otimes \vec{i'_1}} \cdots (V^\dag)^{\otimes \vec{i'_l}})
} 
\inprod{x'_l}{x_l}
\right| \\
& \leq &
(dd')^{-t/2}
\left|
\sum_{
(\vec{i_l}, \ldots, \vec{i_1}) 
}
p(\vec{i_l}, \ldots, \vec{i_1}) 
\right. \\
&   &
~~~~~~~~~~~~~~~~~~~~~~~~
\left.
\inprod{(\theta_1)_{\sigma}}{
(V^{\otimes \vec{i_l}} \cdots V^{\otimes \vec{i_1}})
(\beta_1)_\sigma
((V^\dag)^{\otimes \vec{i_1}} \cdots (V^\dag)^{\otimes \vec{i_l}})
} 
\inprod{x_l}{x_l}
\right| \\
&      &
{} +
(dd')^{-t/2}
\sum_{T'': T'' \neq [t]}
\sum_{(\vec{i_l}, \ldots, \vec{i_1})_{T''}}
\sum_{
\begin{array}{c}
(\vec{i_l}, \ldots, \vec{i_1})_{[t] \setminus T''} \\
{} \not{\equiv} (\vec{i'_l}, \ldots, \vec{i'_1})_{[t] \setminus T''}
\end{array}
} \\
&   &
~~~~~~~~~~
p((\vec{i_l}, \ldots, \vec{i_1})_{T''})
\sqrt{
p(
(\vec{i_l}, \ldots, \vec{i_1})_{[t] \setminus T''} | 
(\vec{i_l}, \ldots, \vec{i_1})_{T''}
)
p(
(\vec{i'_l}, \ldots, \vec{i'_1})_{[t] \setminus T''} | 
(\vec{i_l}, \ldots, \vec{i_1})_{T''}
)
} \\
&   &
~~~~~~~~~~~~~
|\inprod{x'_l}{x_l}| \\
& \leq &
(dd')^{-t/2}
\left|
\sum_{
(\vec{i_l}, \ldots, \vec{i_1}) 
}
p(\vec{i_l}, \ldots, \vec{i_1}) 
\right. \\
&   &
~~~~~~~~~~~~~~~~~~~~~~~~
\left.
|
\inprod{(\theta_1)_{\sigma}}{
(V^{\otimes \vec{i_l}} \cdots V^{\otimes \vec{i_1}})
(\beta_1)_\sigma
((V^\dag)^{\otimes \vec{i_1}} \cdots (V^\dag)^{\otimes \vec{i_l}})
} 
\right| \\
&   &
{} +
(dd')^{-t/2}
\sum_{T'': T'' \neq [t]}
\sum_{(\vec{i_l}, \ldots, \vec{i_1})_{T''}}
p((\vec{i_l}, \ldots, \vec{i_1})_{T''}) \,
d^{l |[t] \setminus T''|}
\left(\frac{\epsilon}{t d^k}\right)^{|[t] \setminus T''|} \\
& \leq &
(dd')^{-t/2}
\left|
\sum_{
(\vec{i_l}, \ldots, \vec{i_1}) 
} 
d^{-lt} 
\inprod{(\theta_1)_{\sigma}}{
(V^{\otimes \vec{i_l}} \cdots V^{\otimes \vec{i_1}})
(\beta_1)_\sigma
((V^\dag)^{\otimes \vec{i_1}} \cdots (V^\dag)^{\otimes \vec{i_l}})
} 
\right| \\
&    &
{} +
(dd')^{-t/2} 
\sum_{
(\vec{i_l}, \ldots, \vec{i_1}) 
} 
| p(\vec{i_l}, \ldots, \vec{i_1}) - d^{-lt}|  \\
&      &
+
(dd')^{-t/2}
\sum_{T'': T'' \neq [t]}
(t^{-1} \epsilon)^{|[t] \setminus T''|} \\
& \leq &
(dd')^{-t/2}
|
\inprod{(\theta_1)_{\sigma}}{
(\cG^{\otimes t})^l
((\beta_1)_\sigma)
}  
| + 
(dd')^{-t/2}
\frac{3 \epsilon}{8 t d^k} l t
 +
(dd')^{-t/2}
((1 + t^{-1} \epsilon)^{t} - 1) \\
& \leq &
(dd')^{-t/2} \elltwo{(\cG^{\otimes t})^l((\beta_1)_\sigma)} +
(dd')^{-t/2} 
\frac{3 k \epsilon}{8 d^k} +
(dd')^{-t/2} (e^{\epsilon} - 1) \\
& \leq &
(dd')^{-t/2} \lambda_1^l +
3 (dd')^{-t/2} \epsilon.
\end{eqnarray*}
Above, we used the fact that the sequence $(U_l, \ldots, U_1)$ is
$\frac{\epsilon}{8 t d^k}$-good in the second and fourth inequalities
which implies that orthogonal states get mapped to almost orthogonal
states and the measurment results are almost uniform. We also
used the triangle inequality for the $\ell_1$-distance in the
fourth inequality. For the sixth inequality, we used the observation that
$\cG^{\otimes t}$ is a $1$-qTPE by Fact~\ref{fact:tensor}, 
Fact~\ref{fact:squaring} and the fact that
$\inprod{(\beta_1)_\sigma}{\alpha_1} = 0$ which was proved earlier.

We now evaluate
\begin{eqnarray*}
\lefteqn{
\sum_{\sigma \neq \sigma'} |c_{\sigma'}| |b_{\sigma}|
|
\langle
((\Sigma')^{(\C^D)^{\otimes t}} \otimes (\Sigma')^{(\C^{dd'})^{\otimes t}})
((\theta_1)_{\sigma'} \otimes \alpha_2),
} \\
&   &
~~~~~
(\Sigma^{(\C^D)^{\otimes t}} \otimes \Sigma^{(\C^{dd'})^{\otimes t}})
(
(
\ddcG^{\otimes t} \circ \dU_l^{\otimes t} 
\circ \cdots \circ 
\ddcG^{\otimes t} \circ
\dU_1^{\otimes t} 
)
((\beta_1)_\sigma \otimes \alpha'_2)
) \rangle
| \\
& \leq &
\sum_{\sigma \neq \sigma'} |c_{\sigma'}| |b_{\sigma}|
2 (t^{-1} \epsilon)^{t - f_{(\sigma')^{-1} \sigma}} \\
&    = &
2 c^\dagger N b 
\;\leq\;
2 \ellinfty{N} \elltwo{c} \elltwo{b}  \\
& \leq &
8 \epsilon^2 
\elltwo{(\gamma'_i)^{\|'}} \elltwo{(\delta'_{k-j})^{\|'}},
\end{eqnarray*}
where we used Lemma~\ref{lem:fixedpoints} in the second inequality.
Similarly,
\begin{eqnarray*}
\lefteqn{
\sum_{\sigma} |c_{\sigma}| |b_{\sigma}|
|
\langle
((\Sigma)^{(\C^D)^{\otimes t}} \otimes (\Sigma)^{(\C^{dd'})^{\otimes t}})
((\theta_1)_{\sigma} \otimes \alpha_2),
} \\
&   &
~~~~~
(\Sigma^{(\C^D)^{\otimes t}} \otimes \Sigma^{(\C^{dd'})^{\otimes t}})
(
(
\ddcG^{\otimes t} \circ \dU_l^{\otimes t} 
\circ \cdots \circ 
\ddcG^{\otimes t} \circ
\dU_1^{\otimes t} 
)
((\beta_1)_\sigma \otimes \alpha'_2)
) \rangle
| \\
& \leq &
\sum_{\sigma} |c_{\sigma}| |b_{\sigma}|
(\lambda_1^l + 3 \epsilon) \\
& \leq &
(\lambda_1^l + 3 \epsilon) \elltwo{c} \elltwo{b}  \\
& \leq &
2 (\lambda_1^l + 3 \epsilon)
\elltwo{(\gamma'_i)^{\|'}} \elltwo{(\delta'_{k-j})^{\|'}}
\end{eqnarray*}

We now let $(U_l, \ldots, U_1)$ range over unitaries from the sequence 
of qTPEs
$(\cH_{i+l}, \ldots, \cH_{i+1})$ which is assumed to be
$\frac{\epsilon}{8 t d^k}$-good
by Lemma~\ref{lem:epsgood}. This gives us
\begin{eqnarray*}
\lefteqn{
|
\inprod{(\delta'_{k-j})^{\|}}{
(\ddcG^{\otimes t} \circ 
(\I^{\C^{D^t \times D^t}} \otimes \cH_{j-1}) 
\circ \cdots \circ 
\ddcG^{\otimes t} \circ
(\I^{\C^{D^t \times D^t}} \otimes \cH_{i+1})
)
((\gamma'_i)^{\|'})
}
|
} \\
& \leq &
8 \epsilon^2 
\elltwo{(\gamma'_i)^{\|'}} \elltwo{(\delta'_{k-j})^{\|'}} +
2 (\lambda_1^{j-i-1} + 3 \epsilon)
\elltwo{(\gamma'_i)^{\|'}} \elltwo{(\delta'_{k-j})^{\|'}}.
\end{eqnarray*}
We can now finally bound the second term by
\begin{eqnarray*}
\lefteqn{
\sum_{0 \leq i < i+1 < j \leq k}
|
\inprod{(\delta'_{k-j})^{\|}}{
(\ddcG^{\otimes t} \circ 
(\I^{\C^{D^t \times D^t}} \otimes \cH_{j-1}) 
\circ \cdots \circ 
\ddcG^{\otimes t} \circ
(\I^{\C^{D^t \times D^t}} \otimes \cH_{i+1})
)
((\gamma'_i)^{\|'})
}
|
} \\
& \leq &
8 \epsilon^2 
\sum_{0 \leq i < i+1 < j \leq k}
\elltwo{(\gamma'_i)^{\|'}} \elltwo{(\delta'_{k-j})^{\|'}} \\
&     &
{} +
\sum_{0 \leq i < i+1 < j \leq k}
2 (\lambda_1^{j-i-1} + 3 \epsilon)
\elltwo{(\gamma'_i)^{\|'}} \elltwo{(\delta'_{k-j})^{\|'}} \\
& \leq &
(8 \epsilon^2 + 2(\lambda_1 + 3 \epsilon))
\sum_{0 \leq i < i+1 < j \leq k}
\elltwo{(\gamma'_i)^{\|'}} \elltwo{(\delta'_{k-j})^{\|'}} \\
& \leq &
(2 \lambda_1 + 14 \epsilon)
\sum_{0 \leq i < i+1 < j \leq k}
\elltwo{\gamma'_i} \elltwo{\delta'_{k-j}} \\
& \leq &
2 (\lambda_1 + 7 \epsilon)
\sum_{0 \leq i \leq k-2} \elltwo{\gamma'_i} 
\left(\sum_{j=i+2}^k \lambda_2^{k-j}\right) \\
& \leq &
4 (\lambda_1 + 7 \epsilon)
\sum_{0 \leq i \leq k-2} \elltwo{\gamma'_i} 
\;\leq\;
4 (\lambda_1 + 7 \epsilon)
\sum_{0 \leq i \leq k-2} \lambda_2^i 
\;\leq\;
8 (\lambda_1 + 7 \epsilon).
\end{eqnarray*}

Putting everything together, we have finally shown that
\begin{eqnarray*}
|\inprod{\delta_0}{(\cG \zigzag \vcH)(\gamma_0)}| 
& \leq &
\lambda_2^k + 8(\lambda_1 + 7\epsilon) + 
\lambda_2^{k-1} + 8 \sqrt{\frac{t(t-1)}{dd'}} +
15 \sqrt[4]{\frac{t(t-1)}{dd'}} +
24 \sqrt[4]{\frac{t(t-1)}{dd'}} \\
& \leq &
8(\lambda_1 + 7\epsilon) + \lambda_2^{k-1} + \lambda_2^k + 
47 \sqrt[4]{\frac{t(t-1)}{dd'}}.
\end{eqnarray*}
We have thus shown the following theorem.
\begin{theorem}
Let $s \geq 4$, $d, d' \geq 100$ be integers. Let $k \leq \log s$
be an integer. Let $t$ be an integer such that 
$D \geq dd' \geq 10 t^2$. Let $0 < \epsilon < 10^{-2}$.
Let $\cG = \{V_i\}_{i=1}^d$ be a
$(D, d, \lambda_1, 1)$-qTPE. 
Let $\cH_j = \{U_i(j)\}_{i=1}^s$, $1 \leq j \leq k$ be 
$(dd', s, \lambda_2, t)$-qTPEs such that the sequence
$\vcH := (\cH_k, \ldots, \cH_1)$ is $\frac{\epsilon}{8 t d^k}$-good.
Then $\cG \zigzag \vcH$ is a
$(Ddd', s^k, \lambda, t)$-qTPE where 
\[
\lambda := 
8(\lambda_1 + 7\epsilon) + \lambda_2^{k-1} + \lambda_2^k + 
47 \sqrt[4]{\frac{t(t-1)}{dd'}}.
\]
Moreover for $d' \geq 30 \log s (\log s + \log d) d^{2k+1} \epsilon^{-2}$,
such a sequence $\vcH$ exists with $\lambda_2 < 8 s^{-1/2}$.
\end{theorem}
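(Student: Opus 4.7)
Fix arbitrary matrices $\gamma_0, \delta_0 \in W^\perp$ with $\elltwo{\gamma_0} = \elltwo{\delta_0} = 1$; it suffices to bound $|\inprod{\delta_0}{(\cG \zigzag \vcH)(\gamma_0)}|$. The first move is the ``back and forth'' trick inherited from the single-step zigzag theorem: using Lemma~\ref{lem:subspaces} replace $\gamma_0, \delta_0$ by $\gamma'_0, \delta'_0 \in (W')^\perp$ at an additive cost $O(\sqrt[4]{t(t-1)/(dd')})$, because in the primed picture we get the exact orthogonality $\inprod{(\alpha'_2)_{\sigma'}}{(\alpha_2)_\sigma} = 0$ for $\sigma \neq \sigma'$, which is what eventually forces $(\hat\beta_1)_\sigma \perp \alpha_1$ below.

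Next, I would build the two sequences $\gamma'_i, \delta'_i$ as already defined in the text by alternately applying $(\I \otimes \cH_*)$ / $\ddcG^{\otimes t}$ and re-projecting into $(W')^\perp$, and unfold the telescoping expansion
\[
\inprod{\delta_0}{(\cG \zigzag \vcH)(\gamma_0)}
= T_1 + T_2 + T_3 + \Sigma_e + \Sigma_{ghm},
\]
where $T_1 := \inprod{\delta_0}{(\I \otimes \cH_k)((\gamma'_{k-1})^\perp)}$, $T_2$ collects the off-diagonal $\|$-$\|$ cross terms $\inprod{(\delta'_{k-j})^\|}{(\ddcG^{\otimes t} \circ \cdots \circ (\I \otimes \cH_{i+1}))((\gamma'_i)^{\|'})}$ with $j > i+1$, $T_3$ collects the $j = i+1$ diagonal terms $\inprod{(\delta'_{k-i-1})^\|}{(\gamma'_i)^\|}$, and $\Sigma_e, \Sigma_{ghm}$ bundle the linearisation errors $e_i, f_i, d_i, l_i, g_{ji}, h_{ji}, m_{ji}$ generated every time we switch between primed and unprimed subspaces. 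The easy pieces are then: $T_1 \le \lambda_2^k$ via the $\lambda_2$-bound on $\cH_k$ combined with the geometric decay $\elltwo{\gamma'_i} \le \lambda_2^i$; $T_3 \le \lambda_2^{k-1}(1+o(1))$ via Cauchy--Schwarz and the inductive inequality $\sum_i \lambda_2^{-2i}\elltwo{(\xi'_i)^{\|'}}^2 \le 1$ exactly as in \cite[Lemma~16]{BT}; and $|\Sigma_e| + |\Sigma_{ghm}| = O(\sqrt[4]{t(t-1)/(dd')})$ by estimating each individual piece from Lemma~\ref{lem:subspaces} and summing against the geometric bound $\elltwo{\gamma_i}, \elltwo{\delta_i} \le \lambda_2^i$.

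The main obstacle, and the only genuinely new step, is bounding $T_2$. Here I would expand $(\gamma'_i)^{\|'}$ and $(\delta'_{k-j})^{\|'}$ in the Schur--Weyl basis $\{\alpha'_\sigma\}$, write $(\beta_1)_\sigma = \Sigma^{(\C^D)^{\otimes t}}(\hat\beta_1)_\sigma$, and sum over pairs $(\sigma,\sigma')$. For each fixed choice of Haar-random unitaries in $\cH_{i+1}, \ldots, \cH_{j-1}$ realising an $\frac{\epsilon}{8td^k}$-good sequence (guaranteed by Lemma~\ref{lem:epsgood}), the off-diagonal $\sigma \neq \sigma'$ inner products decay like $2(dd')^{-t/2}(\epsilon/t)^{t - f_{(\sigma')^{-1}\sigma}}$: the measurement outcomes arising from $\dU_l^{\otimes t}$ are almost uniform and almost orthogonal, and whenever a coordinate is not fixed by $(\sigma')^{-1}\sigma$ it forces a disagreement contributing a factor $\epsilon/t$. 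Summing over $\sigma \neq \sigma'$ and invoking Lemma~\ref{lem:fixedpoints} on the matrix $N$ bounds the off-diagonal total by $8\epsilon^2 \elltwo{(\gamma'_i)^{\|'}}\elltwo{(\delta'_{k-j})^{\|'}}$.

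For the diagonal $\sigma = \sigma'$ pieces of $T_2$, which is the only place where the expander $\cG$ is really used, the cross-term collapses to $(dd')^{-t/2} |\inprod{(\theta_1)_\sigma}{(\cG^{\otimes t})^{j-i-1}((\hat\beta_1)_\sigma)}|$ up to an $O(\epsilon)$ error from almost-uniformity. Since $(\gamma'_i)^{\|'} \in (W')^\perp$ forces $\inprod{(\hat\beta_1)_\sigma}{\alpha_1} = 0$, and $\cG^{\otimes t}$ is itself a $1$-qTPE in dimension $D^t$ by Fact~\ref{fact:tensor} (hence its iterates are too by Fact~\ref{fact:squaring}), the diagonal contribution is at most $\lambda_1^{j-i-1} + O(\epsilon)$ per $\sigma$. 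Summing both contributions over the triple $(i,j)$ against $\elltwo{\gamma'_i}\elltwo{\delta'_{k-j}} \le \lambda_2^{i + (k-j)}$ and telescoping gives $|T_2| \le 8(\lambda_1 + 7\epsilon)$. Assembling $T_1, T_2, T_3, \Sigma_e, \Sigma_{ghm}$ with the initial closeness loss produces the stated bound on $\lambda$, and the ``moreover'' clause is just Lemma~\ref{lem:epsgood} applied with the stated value of $d'$.
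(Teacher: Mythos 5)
Your proposal is correct and follows essentially the same route as the paper: the same back-and-forth between $W^\perp$ and $(W')^\perp$, the same telescoping decomposition into the $\cH_k$ tail term, the diagonal $\|$-$\|$ terms handled as in \cite[Lemma~16]{BT}, the accumulated switching errors, and the same treatment of the cross terms via the $\epsilon$-good property (Lemma~\ref{lem:fixedpoints} for $\sigma \neq \sigma'$, the $1$-qTPE property of $\cG^{\otimes t}$ with $\inprod{(\hat{\beta}_1)_\sigma}{\alpha_1}=0$ for $\sigma = \sigma'$). All the key quantitative bounds you cite ($\lambda_2^k$, $\lambda_2^{k-1}(1+o(1))$, $8(\lambda_1+7\epsilon)$, and the $O(\sqrt[4]{t(t-1)/(dd')})$ error budget) match the paper's.
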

\paragraph{Remarks:} \ \\

\noindent
1.\ For $t = 1$, we get the bound
$\lambda = 8(\lambda_1 + 7\epsilon) + \lambda_2^{k-1} + \lambda_2^k$
which is the same as the bound in \cite{BT} except for the 
constants involving the $\lambda_1$ term. However, this does not
affect the parameters of the iterative construction of 
almost Ramanujan expanders given in that paper. We thus get an
infinite family of almost Ramanujan quantum expanders i.e.
$(D^n, d, \lambda, 1)$-qTPEs, $n \geq 1$ where $d = 2 s^{\log s}$,
$s \geq 4$ is an even integer, $D = (D_0)^{3 \log s}$, $D_0$
suficiently large integer, and
$\lambda = d^{-\frac{1}{2} + O(\frac{1}{\sqrt{\log d}})}$.

\smallskip

\noindent
2.\ For integer $t$ satisfying $10 t^2 \leq d$,
we thus get an infinite family of almost Ramanujan qTPEs i.e.
$(D^n, d, \lambda_1, t)$-qTPEs where the constraints on the parameters
are the same as in above remark.

\section{Conclusion}
In this paper, we have shown that the famous zigzag product first
defined for classical expander graphs by Reingold, Vadhan and
Wigderson~\cite{RVW} is amazingly powerful: it generalises to
quantum tensor product expanders, and furthermore, it can be
refined via the ideas of Ben-Aroya and Ta-Shma~\cite{BT} to give
efficient constructions of almost Ramanujan $t$-qTPEs for
$t$ polynomial in the number of qubits. This leads to efficient
constructions for unitary $t$-designs for 
$t$ polynomial in the number of qubits. The only efficient construction 
known earlier for such large $t$ was the local random circuit 
construction of Brand\~{a}o, Harrow and Horodecki~\cite{BHH}. 
For both zigzag and generalised zigzag products, our construction has
the advantage of much better tradeoff between the degree and the singular
value gap than what was proved by Brand\~{a}o, Harrow and Horodecki.
For the generalised zigzag product, our tradeoff is almost optimal
by virtue of being almost Ramanujan.
Achieving efficient constructions of perfectly Ramanujan qTPEs remains
an open problem, even for $t=1$. 

Strangely, the zigzag product construction does not seem to work 
for classical tensor product expanders. Finding an efficient combinatorial 
construction of $t$-cTPEs for $t > 1$ is an imporant open problem.
The only efficient constructions known for $t > 1$ are algebraic, 
involving Cayley graphs on the symmetric group~\cite{K}.

\bibliography{expander}

\begin{thebibliography}{ADHW09}

\bibitem[AD17]{AD}
{Arratia, R.} and {{DeSalvo}, S.}
\newblock Completely effective error bounds for {Stirling} numbers of the first
  and second kinds via {Poisson} approximation.
\newblock {\em Annals of Combinatorics}, 21(1):1--24, 2017.

\bibitem[ADHW09]{ADHW}
{Abeyesinghe, A.}, {Devetak, I.}, {Hayden, P.}, and {Winter, A.}
\newblock The mother of all protocols: restructuring quantum information’s
  family tree.
\newblock {\em Proceedings of the Royal Society A}, 465(2108):2537--2563, 2009.

\bibitem[AE07]{AE}
{Ambainis, A.} and {Emerson, J.}
\newblock Quantum $t$-designs: $t$-wise independence in the quantum world.
\newblock In {\em Proceedings of IEEE Conference on Computational Complexity
  (CCC)}, pages 129--140, 2007.

\bibitem[AS04]{AS}
{Ambainis, A.} and {Smith, A.}
\newblock Small pseudo-random families of matrices: {Derandomizing} approximate
  quantum encryption.
\newblock In {\em Proceedings of 8th APPROX-RANDOM}, pages 249--260, 2004.
\newblock LNCS, Springer-Verlag.

\bibitem[BHH16]{BHH}
{Brand\~{a}o, F.}, {Harrow, A.}, and {Horodecki, M.}
\newblock Local random quantum circuits are approximate polynomial-designs.
\newblock {\em Communications in Mathematical Physics}, 346(2):397--434, 2016.

\bibitem[BR94]{BR}
M.~Bellare and J.~Rompel.
\newblock Randomness-efficient oblivious sampling.
\newblock In {\em Proceedings of the 35th Annual IEEE Symp. on Foundations of
  Comp. Sc. (FOCS)}, pages 276--287, 1994.

\bibitem[BST10]{BST}
{{Ben-Aroya}, A.}, {Schwartz, O.}, and {{Ta-Shma}, A.}
\newblock Quantum expanders: Motivation and construction.
\newblock {\em Theory of Computing}, 6:47--79, 2010.

\bibitem[BT11]{BT}
{{Ben-Aroya}, A.} and {{Ta-Shma}, A.}
\newblock A combinatorial construction of almost-{Ramanujan} graphs using the
  zig-zag product.
\newblock {\em SIAM Journal on Computing}, 40(2):267--290, 2011.

\bibitem[DCEL09]{DCEL}
{Dankert, C.}, {Cleve, R.}, {Emerson, J.}, and {Livine, E.}
\newblock Exact and approximate unitary 2-designs and their application to
  fidelity estimation.
\newblock {\em Physical Review A}, 80:012304, 2009.

\bibitem[GW98]{GW}
{Goodman, R.} and {Wallach, N.}
\newblock {\em Representations and invariants of the classical groups}.
\newblock Cambridge University Press, 1998.

\bibitem[Has07]{H}
M.~Hastings.
\newblock Large deviation bounds for $k$-designs.
\newblock {\em Physical Review B}, 76(3):035114, 2007.

\bibitem[HH09]{HH}
{Hastings, M.} and {Harrow, A.}
\newblock Classical and quantum tensor product expanders.
\newblock {\em Quantum Information and Computation}, 9(3):336--360, 2009.

\bibitem[HL09]{HL}
{Harrow, A.} and {Low, R.}
\newblock Efficient quantum tensor product expanders and $k$-designs.
\newblock In {\em Proceedings of APPROX-RANDOM}, pages 548--561, 2009.
\newblock vol. 5687, LNCS, Springer-Verlag.

\bibitem[HLW06]{HLW}
{Hoory, S.}, {Linial, N.}, and {Wigderson, A.}
\newblock Expander graphs and their applications.
\newblock {\em Bulletin of the AMS}, 43(4):439--561, 2006.

\bibitem[{Kas}07]{K}
{Kassabov, M.}
\newblock Symmetric groups and expanders.
\newblock {\em Inventiones Mathematicae}, 170(2):327--–354, 2007.

\bibitem[Low09]{L}
R.~Low.
\newblock Large deviation bounds for $k$-designs.
\newblock {\em Proceedings of the Royal Society A}, 465:3289--3308, 2009.

\bibitem[RVW02]{RVW}
{Reingold, O.}, {Vadhan, S.}, and {Wigderson, A.}
\newblock Entropy waves, the zig-zag graph product, and new constant-degree
  expanders.
\newblock {\em Annals of Mathematics}, 155(1):157--187, 2002.

\bibitem[SDTR13]{SDTR}
{Szehr, O.}, {Dupuis, F.}, {Tomamichel, M.}, and {Renner, R.}
\newblock Decoupling with unitary approximate two-designs.
\newblock {\em New Journal of Physics}, 15:053022, 2013.

\bibitem[{Sen}18]{S}
{Sen, P.}
\newblock A quantum {Johnson-Lindenstrauss} lemma via unitary t-designs.
\newblock Available at arXiv.org:1807.08779, 2018.

\end{thebibliography}

\end{document}